\newtheorem{theorem}{Theorem}
\newtheorem{proposition}{Proposition}
\newtheorem{lemma}{Lemma}
\newtheorem{definition}{Definition}
\newtheorem{assumption}{Assumption}
\newtheorem{remark}{Remark}
\newenvironment{proof}{{\bfseries\itshape Proof}.\ }{\hfill $\blacksquare$\par} 
\newcommand{\refappendix}[1]{\hyperref[#1]{Appendix~\ref*{#1}}}
\def\BibTeX{{\rm B\kern-.05em{\sc i\kern-.025em b}\kern-.08em
    T\kern-.1667em\lower.7ex\hbox{E}\kern-.125emX}}
\begin{document}

\title{A Game-Theoretic Framework for Privacy-Aware Client Sampling in Federated Learning}

\author{Wenhao~Yuan\orcidlink{0009-0001-6625-7496},~\IEEEmembership{Student Member, IEEE}, Xuehe~Wang\orcidlink{0000-0002-6910-468X},~\IEEEmembership{Senior Member, IEEE}

\IEEEcompsocitemizethanks{\IEEEcompsocthanksitem Wenhao Yuan and Xuehe Wang are with the School of Artificial Intelligence, Sun Yat-sen University, Zhuhai 519082, China. Email Address: \href{mailto:yuanwh7@mail2.sysu.edu.cn}{yuanwh7@mail2.sysu.edu.cn}, \href{mailto:wangxuehe@mail.sysu.edu.cn}{wangxuehe@mail.sysu.edu.cn}.
}
\thanks{This work is supported by the National Natural Science Foundation of China (NSFC) under Grant No. 62206320 and Guangdong Basic and Applied Basic Research Foundation under Grant No. 2024A1515010118. (Corresponding author: Xuehe Wang.)}
}


\maketitle

\begin{abstract}
In federated learning (FL) systems, the central server typically samples a subset of participating clients at each global iteration for model training. To mitigate privacy leakage, clients may insert noise into local parameters before uploading them for global aggregation, leading to FL model performance degradation. This paper aims to design a \underline{P}rivacy-aware \underline{C}lient \underline{S}ampling framework in \underline{FED}erated learning, named FedPCS, to tackle the heterogeneous client sampling issues and improve model performance. First, we obtain a pioneering upper bound for the accuracy loss of the FL model with privacy-aware client sampling probabilities. Based on this, we model the interactions between the central server and participating clients as a two-stage Stackelberg game. In Stage \uppercase\expandafter{\romannumeral1}, the central server designs the optimal time-dependent reward for cost minimization by considering the trade-off between the accuracy loss of the FL model and the rewards allocated. In Stage \uppercase\expandafter{\romannumeral2}, each client determines the correction factor that dynamically adjusts its privacy budget based on the reward allocated to maximize its utility. To surmount the obstacle of approximating other clients’ private information, we introduce the mean-field estimator to estimate the average privacy budget. We analytically demonstrate the existence and convergence of the fixed point for the mean-field estimator and derive the Stackelberg Nash Equilibrium to obtain the optimal strategy profile. Through rigorously theoretical convergence analysis, we guarantee the robustness of our proposed FedPCS. Moreover, considering the conventional sampling strategy in privacy-preserving federated learning, we prove that the random sampling approach's price of anarchy (PoA) can be arbitrarily large. To remedy such efficiency loss, we show that the proposed privacy-aware client sampling strategy successfully reduces PoA, which is upper bounded by a reachable constant. To address the challenge of varying privacy requirements throughout different training phases in FL, we extend our model and analysis and derive the adaptive optimal sampling ratio for the central server. Experimental results on different datasets demonstrate the superiority of FedPCS compared with the existing state-of-the-art FL strategies under IID and Non-IID datasets.
\end{abstract}

\begin{IEEEkeywords}
Federated Learning, Differential Privacy, Client Sampling, Incentive Mechanism Design, Stackelberg Game
\end{IEEEkeywords}

\section{Introduction} 
With the burgeoning development of machine learning (ML) techniques, the Internet of Things (IoT), and their networking applications, the exponential growth of data generated at edge devices presents vast potential for Artificial Intelligence (AI) and facilitates the applications in various fields \cite{hu2022incentive, wang2024novel}, such as facial recognition \cite{zhang2024validating}, disease diagnosis \cite{chen2024disentangle} and embodied intelligence \cite{long2023human}. However, the performance of the conventional centralized machine learning (CML) paradigm is restricted by the limited wireless communication bandwidth for data transmission \cite{luo2024adaptive}. Moreover, the heightened awareness of privacy protection and the enactment of stringent data privacy regulations such as GDPR and CCPA \cite{pardau2018california} poses a formidable challenge for distributed devices in accessing and sharing data. 

In tackling the challenges and safeguarding the confidential information of participants in ML systems, \emph{Federated Learning} (FL) \cite{mcmahan2017communication}, served as an emergent and preeminent distributed machine learning (DML) paradigm for secure model training, has been proposed, which enables geographically dispersed clients to collaboratively refine a shared model orchestrated by a central server while keeping private data localized \cite{luo2022tackling}, achieving multi-fold performance benefits including privacy preservation and communication efficiency \cite{he2023three}. 

To reduce the huge communication burden and coordinate geographically distributed mobile devices, a subset of participating clients are typically selected for FL model training with the uniform client sampling \cite{wang2024delta, luo2024adaptive}. Recent research has demonstrated the effectiveness of the partial client participation approach in FL with a solid theoretical convergence foundation \cite{bian2024accelerating}. In addition, given the diverse nature of heterogeneous data (typically amassed from distributed devices operating under varying mobile scenarios), there exists an inherent risk associated with the encompassed confidential information, such as personal health information and financial records, thereby elevating the potential for substantial privacy breaches \cite{kang2023incentive}. To address the significant privacy leakage associated with the gradient propagation scheme in FL systems \cite{zhu2019deep}, $\rho$-zero-concentrated differential privacy ($\rho$-$z$CDP) mechanism \cite{bun2016concentrated} has been widely adopted in recent literature \cite{hu2023shield, shi2024efficient}, which mitigates private information leakage by introducing Gaussian distributed-based noise into local or global model parameters.

\begin{figure*}[t]
\centerline{\includegraphics[width=0.85\textwidth, trim=0 0 0 0,clip]{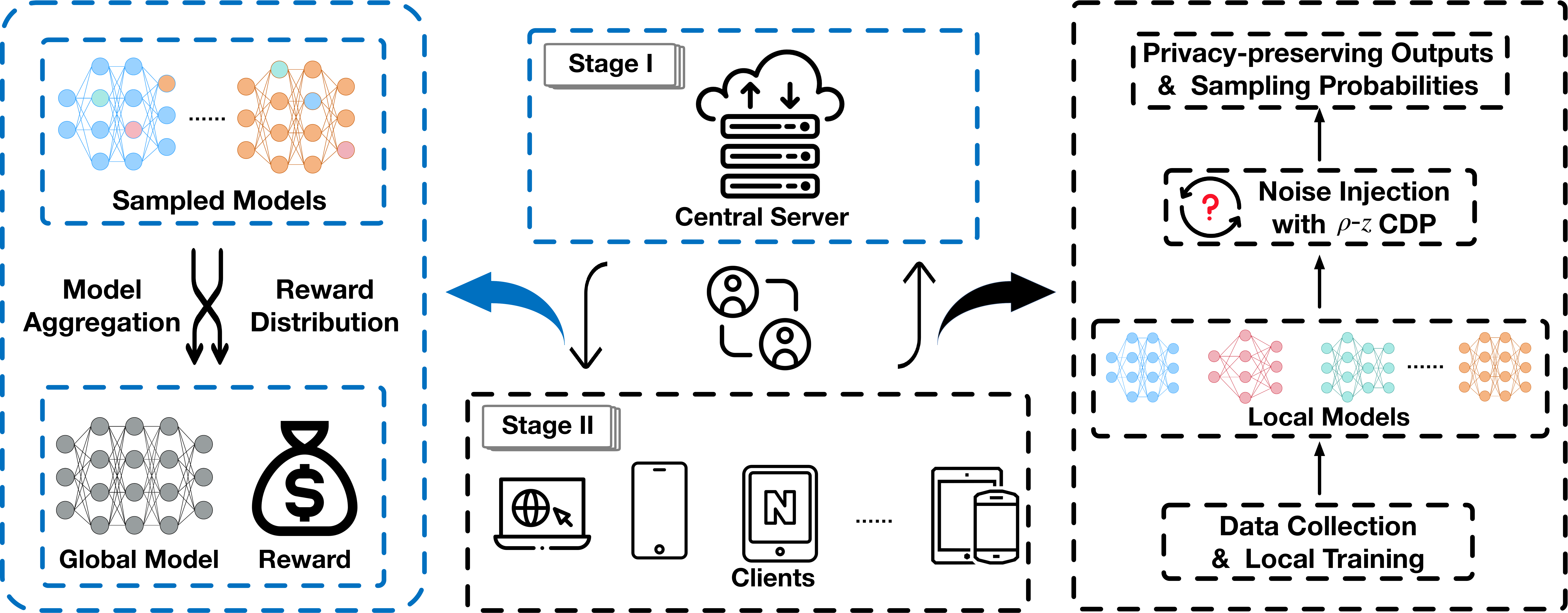}}
\caption{The framework of FedPCS: A $\rho$-$z$CDP technique-based incentive mechanism architecture in FL, where the blue dashed outline indicates the model aggregation and reward allocation operated by the central server, and the black dashed outline indicates the local training, model disturbance, and upload~process.}
\label{framework}
\vspace{-14pt}
\end{figure*}

However, uniformly random sampling of participating clients results in significant model bias during model aggregation. The randomly selected unrepresentative subsets may exacerbate variance introduced by client sampling, thereby slowing down the convergence rate and degrading model performance, which is particularly problematic in the general presence of system heterogeneity (e.g., artificial noise for privacy preservation) and statistical heterogeneity (i.e., Non-IID datasets) \cite{wang2024delta, liao2024optimal}. Based on the $\rho$-$z$CDP mechanism, the noise determined by heterogeneous and selfish clients further compromises uniform sampling due to the divergent noise magnitudes (according to the personalized privacy budgets). Consequently, there is a substantial probability of including clients with high noise magnitudes in each iteration, resulting in suboptimal model performance. Moreover, a prevailing and idealized assumption in most existing literature is that heterogeneous clients will participate in the FL systems unconditionally once invited by the central server. This assumption neglects significant factors such as computational consumption, communication resource burden, and the risk of severe information leakage during FL model training \cite{wu2023incentive}. Accordingly, without a well-designed economic incentivization, egocentric clients may be reluctant to contribute their local computation results (or computing facilities), given the potential privacy risks and significant training costs~\cite{hu2022incentive}.

In a nutshell, the limitations of existing research stem from \emph{the failure to jointly consider the impact of inherent system heterogeneity on client sampling and the design of effective incentive mechanisms within the context of privacy-preserving FL}. Specifically, clients should strategically modulate their privacy budgets to enhance their sampling probability and maximize long-term utilities. This insight drives us to investigate the following key questions.

\textbf{Key Question 1:} \emph{How to design an effective client sampling strategy for the central server to tackle system heterogeneity and ensure optimal model performance?}

\textbf{Key Question 2:} \emph{How to incentivize heterogeneous clients with various privacy budgets to participate in FL training under dynamic privacy budget constraints?}

\textbf{Key Question 3:} \emph{How should the client decide the optimal strategy to dynamically adjust its privacy budget over time in the context of unknown strategies employed by other clients?}

Several challenges arise when addressing the above questions: (1) It is difficult to analytically quantify the convergence performance of privacy-preserving FL models based on the client sampling probabilities. (2) Modeling the interaction between the central server and clients is complicated and challenging within an uncertain environment. (3) The closed-form of the optimal strategy profile of both sides is obstructive to obtain due to the incomplete information and the dynamic constraints of the decision variable.

In light of the discussions above, we proposed a novel framework FedPCS for the incentive mechanism design in FL with the privacy-aware client sampling probabilities, as shown in Fig. \ref{framework}. The multi-fold contributions of our work are summarized as follows:
\begin{itemize} 
\item \emph{A Game-Theoretic Formulation of the Privacy-aware Client Sampling:} To the best of our knowledge, we are the first to jointly consider the time-depend client sampling and reward design in privacy-preserving FL models. We propose a novel framework, named FedPCS, which constructs the interaction between the central server and clients as a two-stage Stackelberg game. Through dynamically adjusting the subset of sampled clients over time, we theoretically obtain the central server's optimal reward and client sampling strategy in Stage~\uppercase\expandafter{\romannumeral1} and the optimal correction factor of each client that dynamically adjusts its privacy budget in Stage~\uppercase\expandafter{\romannumeral2}. Subsequently, we rigorously demonstrate that the optimal strategy profile constitutes a Stackelberg Nash Equilibrium.
\item \emph{Mean-Field Estimator for Decentralized Strategy Profile Design:} Given the incomplete information resulting from the lack of local information exchange among clients during local model training, which complicates the determination of optimal decentralized client strategies, we introduce a mean-field estimator to evaluate the average privacy budget. Leveraging this estimator, we successfully derive the approximate optimal decentralized correction factor in closed-form. Moreover, through theoretical analysis, the existence of the fixed point for the mean-field estimator is proved.

\item \emph{Efficiency Analysis of Clients' Strategies via PoA:} To analyze the efficiency improvement of clients' strategies under our proposed privacy-aware sampling compared to random sampling, we utilize the PoA to measure the ratio between the social welfare achieved by the socially optimal strategy and that attained by these two methods. Our theoretical analysis illustrates that the random sampling policy (commonly employed in FL systems) is significantly affected by egocentric clients, resulting in PoA approaching infinity as the lower boundary of privacy budgets closes to 0. On the contrary, our proposed privacy-aware sampling scheme incentivizes egocentric clients to determine their privacy budgets prudently and successfully bounding PoA within a reachable context.

\item \emph{Adaptive Optimal Client Sampling Ratio under Dynamic Privacy Constraints:} Besides conducting FL training with a constant client sampling ratio, we further extend our model and analysis to more realistic scenarios, in which the privacy preservation demands of the FL systems vary in different stages of model training. In this extended scenario, the central server adaptively determines the optimal sampling ratio and reward allocated according to the dynamic privacy constraints, which makes the strategy analysis more complicated. Nonetheless, we successfully derive the closed-form optimal sampling ratio and reward. In addition, we conduct experiments to demonstrate the superiority of FedPCS under dynamic constraints.

\item \emph{Convergence Analysis and Performance Assessment:} We conduct rigorous analyses of the convergence upper bound of FedPCS with artificial noise and partial participant involvement, thus establishing a robust theoretical foundation. Further, we comprehensively assess the performance of our proposed framework on different real-world datasets (Fashion-MNIST, CIFAR-10, SVHN, CIFAR-100, CINIC-10, and Tiny-ImageNet). Experimental results demonstrate that FedPCS outperforms the state-of-the-art benchmarks in terms of utilities and FL model performance.
\end{itemize}

The rest of this paper is organized as follows. We review the related work in Section~\ref{related_work} and introduce the system model in Section~\ref{system_model}. The optimal strategy profile and Stackelberg Nash Equilibrium are discussed in Sections~\ref{Stackelberg_Nash_Equilibrium_Analysis}. The rigorous theoretical convergence analysis for our proposed framework FedPCS is presented in Section~\ref{Convergence_Analysis}. In Section~\ref{poa_analysis}, we analytically compare the two strategies via PoA analysis. Subsequently, we extend the system model and analysis to the dynamic privacy constraint scenario in Section~\ref{adaptive_sampling_ratio}. Experimental results are conducted in Section~\ref{Experiments} and this paper is concluded in Section~\ref{Conclusion}.

\section{Related Work} \label{related_work}
In this section, under the privacy-preserving scenarios, we conduct a comprehensive review of recent literature focusing specifically on two aspects, i.e., client sampling strategy for FL and incentive mechanism design in FL.

\subsection{Client Sampling Strategy for FL}
Client sampling technique is pivotal in addressing the statistical and system heterogeneity challenges in cross-device FL \cite{luo2022tackling}. Jiang \emph{et al.} \cite{jiang2024dordis} propose an ‘add-then-remove’ scheme, named Dordis, which enforces a required noise level precisely and ensures the privacy budget prudently utilization considering unpredictable sampled clients drop out. Chen \emph{et al.} \cite{chen2024personalized} propose an adaptive client sampling strategy to accelerate the model convergence rate and improve model performance with privacy guarantees and data heterogeneity trade-offs. Zhang \emph{et al.} \cite{zhang2023fed} design a heterogeneity-aware client sampling mechanism, named Fed-CBS, that effectively reduces the class-imbalance problem of the grouped datasets. Wu \emph{et al.} \cite{wu2023anchor} propose FedAMD which separates clients as anchor and miner groups to conduct gradient calculation, thus accelerating the training process and improving model performance. He \emph{et al.} \cite{he2023gluefl} propose GlueFL that alleviates the effect of staleness in client sampling and minimizes downstream bandwidth in cross-device FL via sticky sampling and mask shifting techniques.

Our research differs from these earlier works \cite{jiang2024dordis, chen2024personalized, zhang2023fed, wu2023anchor, he2023gluefl} in two aspects. First, to precisely assess the model performance, we rigorously analyze the convergence properties of model accuracy rather than commonly utilized empirical metrics in \cite{xu2023personalized, wu2023incentive}. Moreover, we construct the interaction between the central server and clients as a more realistic model considering both the profit conflicts among participants and dynamic constraints in FL systems, which makes it more challenging to tackle the trade-off between the profit and training costs during the distributed on-device model training process.

\subsection{Incentive Mechanism Design in FL}
In the domain of FL systems, the design of incentive mechanisms, predicated on both complete and incomplete information frameworks, has been subject to extensive investigation. The literature primarily focuses on motivating participants, who are financially self-interested, to contribute high-quality data and local training results \cite{hu2022incentive}. Mao \emph{et al.} \cite{mao2024game} formulate clients’ privacy-preserving behaviors in cross-silo FL as a multi-stage privacy preservation game to trade-off between the convergence performance and privacy loss. Huang \emph{et al.} \cite{huang2024collaboration} consider the collaboration behaviors among clients and the central server with reward allocation under the heterogeneity of participating clients’ multidimensional attributes. Wang \emph{et al.} \cite{wang2023trade} propose a dynamic privacy pricing game that considers the multi-dimensional information asymmetry among clients to lower the scale of locally added DP noise for differentiated economic compensations, thus enhancing FL model utility. Huang \emph{et al.} \cite{huang2024imfl} propose a data quality-aware incentive mechanism to encourage clients’ participation in the context of artificial intelligence-generated content (AIGC) empowered FL.

Different from the earlier works \cite{mao2024game, huang2024collaboration, wang2023trade, huang2024imfl} with fixed client subset for incentive mechanism design, we focus on dynamic sampling on heterogeneous clients with time-varying personalized attributes. Without an efficient dynamic client sampling strategy, the performance of FL systems may significantly suffer from the high heterogeneity of clients with random noise. 

\section{System Model and Problem Formulation} \label{system_model}
In this section, we start by summarizing the preliminaries regarding the standard FL model and $\rho$-$z$CDP mechanism. Then, we introduce our proposed privacy-aware client sampling strategy in FL systems. Finally, we construct the optimization problem of the central server and clients respectively. The workflow of our proposed FL framework FedPCS is presented in Fig.~\ref{framework} and the key notations that are used throughout the paper are provided in Table \ref{tab1}.

\subsection{Standard Federated Learning Model}
Under the standard FL framework, we hypothesize that $N$ geographically distributed clients participate in FL, and each client $i \!\in\! \{1, 2, \ldots, N\}$ uses their private and locally generated data set $\mathcal{D}_{i}$ with datasize $|\mathcal{D}_{i}|$ to perform model training. At each global iteration $t \!\in\! \{0,1, \ldots, T\}$, client $i$ performs mini-batch stochastic gradient descent (SGD) training to update its local model parameter~parallelly:
\begin{align} \label{sgd}
\boldsymbol{w_{i}(t+1)}=\boldsymbol{w(t)}-\eta \nabla F_{i}(\boldsymbol{w(t)}), 
\end{align} 
where $\eta$ represents the learning rate and $\nabla F_{i}(\boldsymbol{w(t)})$ is $i$-th client's gradient at iteration $t$. Once $N$ clients upload the local parameters, the central server averages the local models to generate the updated global parameter by $\boldsymbol{w(t)} = \sum_{i=1}^{N} \theta_{i} \boldsymbol{w_{i}(t)}$ with the aggregation weight $\theta_{i} \!=\! \frac{|\mathcal{D}_{i}|}{\sum_{j=1}^{N} |\mathcal{D}_{j}|}$. After updating the global model, the central server dispatches the updated global parameter to all clients for the next iteration's training. The goal of the FL method is to find the optimal global parameter $\boldsymbol{w}^{*}$ to minimize the global loss function $F(\boldsymbol{w})$:
\begin{align} \label{optimal_local_parameter}
\boldsymbol{w}^{*}\!=\!\text{arg}\min _{\boldsymbol{w}} F(\boldsymbol{w})\!=\!\text{arg}\min _{\boldsymbol{w}}  \sum\nolimits_{i=1}^{N} \theta_{i} F_{i}(\boldsymbol{w}). 
\end{align}

\begin{table}[t]
\setlength{\abovecaptionskip}{0cm} 
\caption{Key Notations and Corresponding Meanings}
\begin{center}
\renewcommand\arraystretch{1.2}
\begin{tabular}{p{0.5in}p{2.65in}}
\toprule[0.8pt]
\makecell[l]{\textbf{Symbol}}&\makecell[c]{\textbf{Description}}\\
\midrule[0.8pt]
$N$, $T$ & number of total clients and global training iteration  \\

$\mathcal{D}_{i}$, $|\mathcal{D}_{i}|$ & private data on client $i$ and corresponding datasize   \\

$\theta_{i}$ & local model aggregation weight of client $i$   \\

$K$, $\tau$ & number of selected clients and sampling ratio \\

$x_{i}^{t}$ & the sampling probability of client $i$ at $t$-th global iteration   \\

$\boldsymbol{w_{i}(t)}$, $\!\boldsymbol{w(t)}\!\!\!\!\!$ & the local and global model parameter at $t$-th global iteration \\

$\rho_{i}^{t}$, $\sigma_{i}^{2}(t)$ & privacy budget and variance of client $i$ at $t$-th global iteration \\

$\rho_{L}$, $\rho_{H}$ & the lower and upper boundary of privacy budgets   \\

$\alpha_{i}^{t}$, $R_{t}$ & correction factor of client $i$ and reward at $t$-th global iteration \\

$\phi(t)$ & mean-field estimator at $t$-th global iteration \\

$\gamma$, $\varphi_{i}$ & weight parameters for central server and clients respectively \\

\bottomrule[0.8pt]
\end{tabular}
\label{tab1}
\end{center}
\vspace{-2pt}
\end{table}

\subsection{\texorpdfstring{$\rho$}{}-\texorpdfstring{$z$}{}CDP Mechanism}
To solve the gradient inverse attack \cite{zhu2019deep}, $\rho$-zero-concentrated differential privacy ($\rho$-$z$CDP) mechanism \cite{bun2016concentrated} is proposed for its tight composition bound and suitability for the end-to-end privacy loss analysis of iteration algorithms \cite{hu2020trading}. We define the privacy loss metric by $\mathcal{L}_{p}$. For a randomized~mechanism $ \mathcal{M}\! : \!\mathcal{X} \! \rightarrow \! \mathcal{E}$ with domain $ \mathcal{X}$ and range $ \mathcal{E}$ given two adjacent datasets $ \mathcal{D}$, $\mathcal{D}^{\prime} \! \subseteq \! \mathcal{X}$ with $\mathcal{D}$ and $\mathcal{D}^{\prime}$ sharing the same data size $|\mathcal{D}_{i}|$ but only differing by one sample, after observing an output $o \in \mathbb{R}$, the privacy loss is given as $ \mathcal{L}_{p}\!=\!\ln(\frac{Pr[\mathcal{M}(\mathcal{D})= o]}{Pr[\mathcal{M}(\mathcal{D}^{\prime})= o]})$. Then, the definition of $\rho$-$z$CDP is given as follows.
\begin{definition}\label{definition_dp}
A randomized mechanism $ \mathcal{M}\!:\! \mathcal{X} \! \rightarrow \! \mathcal{E}$ with domain $ \mathcal{X}$ and range $ \mathcal{E}$ satisfies $\rho$-$z$CDP mechanism, if for any coefficient $\alpha \! \in \! (1, \infty)$, $\mathbb{E}[e^{(\alpha-1)\mathcal{L}_{p}}] \leq e^{(\alpha-1)(\rho\alpha)}$ holds. 
\end{definition}

Given any query function $ Q:\mathcal{X} \!\rightarrow\! \mathcal{E}$, the sensitivity of the query function is defined as $ \Delta{Q} \!=\! \max_{\mathcal{D},\mathcal{D}^{\prime}}\|Q(\mathcal{D})\!-\!Q(\mathcal{D}^{\prime}) \|_{2}$ for any two adjacent datasets $\mathcal{D}$, $\mathcal{D}^{\prime} \! \subseteq \! \mathcal{X}$, where the adjacent datasets share the same data size and only differ in the $j$-th data sample. Subsequently, at $t$-th global iteration, by adding the artificial Gaussian noise~$\boldsymbol{n_{i}(t)}  \! \sim \! \mathcal{N}(0, \sigma_{i}^{2}(t))$, the transmitted parameter of client $i$ becomes:
\begin{align} \label{local_gradient_perturbation}
\nabla \widetilde{F}_{i}(\boldsymbol{w(t)}) = \nabla F_{i}(\boldsymbol{w(t)}) + \boldsymbol{n_{i}(t)},
\end{align}
where $\rho$-$z$CDP mechanism is satisfied with the privacy budget $\rho_{i}^{t}$ of each participating client $i$ at $t$-th global iteration following $\rho_{i}^{t} \!=\! \frac{\Delta{Q}^{2}}{2\sigma_{i}^{2}(t)}$, as demonstrated in \cite{dwork2014algorithmic}. Suppose that there exists a clipping threshold $W$ for the $i$-th client's local model parameters at $t$-th global training iteration with adding artificial perturbation ($\|\boldsymbol{w_{i}(t)}\| \!\leq\! W$) as in \cite{mcmahan2017communication}. Consequently, based on the definition of the query function $Q$, we can easily derive the upper bound of the sensitivity $\Delta{Q}$ and quantify the variance of the artificial Gaussian noise $\sigma_{i}^{2}(t)$ as follows.
\begin{proposition} \label{proposition_variance}
At $t$-th global iteration, by leveraging the Gaussian distribution-based $\rho$-$z$CDP technique to perturb the transmitted local parameter, the variance of the Gaussian random noise $\sigma_{i}^{2}(t)$ of $i$-th client is derived as $\sigma_{i}^{2}(t) \!=\! \frac{2 W^{2}}{\rho_{i}^{t} |\mathcal{D}_{i}|^{2}}$.
\end{proposition}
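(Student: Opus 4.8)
The plan is to obtain the required variance directly from the $\rho$-$z$CDP calibration relation $\rho_{i}^{t} = \frac{\Delta{Q}^{2}}{2\sigma_{i}^{2}(t)}$ stated just above the proposition, so the entire task reduces to pinning down the sensitivity $\Delta{Q}$ of the query function that generates the transmitted local parameter. First I would identify this query: since each client forms its local update by averaging per-sample contributions over its private set $\mathcal{D}_{i}$, I would write $Q(\mathcal{D}_{i}) = \frac{1}{|\mathcal{D}_{i}|}\sum_{j=1}^{|\mathcal{D}_{i}|} \boldsymbol{q}_{j}$, where $\boldsymbol{q}_{j}$ denotes the contribution of the $j$-th sample and the clipping operation enforces the norm bound $\|\boldsymbol{q}_{j}\| \leq W$ consistent with the stated constraint $\|\boldsymbol{w_{i}(t)}\| \leq W$.

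Next I would bound the sensitivity. For two adjacent datasets $\mathcal{D}_{i}$ and $\mathcal{D}_{i}^{\prime}$ sharing the same size $|\mathcal{D}_{i}|$ but differing only in the $k$-th sample, every other term cancels in the difference, so the triangle inequality together with the clipping bound yields
\[
\|Q(\mathcal{D}_{i}) - Q(\mathcal{D}_{i}^{\prime})\|_{2} = \frac{1}{|\mathcal{D}_{i}|}\|\boldsymbol{q}_{k} - \boldsymbol{q}_{k}^{\prime}\|_{2} \leq \frac{1}{|\mathcal{D}_{i}|}\left(\|\boldsymbol{q}_{k}\|_{2} + \|\boldsymbol{q}_{k}^{\prime}\|_{2}\right) \leq \frac{2W}{|\mathcal{D}_{i}|}.
\]
Since this bound holds uniformly over all admissible adjacent pairs and is attained in the worst case, I would conclude $\Delta{Q} = \frac{2W}{|\mathcal{D}_{i}|}$.

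Finally, substituting this sensitivity into the calibration relation and solving for the variance gives $\sigma_{i}^{2}(t) = \frac{\Delta{Q}^{2}}{2\rho_{i}^{t}} = \frac{4W^{2}/|\mathcal{D}_{i}|^{2}}{2\rho_{i}^{t}} = \frac{2W^{2}}{\rho_{i}^{t}|\mathcal{D}_{i}|^{2}}$, which is exactly the claimed expression.

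I expect the main obstacle to be the first step, namely correctly characterizing the query function and justifying that the clipping constraint on the transmitted parameter translates into a per-sample contribution bounded by $W$ in norm. The averaging structure, specifically the $1/|\mathcal{D}_{i}|$ factor, is precisely what produces the $|\mathcal{D}_{i}|^{2}$ in the denominator of the final variance, so this modeling choice must be made explicit; once the sensitivity $\Delta{Q} = 2W/|\mathcal{D}_{i}|$ is established, the remaining algebra is routine.
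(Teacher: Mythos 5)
Your proposal is correct and follows essentially the same route as the paper: both arguments reduce the claim to computing the sensitivity $\Delta Q = \frac{2W}{|\mathcal{D}_{i}|}$ of the query defining the transmitted parameter (via the clipping bound $W$, a triangle inequality on the two differing contributions, and the $1/|\mathcal{D}_{i}|$ normalization) and then substitute into the $\rho$-$z$CDP calibration $\rho_{i}^{t} = \frac{\Delta Q^{2}}{2\sigma_{i}^{2}(t)}$. The only cosmetic difference is that you write the query as a per-sample average with explicit cancellation of the unchanged terms, whereas the paper bounds the difference of the two clipped outputs directly; the constants and the conclusion are identical.
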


The detailed proof of Proposition \ref{proposition_variance} is shown in Appendix~\ref{proof_proposition_variance}. From Proposition \ref{proposition_variance}, we observe that the privacy budget is inversely proportional to the Gaussian noise scale. In other words, as the privacy budget decreases, the perturbation applied on the local gradient $\nabla F_{i}(\boldsymbol{w(t)})$ of client $i$ increases, and the model performance suffers.

\subsection{Privacy-Aware Client Sampling Model}
We sample the heterogeneous clients in FL systems based on the privacy-aware probability distribution $\boldsymbol{X} \!=\! \{\boldsymbol{X_{t}}, t \in \{0, 1, \\ \ldots,T\}\}$, where the sampling probability vector $\boldsymbol{X_{t}} \!=\! \{x_{1}^{t}, x_{2}^{t}, \\ \ldots, x_{N}^{t}\}$ and the sampling probability of the participating client $i$ is defined by $x_{i}^{t} \!=\! \frac{\rho_{i}^{t}}{\sum_{i=1}^{N} \rho_{i}^{t}}$, where the sampling probability $x_{i}^{t} \!\in\! [0, 1)$ and $\sum_{i=1}^{N} x_{i}^{t} \!=\! 1$ for any global training iteration. According to the sampling probability vector $\boldsymbol{X_{t}}$, the central server dynamically generates the sampled client subset $\mathcal{K}_{t}$ by sampling $K$ times without replacement from $N$ overall clients for achieving optimal model training performance in the FL systems, where set size of the sampled client subset follows $|\mathcal{K}_{t}| \!=\! K \!=\! \tau N$ and the sampling rate $\tau \!\in\! (0,1] $. Following the recent research \cite{luo2022tackling, li2019convergence, luo2024adaptive} and based on the privacy-aware client sampling probabilities above, the global model aggregation weight for each client $i$ is associated with the corresponding sampling probability $x_{i}^{t}$. Thus, the global model update is given by:
\begin{align} \label{sgd_sample}
\boldsymbol{w(t+1)}=\boldsymbol{w(t)} - \eta \sum\nolimits_{i=1}^{K} \frac{\theta_{i}}{K x_{i}^{t}} \nabla F_{i}(\boldsymbol{w(t)}).
\end{align}

For our theoretical analysis in the later content, we employ the following commonly used assumptions and lemmas in the existing literature \cite{wu2021fast, sun2023stability}.  

\begin{assumption} \label{assumption_smoothness}
($\beta$-Lipschitz Smoothness) Local loss function $ F_{i}(\boldsymbol{w})$ is $\beta$-Lipschitz smoothness for each participating client $ i \in\{1,2, \ldots, N\}$ with any $ \boldsymbol{w}, \boldsymbol{w}^{\prime} \in \mathbb{R}^{d}$: 
\begin{align}
& \|\nabla{F_{i}(\boldsymbol{w}})-\nabla{F_{i}(\boldsymbol{w}^{\prime}})\| \leq \beta\|\boldsymbol{w}-\boldsymbol{w}^{\prime}\|, \\
F(\boldsymbol{w}) & - F(\boldsymbol{w}^{\prime}) \!\leq\! \nabla{F(\boldsymbol{w}^{\prime})}^{T}(\boldsymbol{w}\!-\!\boldsymbol{w}^{\prime}) \!+\!\frac{\beta}{2} \|\boldsymbol{w}\!-\!\boldsymbol{w}^{\prime}\|_{2}^{2}.
\end{align}
\end{assumption}

\begin{assumption} \label{assumption_moment}
(First and Second Moment Limits) For any $t \!\in\! \{0,1,\ldots,T\}$, with the global loss function $F(\boldsymbol{w})$ satisfies $F(\boldsymbol{w(t)}) \! \geq \! F(\boldsymbol{w}^{*})$ and the stochastic
gradients satisfy:
\begin{itemize}[itemsep=0pt, leftmargin=15pt, align=right]
\item[(\romannumeral1)] There exists constants $\kappa_{G} \!\geq\! \kappa \!>\! 0$, for all iteration $t$ and each client $i \in \{1, 2, \ldots, N\}$ satisfies,
\begin{align}
\nabla{F(\boldsymbol{w(t)})}^{T}\mathbb{E}[\nabla{F_{i}(\boldsymbol{w(t)})}] &\geq \kappa \|\nabla{F(\boldsymbol{w(t)})}\|_{2}^{2}, \nonumber \\
\|\mathbb{E}[\nabla{F_{i}(\boldsymbol{w(t)})}]\|_{2} &\leq \kappa_{G} \|\nabla{F(\boldsymbol{w(t)})}\|_{2}.
\end{align}
\item[(\romannumeral2)] There exists coefficients $M \!\geq\! 0$, $M_{V} \!\geq\! 0$ for any iteration $t$ and client $i$. The variance of $\nabla{F_{i}(\boldsymbol{w(t)})}$ is defined by $\mathbb{V}[\nabla{F_{i}(\boldsymbol{w(t)})}] \!=\! \mathbb{E}[\|\nabla{F_{i}(\boldsymbol{w(t)})}\|_{2}^{2}]\!-\!\|\mathbb{E}[\nabla{F_{i}(\boldsymbol{w(t)})}]\|_{2}^{2}$,
\begin{align}
\mathbb{V}[\nabla{F_{i}(\boldsymbol{w(t)})}] \leq M + M_{V} \|\nabla{F(\boldsymbol{w(t)})}\|^{2}_{2}.
\end{align}
\end{itemize}
\end{assumption}

\begin{assumption} \label{assumption_convexity}
($\psi$-Strong Convexity) The global loss function $F(\boldsymbol{w})$ satisfies $\psi$-strong convex for any model parameter~vectors $\boldsymbol{w}, \boldsymbol{w}^{\prime} \in \mathbb{R}^{d}$, we hold: 
\begin{align}
\frac{\psi}{2}\|\boldsymbol{w} - \boldsymbol{w^{\prime}}\|_{2}^{2} \leq F(\boldsymbol{w}) - F(\boldsymbol{w}^{\prime}). 
\end{align}
\end{assumption}

\begin{assumption} \label{assumption_subgradient}
(Subgradient for Non-Convexity) The global loss function $F(\boldsymbol{w})$ admits a unique subgradient $g \!\in\! \mathbb{R}^{d}$ for any model parameter vector $\boldsymbol{w}, \boldsymbol{w}^{\prime} \!\in\! \mathbb{R}^{d}$, it satisfies, 
\begin{align}
F(\boldsymbol{w}) - F(\boldsymbol{w}^{\prime}) \geq g^\top (\boldsymbol{w} - \boldsymbol{w}^{\prime}),
\end{align}
\end{assumption}

\begin{assumption}\label{assumption_bounded_gradients}
(Bounded Local Gradient) The local gradient $\nabla F_{i}(\boldsymbol{w})$ attains $D$-bound gradients for client $i \in \{1, 2, \ldots, N\}$ and model parameter $ \boldsymbol{w} \!\in\! \mathbb{R}^{d}$, we have: $\|\nabla F_{i}(\boldsymbol{w})\|_{2} \leq D$. 
\end{assumption}

\begin{lemma} \label{lemma_PL_inequality}
(Polyak-\L{}ojasiewicz inequality) Denote the optimal global parameter by $\boldsymbol{w}^{*}$, the $\beta$-Lipschitz continuous global loss function $F(\boldsymbol{w})$ satisfies Polyak-\L{}ojasiewicz condition if the following holds for $\mu > 0$ and $\boldsymbol{w} \in \mathbb{R}^{d}$: 
\begin{align}
\frac{1}{2}\|\nabla{F(\boldsymbol{w})}\|^{2} \geq \mu(F(\boldsymbol{w})-F(\boldsymbol{w}^{*})).
\end{align}
\end{lemma}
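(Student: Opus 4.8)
The plan is to derive the Polyak-\L{}ojasiewicz inequality directly from the $\psi$-strong convexity of the global loss $F$ recorded in Assumption~\ref{assumption_convexity}, and to identify the PL constant explicitly as $\mu = \psi$. The starting point is the gradient form of strong convexity, namely that for any $\boldsymbol{w}, \boldsymbol{y} \in \mathbb{R}^{d}$,
\begin{align}
F(\boldsymbol{y}) \geq F(\boldsymbol{w}) + \nabla F(\boldsymbol{w})^{T}(\boldsymbol{y} - \boldsymbol{w}) + \frac{\psi}{2}\|\boldsymbol{y} - \boldsymbol{w}\|_{2}^{2},
\end{align}
which is the first-order strengthening of the quadratic-growth bound in Assumption~\ref{assumption_convexity}. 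Since $F$ is $\beta$-Lipschitz smooth by Assumption~\ref{assumption_smoothness}, the gradient $\nabla F$ is well defined, so this inequality is available at every point.

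The key step is to minimize both sides over the free variable $\boldsymbol{y}$. On the left, the infimum over $\boldsymbol{y} \in \mathbb{R}^{d}$ is attained at the minimizer $\boldsymbol{w}^{*}$ of \eqref{optimal_local_parameter} and equals $F(\boldsymbol{w}^{*})$; existence of this finite minimum is exactly what Assumption~\ref{assumption_moment} posits via $F(\boldsymbol{w}(t)) \geq F(\boldsymbol{w}^{*})$. On the right, the expression is a strictly convex quadratic in $\boldsymbol{y}$, so setting its gradient $\nabla F(\boldsymbol{w}) + \psi(\boldsymbol{y} - \boldsymbol{w})$ to zero yields the unique minimizer $\boldsymbol{y} = \boldsymbol{w} - \frac{1}{\psi}\nabla F(\boldsymbol{w})$ and, after completing the square, the minimal value $F(\boldsymbol{w}) - \frac{1}{2\psi}\|\nabla F(\boldsymbol{w})\|_{2}^{2}$. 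Because the inequality holds pointwise in $\boldsymbol{y}$, it is preserved under taking infima, which gives
\begin{align}
F(\boldsymbol{w}^{*}) \geq F(\boldsymbol{w}) - \frac{1}{2\psi}\|\nabla F(\boldsymbol{w})\|_{2}^{2}.
\end{align}
Rearranging and multiplying through by $\psi$ then produces $\frac{1}{2}\|\nabla F(\boldsymbol{w})\|^{2} \geq \psi\bigl(F(\boldsymbol{w}) - F(\boldsymbol{w}^{*})\bigr)$, which is precisely the claimed inequality with $\mu = \psi > 0$.

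I expect the only delicate point to be the reconciliation of the specific form of strong convexity as written in Assumption~\ref{assumption_convexity} — which states the bound $\frac{\psi}{2}\|\boldsymbol{w}-\boldsymbol{w}'\|_{2}^{2} \leq F(\boldsymbol{w}) - F(\boldsymbol{w}')$ without an explicit first-order term — with the gradient form invoked above. I would address this by emphasizing that $\nabla F(\boldsymbol{w}^{*}) = 0$ at the optimum, so the reduced quadratic-growth condition is the specialization of the gradient form at $\boldsymbol{w}' = \boldsymbol{w}^{*}$, and that it is the latter (standard) form of $\psi$-strong convexity that the minimization argument genuinely requires. Beyond this clarification the proof is entirely routine: no estimate beyond the single quadratic minimization is needed, and no assumption other than strong convexity (for the direction of the bound) and differentiability of $F$ enters.
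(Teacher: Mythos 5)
Your argument is mathematically sound as far as it goes: the gradient form of strong convexity, minimized over the free variable, yields exactly $\tfrac{1}{2}\|\nabla F(\boldsymbol{w})\|^{2}\geq\psi\,(F(\boldsymbol{w})-F(\boldsymbol{w}^{*}))$, so strong convexity does imply the PL inequality with $\mu=\psi$. But you should be aware that the paper offers no proof of this statement at all, and deliberately so: Lemma~\ref{lemma_PL_inequality} is phrased as a conditional definition (``$F$ satisfies the Polyak-\L{}ojasiewicz condition \emph{if} the following holds'') and is introduced among the ``commonly used assumptions and lemmas in the existing literature,'' i.e., it is a standing hypothesis imported from prior work, not a derived fact. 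The distinction matters for the paper's program. The PL condition is strictly weaker than strong convexity and holds for many non-convex objectives, and the paper leans on exactly this (see the Remark on convex and non-convex applicability, and the use of Lemma~\ref{lemma_PL_inequality} in the proofs of Proposition~\ref{proposition_central_server_accuracy_loss} and Theorem~\ref{theorem_dp_convergence}). Your derivation buys an explicit constant $\mu=\psi$ and grounds the inequality in Assumption~\ref{assumption_convexity}, but at the cost of making PL appear contingent on strong convexity; adopted as the official justification, it would silently narrow the claimed scope of the PL-based convergence results to the strongly convex regime, which is precisely what the authors avoid by postulating PL independently.

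On the delicate point you flag: you are right that Assumption~\ref{assumption_convexity} as literally written, $\tfrac{\psi}{2}\|\boldsymbol{w}-\boldsymbol{w}'\|_{2}^{2}\leq F(\boldsymbol{w})-F(\boldsymbol{w}')$ with no first-order term, is not the gradient form your minimization step needs --- indeed, taken for arbitrary pairs it is self-contradictory (swap $\boldsymbol{w}$ and $\boldsymbol{w}'$), and the sensible reading is quadratic growth about $\boldsymbol{w}' = \boldsymbol{w}^{*}$, where $\nabla F(\boldsymbol{w}^{*})=0$. Your fix of invoking the standard gradient form of $\psi$-strong convexity is the cleanest repair, and it is the form the quadratic-minimization argument genuinely requires. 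As a fallback worth knowing: for convex $F$, quadratic growth is in fact equivalent to PL (with a degraded constant), by the Karimi--Nutini--Schmidt hierarchy, so even the weaker literal reading of Assumption~\ref{assumption_convexity} would suffice under convexity --- but neither route recovers the generality in which the paper actually uses the lemma.
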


\subsection{Problem Formulation} \label{Systen_Model_and_Problem_Formulation}
\subsubsection{Central Server's Cost Minimization Problem in Stage \texorpdfstring{\uppercase\expandafter{\romannumeral1}}{}}
We formulate the interactions between the central server and clients as a two-stage Stackelberg game. Prior to delineating the cost function of the central server, we initially explore the impact of each client's privacy budget on the accuracy loss of the global model. Suppose that the global loss function $F(\boldsymbol{w})$ satisfies Assumption~\ref{assumption_smoothness} and Lemma~\ref{lemma_PL_inequality} with $\nabla F(\boldsymbol{w(t)})$ attains an upper bound $V$ ($\|\nabla F(\boldsymbol{w(t)})\|_{2} \!\leq\! V$) as in \cite{ding2023incentive}. Then, we denote the dimension of the model parameters as $d$ and formulate the following proposition.
\begin{proposition} \label{proposition_central_server_accuracy_loss}
(Accuracy Loss of Global Model) Given the optimal global model parameter $\boldsymbol{w^{*}}\!$, the accuracy loss of the global model at $t$-th iteration is upper bounded by: 
\begin{align}\label{accuracy_loss}
\text{\small $\mathbb{E}[F(\boldsymbol{w(t)}) \!-\! F(\boldsymbol{w^{*}})] \!\leq\! \frac{\beta}{2\mu^{2}t} \left(V^{2} \!+\! 2 d W^{2} \! \sum\nolimits_{i=1}^{N} \! \frac{\theta_{i}^{2}}{|\mathcal{D}_{i}|^{2}\rho_{i}^{t}} \right) $}. \!\!
\end{align}
\end{proposition}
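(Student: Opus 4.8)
The plan is to combine the $\beta$-smoothness descent inequality (Assumption~\ref{assumption_smoothness}) with the Polyak-\L{}ojasiewicz condition (Lemma~\ref{lemma_PL_inequality}) and the noise variance supplied by Proposition~\ref{proposition_variance}, and then to propagate a one-step recursion into the claimed $1/t$ rate by means of a diminishing step size. First I would write the perturbed global update as $\boldsymbol{w(t+1)} = \boldsymbol{w(t)} - \eta_t \boldsymbol{g(t)}$, where $\boldsymbol{g(t)}$ is the reweighted, noisy aggregated gradient built from the terms $\tfrac{\theta_i}{K x_i^t}\nabla \widetilde{F}_i(\boldsymbol{w(t)})$. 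The privacy-aware importance weights $\tfrac{\theta_i}{K x_i^t}$ are chosen precisely so that the reweighted sampled average is an unbiased estimator of the full gradient, i.e. $\mathbb{E}[\boldsymbol{g(t)}] = \nabla F(\boldsymbol{w(t)}) = \sum_{i=1}^N \theta_i \nabla F_i(\boldsymbol{w(t)})$, while the injected Gaussian perturbations $\boldsymbol{n_i(t)}$ are zero-mean and therefore contribute only to the second moment.

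The central quantity is the second moment $\mathbb{E}\|\boldsymbol{g(t)}\|_2^2$. I would split it into the deterministic ``signal'' part $\|\nabla F(\boldsymbol{w(t)})\|_2^2$, bounded by $V^2$ using $\|\nabla F(\boldsymbol{w(t)})\|_2 \le V$, and the zero-mean noise part. Since each $\boldsymbol{n_i(t)}\!\sim\!\mathcal{N}(0,\sigma_i^2(t))$ is an isotropic perturbation of the $d$-dimensional parameter, $\mathbb{E}\|\boldsymbol{n_i(t)}\|_2^2 = d\,\sigma_i^2(t)$, so the aggregated noise $\sum_i \theta_i \boldsymbol{n_i(t)}$ has second moment $d\sum_{i=1}^N \theta_i^2 \sigma_i^2(t)$. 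Substituting $\sigma_i^2(t) = \tfrac{2W^2}{\rho_i^t|\mathcal{D}_i|^2}$ from Proposition~\ref{proposition_variance} gives $\mathbb{E}\|\boldsymbol{g(t)}\|_2^2 \le V^2 + 2dW^2 \sum_{i=1}^N \tfrac{\theta_i^2}{|\mathcal{D}_i|^2 \rho_i^t} =: G_t^2$, which is exactly the bracketed term in the statement.

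Next I would insert the update into the descent inequality of Assumption~\ref{assumption_smoothness}, take conditional expectation, and use unbiasedness to obtain
\begin{align}
\mathbb{E}[F(\boldsymbol{w(t+1)})] \le F(\boldsymbol{w(t)}) - \eta_t \|\nabla F(\boldsymbol{w(t)})\|_2^2 + \frac{\beta \eta_t^2}{2} G_t^2. \nonumber
\end{align}
Invoking Lemma~\ref{lemma_PL_inequality} as $\|\nabla F(\boldsymbol{w(t)})\|_2^2 \ge 2\mu\bigl(F(\boldsymbol{w(t)}) - F(\boldsymbol{w^*})\bigr)$ turns this into a contraction for $a_t := \mathbb{E}[F(\boldsymbol{w(t)}) - F(\boldsymbol{w^*})]$, namely $a_{t+1} \le (1-2\mu\eta_t)a_t + \tfrac{\beta\eta_t^2}{2}G_t^2$. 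Choosing the diminishing step size $\eta_t = \tfrac{1}{\mu t}$ makes the contraction factor $1-\tfrac{2}{t}$ and the additive term $\tfrac{\beta G_t^2}{2\mu^2 t^2}$, after which an induction on $t$ with hypothesis $a_t \le \tfrac{\beta G_t^2}{2\mu^2 t}$ closes the argument, the inductive step reducing to the elementary inequality $(t-1)(t+1) \le t^2$. This yields the stated bound $\mathbb{E}[F(\boldsymbol{w(t)})-F(\boldsymbol{w^*})] \le \tfrac{\beta}{2\mu^2 t}G_t^2$.

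I expect the main obstacle to be the careful bookkeeping of the stochasticity in $\boldsymbol{g(t)}$. The claim keeps only the Gaussian-noise variance (summed over all $N$ clients with weights $\theta_i^2$) and absorbs the gradient magnitude into $V^2$; justifying this requires treating the privacy-aware reweighted sampling as an unbiased, variance-controlled estimator of $\nabla F$, so that the sampling-induced variance does not generate additional terms, and tracking the factor $d$ coming from the isotropic $d$-dimensional perturbation. The secondary subtlety is that the $1/t$ decay is unavailable for a constant learning rate and emerges only under the $\eta_t \propto 1/t$ schedule, so the base case of the induction and the regime $t \ge 2$ (where $1-2\mu\eta_t \ge 0$) must be handled, typically by appealing to boundedness of the initial gap.
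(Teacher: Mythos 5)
Your proposal is correct and follows essentially the same route as the paper: both bound $\mathbb{E}\|\nabla\widetilde{F}(\boldsymbol{w(t)})\|_2^2$ by splitting it into the signal part $V^2$ and the zero-mean Gaussian part $d\sum_{i=1}^N\theta_i^2\sigma_i^2(t)$, then substitute $\sigma_i^2(t)=\tfrac{2W^2}{\rho_i^t|\mathcal{D}_i|^2}$ from Proposition~\ref{proposition_variance} to obtain $G^2(t)$. The only difference is that the paper imports the bound $\mathbb{E}[F(\boldsymbol{w(t)})-F(\boldsymbol{w^*})]\le\tfrac{\beta G^2(t)}{2\mu^2 t}$ directly from the cited SGD-under-PL result of Rakhlin et al., whereas you rederive it via the one-step recursion and induction with $\eta_t=1/(\mu t)$ --- and your caveat about sampling-induced variance is well taken, since the paper silently works with the fully aggregated perturbed gradient rather than the reweighted sampled one.
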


The detailed proof of Proposition~\ref{proposition_central_server_accuracy_loss} is provided in Appendix~\ref{proof_proposition_central_server_accuracy_loss}. Proposition~\ref{proposition_central_server_accuracy_loss} reveals that the accuracy loss of the global model is substantially influenced by the local model aggregation weight $\theta_i$, the datasize $|\mathcal{D}_{i}|$, and the privacy budget $\rho_i^t$ for each client. We quantify the global model quality by metric $\mathcal{A} \!=\! \sum_{i=1}^{N} \frac{\theta_i^2}{t |\mathcal{D}_{i}|^2 \rho_i^t}$. Notably, a lower value of term $\mathcal{A}$ correlates with the lower accuracy loss in FL models, thereby indicating enhanced global model performance. Define the reward vector decided by the central server as $\boldsymbol{R} \!=\! \{R_{t}, t \!\in\! \{0, 1,\ldots, T\}\}$ and the correction factor vector decided by the clients to dynamically adjust their privacy budgets as $\boldsymbol{\alpha} \!=\! \{\boldsymbol{\alpha_{i}}, i \!\in\! \{1, 2,\ldots, N\}\}$, $\boldsymbol{\alpha_{i}} \!=\! \{\alpha_{i}^{t}, t \!\in\! \{0, 1, \ldots,T\}\}$. Then, the objective of the central server is to decide the optimal reward $R_t$ at $t$-th global iteration to minimize its cost function including the accuracy loss and rewards allocated as follows:
\begin{align}\label{central_server_utility}
U_{t}(R_{t},\boldsymbol{\alpha}) = \sum\nolimits_{i=1}^{K} \left(\gamma \frac{\theta_{i}^{2}}{t |\mathcal{D}_{i}|^{2}\rho_{i}^{t}} + (1 - \gamma) R_{t} \rho_{i}^{t} \right), 
\end{align}
where the weighted aggregation parameter $\gamma \in (0,1)$ balances the trade-off between the accuracy loss and reward term.

\subsubsection{Clients’ Utility Maximization Problem in Stage \texorpdfstring{\uppercase\expandafter{\romannumeral2}}{}}
At $t$-th global iteration, given any $R_{t}$ from the central server, the reward allocated to each sampled client~is proportional to its privacy budget $\rho_{i}^{t}$, defined as $r_{i}^{t} \!=\!  \rho_{i}^{t}R_{t}$. Analogous to \cite{xu2022incentive}, we employ the extensively used quadratic cost function $(\rho_{i}^{t})^{2}$ to quantify the privacy costs. To increase the allocated reward and reduce the privacy cost, each client should dynamically adjust its privacy budget to maximize its overall utility. By introducing the correction factor $\alpha_{i}^{t}$ to adjust the privacy budget of client $i$ over time, the privacy budget $\rho_{i}^{t+1}$ is updated by the average privacy budgets of all clients $i \!\in\! \{1, 2, \ldots, N\}$ and $\rho_{i}^{t}$ as shown in Eq.~(\ref{update_constraint}), which poses a significant challenge due to the lack of privacy budget information of other clients. Moreover, the probability that client $i$ is successfully sampled by the central server at $t$-th global iteration is $(1 - (1 - x_{i}^{t})^{K})$. Hence, the objective of client $i$ is to maximize its own utility $U_{i}$ and the possibility of being sampled by dynamically adjusting the correction factor $\alpha_{i}^{t}$ at each global iteration $t$ according to the reward $R_{t}$ from the central server as follows:
\begin{align}
U_{i}(\boldsymbol{\alpha_{i}}, \boldsymbol{\alpha_{\text{-}i}}, \boldsymbol{R}) \!=&  \sum\nolimits_{t=0}^{T} (1 \!-\! (1 \!-\! x_{i}^{t})^{K}) \nonumber \\
&\times (\rho_{i}^{t}R_{t} \!-\! \varphi_{i} (\rho_{i}^{t})^{2} \!\!-\! (1 \!-\! \varphi_{i}) (\alpha_{i}^{t})^{2}) ,  \label{client_utility} \\
s.t. \quad \rho_{i}^{t+1} = (1 - & \alpha_{i}^{t}) \frac{1}{N} \sum\nolimits_{j=1}^{N}\! \rho_{j}^{t} + \alpha_{i}^{t} \rho_{i}^{t}, \ \alpha_{i}^{t} \in (0,1),  \label{update_constraint}
\end{align}
where $\rho_{i}^{t} \!\in\! [\rho_{L}, \rho_{H}]$, $\boldsymbol{\alpha_{\text{-}i}} \!=\! \{\boldsymbol{\alpha_{1}}, \ldots, \boldsymbol{\alpha_{i-1}}, \boldsymbol{\alpha_{i+1}}, \ldots, \boldsymbol{\alpha_{N}}\}$, the weight parameter $\varphi_{i} \!\in\! (0,1)$ balances the trade-off between the privacy cost and the fluctuation of correction factor $\alpha_{i}^{t}$.

Note that the reward vector $\boldsymbol{R} \!=\! \{R_{t}, t \!\in\! \{0, 1,\ldots, T\}\}$ announced by the central server will affect the clients' designs of the correction factors, which in return affects the central server's cost as shown in Eq.~(\ref{central_server_utility}). Consequently, the two-stage Stackelberg game between the central server in Stage \uppercase\expandafter{\romannumeral1} and the clients in Stage \uppercase\expandafter{\romannumeral2} is constructed as:
\begin{align}
\text{Stage \uppercase\expandafter{\romannumeral1}:} & \  \min_{R_{t}} U_{t}(R_{t},\boldsymbol{\alpha}), \label{central_server_obj} \\
\text{Stage \uppercase\expandafter{\romannumeral2}:} & \  \max_{\substack{\boldsymbol{\alpha_{i}}}} U_{i}(\boldsymbol{\alpha_{i}}, \boldsymbol{\alpha_{\text{-}i}}, \boldsymbol{R}). \label{client_obj}  
\end{align}

\section{Stackelberg Nash Equilibrium Analysis}\label{Stackelberg_Nash_Equilibrium_Analysis}
In this section, we will find the Stackelberge Nash equilibrium profile $(\boldsymbol{R^{*}}, \{\boldsymbol{\alpha_{i}^{*}}$, $i \! \in \! \{1,2, \ldots, N\}\})$, with $ \boldsymbol{R^{*}} \!\!=\! \{R_{t}^{*}, t \! \in \! \{0, 1, \ldots, T \}\}\!$ and $ \boldsymbol{\alpha_{i}^{*}} \!=\! \{\alpha_{i}^{t*}, t \!\in\! \{0,1, \ldots, T\}\}$ for the central server~and clients respectively through the backward induction approach. Firstly, we focus on the followers' strategies and derive each client’s optimal correction factor $\alpha_{i}^{t*}$ at the $t$-th global iteration with any given reward $\boldsymbol{R}$. Then, based on the trade-off between the accuracy loss of the global model and the total reward allocated to sampled clients, we derive the optimal reward $R_{t}^{*}$. Finally, we prove that the optimal strategies form the Stackelberg Nash Equilibrium. 

\subsection{Determine the Optimal Strategy Profile}
According to the discrete-time nonlinear optimization problem with dynamic constraints in Eqs. (\ref{client_utility})-(\ref{update_constraint}), at $t$-th global iteration, the correction factor $\alpha_{i}^{t}$ of client~$i$ is affected by other clients' privacy budget $\rho_{j}^{t}$ ($j \neq i$). During the local training, there is no exchange for private information among clients, indicating that client $i$ can not access other clients' privacy budget $\rho_{j}^{t}$, which makes the multi-client joint correction factor design over time challenging. To tackle the above challenge, we introduce a mean-field term to estimate the average privacy budget, based on which, the correction factor $\alpha_{i}^{t}$ of each client can be derived without requiring other clients' private information during local training. 
\begin{definition}\label{definition_mean_field}
(Mean-Field Term) To figure out the optimal correction factor $\alpha_{i}^{t}$ for each client, we introduce the mean-field term to approximate the average privacy budget $\rho_{i}^{t}$ of all clients at global iteration $t \in\{0,1, \ldots, T\}$:
\begin{align} \label{mean_field}
\phi(t) = \frac{1}{N}\sum\nolimits_{i=1}^{N} \rho_{i}^{t}.
\end{align}
\end{definition}

From the mathematical point of view, the mean-field term $\phi(t)$ is a given function in our FL optimization problem. Then, based on Definition \ref{definition_mean_field}, the sampling probability of client~$i$ at $t$-th global iteration can be further derived as $x_{i}^{t} \!=\! \frac{\rho_{i}^{t}}{\sum_{i=1}^{N} \rho_{i}^{t}} \!=\! \frac{\rho_{i}^{t}}{N \phi(t)}$. Thus, we reformulate the discrete-time nonlinear optimization problem in Eqs. (\ref{client_utility})-(\ref{update_constraint}) as follows:
\begin{align}
U_{i}(\boldsymbol{\alpha_{i}}, \boldsymbol{\alpha_{\text{-}i}}, \boldsymbol{R}) \!=&\  \sum\nolimits_{t=0}^{T} (1 \!-\! (1 \!-\! \frac{\rho_{i}^{t}}{N \phi(t)})^{K}) \nonumber \\
&\ \times (\rho_{i}^{t}R_{t} \!-\! \varphi_{i} (\rho_{i}^{t})^{2} \!\!-\! (1 \!-\!\varphi_{i}) (\alpha_{i}^{t})^{2}),  \label{client_utility_refined} \\
s.t. \quad \rho_{i}^{t+1} = (1 & -\alpha_{i}^{t}) \phi(t) + \alpha_{i}^{t} \rho_{i}^{t}, \ \alpha_{i}^{t} \in (0,1) .\label{update_constraint_refined}
\end{align}


The optimization problem presented in Eqs. (\ref{client_utility_refined})-(\ref{update_constraint_refined}) can be characterized as a discrete-time linear quadratic optimal problem. By constructing the Hamilton function, we can derive the close form expression of the optimal correction factor  $\alpha_{i}^{t*}$ summarized in the following theorem.
\begin{theorem} \label{theorem_optimal_alpha}
(Optimal Correction Factor) In Stage \uppercase\expandafter{\romannumeral2}, given the mean-field term $\phi(t)$ over time, the correction factor of client $i \!\in\! \{1, 2, \ldots, N\}$ at global training iteration $t \!\in\! \{0,1, \ldots, T\}$ is derived as:
\begin{align}\label{optimal_alpha}
\!\!\!\!\text{ $\alpha_{i}^{t*} \!=\! \frac{(\rho_{i}^{t} \!-\! \phi(t))\{\sum_{j=t+2}^{T} [ S(j) \prod_{r=t+1}^{j-1} \alpha_{i}^{r}] \!+\! S(t \!+\! 1)\}}{2 (1 \!-\! \varphi_{i})(1 \!-\! (1 \!-\! \frac{\rho_{i}^{t}}{N \phi(t)})^{K})} $} ,
\end{align}
with $\alpha_{i}^{T} \!=\! 0$, where $S(t) = Q_{i}^{t} R_{t} + M_{i}^{t}$, $Q_{i}^{t} = 1 - (1 - \frac{\rho_{i}^{t}}{N \phi(t)})^{K} + (1 - \frac{\rho_{i}^{t}}{N \phi(t)})^{K - 1} \frac{K \rho_{i}^{t}}{N \phi(t)}$ and $M_{i}^{t} \!=\!  - 2 \varphi_{i} \rho_{i}^{t}(1 \!-\! (1 \!-\! \frac{\rho_{i}^{t}}{N \phi(t)})^{K}) - (1 - \frac{\rho_{i}^{t}}{N \phi(t)})^{K - 1} (\varphi_{i} (\rho_{i}^{t})^{2} + (1 -\varphi_{i}) (\alpha_{i}^{t})^{2})$.
\end{theorem}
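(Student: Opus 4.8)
The plan is to recognize the reformulated problem in Eqs.~(\ref{client_utility_refined})--(\ref{update_constraint_refined}) as a finite-horizon discrete-time optimal control problem, in which $\rho_i^t$ is the state, $\alpha_i^t$ is the control, and Eq.~(\ref{update_constraint_refined}) is the state transition law. Writing the per-stage payoff as $g_i^t = \rho_i^t R_t - \varphi_i(\rho_i^t)^2 - (1-\varphi_i)(\alpha_i^t)^2$ and the participation probability as $P_i^t = 1-(1-\frac{\rho_i^t}{N\phi(t)})^K$, I would introduce a costate (adjoint) sequence $\{\lambda_i^{t+1}\}$ and form the discrete Hamiltonian
\begin{align}
H_t = P_i^t\, g_i^t + \lambda_i^{t+1}\big[\phi(t) + \alpha_i^t(\rho_i^t - \phi(t))\big],
\end{align}
so that the Pontryagin-type necessary conditions reduce to a stationarity condition $\partial H_t/\partial\alpha_i^t = 0$, a costate recursion $\lambda_i^t = \partial H_t/\partial\rho_i^t$, and the transversality condition $\lambda_i^{T+1} = 0$.

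First I would exploit the fact that $P_i^t$ is independent of $\alpha_i^t$, so stationarity gives $-2(1-\varphi_i)P_i^t\alpha_i^t + \lambda_i^{t+1}(\rho_i^t - \phi(t)) = 0$, i.e.
\begin{align}
\alpha_i^{t*} = \frac{\lambda_i^{t+1}(\rho_i^t - \phi(t))}{2(1-\varphi_i)P_i^t}.
\end{align}
Since $\partial^2 H_t/\partial(\alpha_i^t)^2 = -2(1-\varphi_i)P_i^t < 0$ for $\varphi_i\in(0,1)$ and $P_i^t>0$, this stationary point is the unique maximizer and no further global argument is needed; it then remains only to express $\lambda_i^{t+1}$ in closed form.

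The core computation is the costate recursion $\lambda_i^t = \partial H_t/\partial\rho_i^t$. Here I would differentiate through the nonlinear factor $P_i^t$ using $\partial P_i^t/\partial\rho_i^t = \frac{K}{N\phi(t)}(1-\frac{\rho_i^t}{N\phi(t)})^{K-1}$, then group the resulting expression into the coefficient of $R_t$ and the remaining privacy-cost terms; these are precisely the quantities named $Q_i^t$ and $M_i^t$ in the statement, yielding the first-order linear difference equation $\lambda_i^t = S(t) + \alpha_i^t\lambda_i^{t+1}$ with $S(t)=Q_i^tR_t+M_i^t$. The terminal condition $\lambda_i^{T+1}=0$ then forces $\alpha_i^T=0$ directly through the stationarity formula.

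Finally, I would unroll the backward recursion: iterating $\lambda_i^{t+1} = S(t+1) + \alpha_i^{t+1}\lambda_i^{t+2}$ down to the horizon produces the telescoping form $\lambda_i^{t+1} = S(t+1) + \sum_{j=t+2}^{T}S(j)\prod_{r=t+1}^{j-1}\alpha_i^r$, and substituting this into the stationarity expression yields Eq.~(\ref{optimal_alpha}). The main obstacle I anticipate is the costate step: because the participation probability enters nonlinearly in the state, $\partial H_t/\partial\rho_i^t$ generates several structurally distinct terms, and the delicate bookkeeping of separating the $R_t$-linear part ($Q_i^t$) from the privacy-cost part ($M_i^t$) — while tracking that $M_i^t$ itself carries the already-determined future control $(\alpha_i^t)^2$ — is where the algebra is heaviest and most error-prone.
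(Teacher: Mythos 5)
Your proposal follows essentially the same route as the paper's proof in Appendix C: a discrete-time Hamiltonian/Pontryagin argument with stationarity in $\alpha_i^t$, a strict-concavity check via $\partial^2 H/\partial(\alpha_i^t)^2<0$, the costate recursion $\lambda(t)=S(t)+\alpha_i^t\lambda(t+1)$ with $S(t)=Q_i^tR_t+M_i^t$, and backward unrolling to reach Eq.~(\ref{optimal_alpha}). The only cosmetic difference is that you write the Hamiltonian with the full next state $\phi(t)+\alpha_i^t(\rho_i^t-\phi(t))$ and transversality $\lambda_i^{T+1}=0$, whereas the paper uses the increment form $\lambda(t+1)(1-\alpha_i^t)(\phi(t)-\rho_i^t)$ with terminal condition $\lambda(T)=\partial S(\rho_i^T)/\partial\rho_i^T$; both conventions produce the identical recursion and final formula.
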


\begin{algorithm}[t]
\small 
\caption{Iterative Calculation of Fixed Point $\phi(t)$} 
\label{alg1}
\SetAlgoLined
\KwIn{$N$, $T$, the clipping threshold $\epsilon_{0}$, $m \!=\! 1$, arbitrary initial value for $\epsilon$, $\alpha_{i}^{t}$, $\rho_{i}^{0}$, $\phi_{0}(t)$ and the allocated reward $R_{t}^{0}$, where $t \!\in\! \{0,1, \ldots, T\}$,~$i \!\in\! \{1,2,\ldots,N\}$.}
\KwOut{The fixed point of $\phi(t)$, $t \in \{0,1,\ldots,T\}$.}
\For{$t \leftarrow 0, 1, \ldots, T$}{
    \While{$\epsilon > \epsilon_{0}$}{
        \For{$i \leftarrow 1, 2, \ldots, N$}{
            Update $\rho_{i}^{t}$ according to Eq.~(\ref{update_constraint_refined}) \;
        }
        \For{$t \leftarrow 0, 1, \ldots, T-1$}{
            \For{$i \leftarrow 1, 2, \ldots, N$}{
                Update $\alpha_{i}^{t}$ according to Eq.~(\ref{optimal_alpha}) \;
            }
            Update $R_{t}^{m}$ according to Eq.~(\ref{optimal_r}) \;
            $\alpha_{i}^{T} = 0 $\;
        }
        $\phi_{m}^{\texttt{est}}(t) =  \frac{1}{N}\sum_{i=1}^{N} \rho_{i}^{t}$ \;
        $\phi(t) = \phi_{m}^{\texttt{est}}(t)$, $R_{t} = R_{t}^{m}$ \;
        $\epsilon = \phi_{m}^{\texttt{est}}(t) - \phi_{m-1}^{\texttt{est}}(t)$ \;
        $m = m + 1$\;
    }
}
\end{algorithm}

The proof of Theorem \ref{theorem_optimal_alpha} is presented in Appendix~\ref{proof_theorem_optimal_alpha}. Based on the optimal correction factor $\alpha_{i}^{t*}$ of each client in Theorem~\ref{theorem_optimal_alpha}, we can obtain the optimal reward $R_{t}^{*}$ of the central server in Stage~\uppercase\expandafter{\romannumeral1} by substituting the optimal correction factor in Eq.~(\ref{optimal_alpha}) into Eq.~(\ref{central_server_utility}) as shown in the following lemma.
\begin{lemma}\label{lemma_optimal_reward} 
(Optimal Reward) In Stage \uppercase\expandafter{\romannumeral1}, the optimal reward $R_{t}^{*}$ of the central server at $t$-th iteration can be obtained according to the optimal correction~factor $\alpha_{i}^{t*}$ in Theorem~\ref{theorem_optimal_alpha}:
\begin{align} \label{optimal_r}
\text{\small $R_{t}^{*} \!=\! \frac{1}{2 (1 \!-\! \gamma) \sum_{i=1}^{K} \! G_{i}^{t}}\! \sum\nolimits_{i=1}^{K} \left[\frac{\gamma \theta_{i}^{2} G_{i}^{t}}{t|\mathcal{D}_{i}|^{2}(G_{i}^{t} R_{t}^{*} \!+\! H_{i}^{t})^{2}} \!-\!  H_{i}^{t} \right] $},
\end{align}
where $\rho_{i}^{t} = G_{i}^{t} R_{t} + H_{i}^{t}$, $G_{i}^{t} = (\rho_{i}^{t-1} - \phi(t - 1)) I_{i}^{t}$, $H_{i}^{t} = (1 - J_{i}^{t} ) \phi(t - 1) + J_{i}^{t} \rho_{i}^{t-1}$, $I_{i}^{t} = \frac{(\rho_{i}^{t-1} - \phi(t-1))Q_{i}^{t}}{2 (1 - \varphi_{i})(1 - (1 - \frac{\rho_{i}^{t-1}}{N \phi(t-1)})^{K})}$ and $J_{i}^{t} = \frac{(\rho_{i}^{t-1} - \phi(t-1))\{\sum_{j=t+1}^{T} [S(j) \prod_{r=t}^{j-1} \alpha_{i}^{r}] + M_{i}^{t}\}}{2 (1 - \varphi_{i})(1 - (1 - \frac{\rho_{i}^{t-1}}{N \phi(t-1)})^{K})}$.
\end{lemma}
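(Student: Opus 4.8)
The plan is to obtain $R_t^*$ by backward induction: first push the followers' Stage~II best response into the dynamic constraint so that each privacy budget $\rho_i^t$ becomes affine in the leader's scalar decision $R_t$, and then minimize the Stage~I cost $U_t$ over $R_t$ via its first-order condition. The first step is to take the optimal correction factor of Theorem~\ref{theorem_optimal_alpha} with the time index shifted by one (replacing $t$ by $t-1$) and isolate its dependence on $R_t$. The key observation is that, among all terms of Eq.~(\ref{optimal_alpha}), only $S(t)=Q_i^t R_t + M_i^t$ carries an explicit linear dependence on $R_t$; the forward-looking summand $\sum_{j=t+1}^{T}[S(j)\prod_{r=t}^{j-1}\alpha_i^r]$ depends only on rewards $R_j$ with $j>t$ and on future correction factors, all held fixed when optimizing $R_t$ (and resolved jointly by the fixed-point iteration). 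Collecting the coefficient of $R_t$ yields $I_i^t$ and gathering the constant remainder yields $J_i^t$, so that $\alpha_i^{t-1*}=I_i^t R_t + J_i^t$.

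Next I would insert this affine expression into the update rule Eq.~(\ref{update_constraint_refined}), written as $\rho_i^t=\phi(t-1)+\alpha_i^{t-1*}(\rho_i^{t-1}-\phi(t-1))$. Substituting $\alpha_i^{t-1*}=I_i^t R_t+J_i^t$ produces exactly $\rho_i^t=G_i^t R_t+H_i^t$ with $G_i^t=(\rho_i^{t-1}-\phi(t-1))I_i^t$ and $H_i^t=(1-J_i^t)\phi(t-1)+J_i^t\rho_i^{t-1}$, matching the claimed coefficients. With this affine map, I would substitute $\rho_i^t=G_i^t R_t+H_i^t$ into $U_t(R_t,\boldsymbol{\alpha})$ of Eq.~(\ref{central_server_utility}), reducing the cost to a function of the single scalar $R_t$, differentiate term by term, and impose $\partial U_t/\partial R_t=0$; solving this stationarity condition and leaving the self-referential factor $1/\rho_i^t=1/(G_i^t R_t+H_i^t)$ intact on the right-hand side gives the implicit equation Eq.~(\ref{optimal_r}). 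To confirm the stationary point is the minimizer, I would verify the second-order condition: $1/(G_i^t R_t+H_i^t)$ is convex wherever $\rho_i^t>0$ (a convex function composed with an affine one), and the quadratic term $(1-\gamma)(G_i^t R_t^2+H_i^t R_t)$ is convex whenever $G_i^t\geq0$; the latter holds since $G_i^t=\frac{(\rho_i^{t-1}-\phi(t-1))^2 Q_i^t}{2(1-\varphi_i)(1-(1-\frac{\rho_i^{t-1}}{N\phi(t-1)})^{K})}$ with $(\rho_i^{t-1}-\phi(t-1))^2\geq0$, $1-\varphi_i>0$, $1-(1-\frac{\rho_i^{t-1}}{N\phi(t-1)})^{K}>0$, and $Q_i^t\geq0$.

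The main obstacle I anticipate is the coupled, implicit nature of the coefficients rather than the differentiation itself. The quantities $Q_i^t$ and $M_i^t$ feeding $I_i^t$ and $J_i^t$ themselves depend on $\rho_i^t$ (and on the not-yet-fixed $\alpha_i^t$), so the relation $\rho_i^t=G_i^t R_t+H_i^t$ is genuinely affine only once these coefficients are frozen at their current values, and $R_t^*$ reappears inside the bracket of Eq.~(\ref{optimal_r}). Consequently $R_t^*$ is not a literal closed form but the root of a fixed-point equation, which is what the alternating scheme of Algorithm~\ref{alg1} resolves by successively updating $\rho_i^t$, $\alpha_i^t$, $R_t$, and the mean-field term $\phi(t)$ until convergence. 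The care required is therefore to state precisely which quantities are treated as given when differentiating, and to appeal to the convergence of the mean-field fixed point (established separately) so that this decoupling is self-consistent.
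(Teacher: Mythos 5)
Your proposal follows essentially the same route as the paper's own proof: express $\alpha_i^{t-1*}$ as an affine function $I_i^t R_t + J_i^t$ of the current reward (treating future rewards and correction factors as given), push it through the update rule to get $\rho_i^t = G_i^t R_t + H_i^t$, substitute into $U_t$, and solve the first-order condition while checking the second-order condition. Your explicit justification that $G_i^t \geq 0$ via the squared factor $(\rho_i^{t-1}-\phi(t-1))^2$ is in fact slightly more careful than the paper, which simply asserts $\partial^2 U_t/\partial R_t^2 > 0$ and additionally verifies existence of a root by examining the limits of $\partial U_t/\partial R_t$ as $R_t \to 0$ and $R_t \to +\infty$.
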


The detailed proof of Lemma \ref{lemma_optimal_reward} is provided in Appendix~\ref{proof_lemma_optimal_reward}. We observe that Eq.~(\ref{optimal_r}) is a high-ordered non-polynomial which is challenging to acquire a closed-form solution. Still, there are many efficient numerical optimization algorithms (e.g., Gradient Descent, Newton's method, or Quasi-Newton Methods) to obtain a numerical solution of $R_{t}^{*}$. In the following, we theoretically analyze the Stackelberg Nash Equilibrium of the above system. 
\begin{definition}\label{definition_stackelberg}
(Stackelberg Nash Equilibrium, SNE) The~optimal strategy profile $(\boldsymbol{R^{*}}, \{\boldsymbol{\alpha_{i}^{*}}$, $i \!\in\! \{1,2, \ldots, N\}\})$, with~the optimal reward vector $\boldsymbol{R}^{*} = \{R_{t}^{*}, t \in \{0,1,\ldots, T\}\}$ and the optimal correction factor vector $\boldsymbol{\alpha_{i}^{*}} \!=\! \{\alpha_{i}^{t*}, t \!\in\! \{0,1, \ldots, T\}\}$ constitutes a Stackelberg Nash Equilibrium (SNE) if for any reward $R_{t} \in \mathbb{R}$, $t \in \{0,1,\ldots, T\}$ and any correction factor $\alpha_{i}^{t} \in [0,1]$ of any client $i \in \{1,2, \ldots, N\}$,
\begin{align}
U_{t}(R_{t}^{*},\boldsymbol{\alpha^{*}}) &\leq U_{t}(R_{t},\boldsymbol{\alpha^{*}}),  \\
U_{i}(\boldsymbol{\alpha_{i}^{*}}, \boldsymbol{\alpha_{\text{-}i}^{*}}, \boldsymbol{R^{*}}) &\geq U_{i}(\boldsymbol{\alpha_{i}}, \boldsymbol{\alpha_{\text{-}i}^{*}}, \boldsymbol{R^{*}}). 
\end{align}
\end{definition}

\begin{theorem} \label{SNE}
The above two-stage Stackelberg game possesses a Stackelberg Nash Equilibrium.
\end{theorem}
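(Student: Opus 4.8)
The plan is to establish existence by backward induction, leveraging the best-response characterizations already obtained in Theorem~\ref{theorem_optimal_alpha} and Lemma~\ref{lemma_optimal_reward}, and then verifying that the resulting strategy profile satisfies both inequalities of Definition~\ref{definition_stackelberg}. Concretely, I would first treat Stage~\uppercase\expandafter{\romannumeral2} as a parametrized subgame: for an arbitrary but fixed reward vector $\boldsymbol{R}$, I show that the followers' game admits a Nash Equilibrium, and then I show that the leader's Stage~\uppercase\expandafter{\romannumeral1} problem, composed with the followers' best-response map, attains a minimizer.

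For Stage~\uppercase\expandafter{\romannumeral2}, the key observation is that the mean-field term $\phi(t)$ in Definition~\ref{definition_mean_field} decouples the clients: once $\phi(t)$ is fixed, client $i$'s objective $U_i$ in Eq.~(\ref{client_utility_refined}) depends only on its own trajectory $\boldsymbol{\alpha_i}$ and no longer on $\boldsymbol{\alpha_{\text{-}i}}$ directly, so the coupled game reduces to $N$ independent discrete-time linear-quadratic optimal control problems. First I would verify the second-order sufficient condition: differentiating the stage payoff twice in $\alpha_i^t$ yields the coefficient $-2(1-\varphi_i)\left(1-(1-\tfrac{\rho_i^t}{N\phi(t)})^K\right)$, which is strictly negative because $\varphi_i\in(0,1)$ and the sampling factor is positive, hence each per-stage objective is strictly concave in $\alpha_i^t$. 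This guarantees that the stationary point delivered by the Hamiltonian analysis in Theorem~\ref{theorem_optimal_alpha} is the unique maximizer, i.e.\ $\alpha_i^{t*}$ in Eq.~(\ref{optimal_alpha}) is client $i$'s best response. Consistency of the decoupling is closed by invoking the existence of the fixed point of the mean-field estimator (established via Algorithm~\ref{alg1}), which ensures that the assumed $\phi(t)$ coincides with the realized average budget at equilibrium; the collection $\{\boldsymbol{\alpha_i^*}\}$ is then a Nash Equilibrium of the Stage~\uppercase\expandafter{\romannumeral2} subgame for the given $\boldsymbol{R}$.

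For Stage~\uppercase\expandafter{\romannumeral1}, I would substitute the followers' best responses, which through $\rho_i^t = G_i^t R_t + H_i^t$ turn the server cost $U_t(R_t,\boldsymbol{\alpha})$ in Eq.~(\ref{central_server_utility}) into a function of $R_t$ alone. Because each privacy budget is confined to $[\rho_L,\rho_H]$ and each correction factor to $(0,1)$, the induced reward ranges over a bounded interval, and the best-response map $R_t \mapsto \alpha_i^{t*}(R_t) \mapsto \rho_i^t(R_t)$ is continuous; hence $U_t$ is a continuous function of $R_t$ on a compact feasible set. By the Weierstrass extreme value theorem a minimizer $R_t^*$ exists, and its stationarity is exactly the first-order condition Eq.~(\ref{optimal_r}) of Lemma~\ref{lemma_optimal_reward}. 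Solving backward from $t=T$ to $t=0$ yields the full optimal reward vector $\boldsymbol{R^*}$. Finally, by construction the pair $(\boldsymbol{R^*},\{\boldsymbol{\alpha_i^*}\})$ satisfies $U_t(R_t^*,\boldsymbol{\alpha^*})\le U_t(R_t,\boldsymbol{\alpha^*})$ (leader optimality) and $U_i(\boldsymbol{\alpha_i^*},\boldsymbol{\alpha_{\text{-}i}^*},\boldsymbol{R^*})\ge U_i(\boldsymbol{\alpha_i},\boldsymbol{\alpha_{\text{-}i}^*},\boldsymbol{R^*})$ (follower optimality), which is precisely Definition~\ref{definition_stackelberg}.

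I expect the main obstacle to be the Stage~\uppercase\expandafter{\romannumeral1} existence argument, because, as remarked after Lemma~\ref{lemma_optimal_reward}, substituting the followers' response turns the first-order condition Eq.~(\ref{optimal_r}) into a high-order non-polynomial equation with no closed-form root. Rather than solving it, the proof must establish existence qualitatively, so the delicate points are (i) verifying that the composite best-response map is genuinely continuous in $R_t$, in particular that the denominators $1-(1-\tfrac{\rho_i^t}{N\phi(t)})^K$ stay bounded away from zero on the feasible domain so that no singularity arises, and (ii) confirming that the feasible reward set is compact so that the extreme value theorem applies. A secondary subtlety is ensuring the mean-field fixed point used to decouple Stage~\uppercase\expandafter{\romannumeral2} is compatible with the reward chosen in Stage~\uppercase\expandafter{\romannumeral1}, which the iterative scheme of Algorithm~\ref{alg1} resolves.
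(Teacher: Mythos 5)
Your overall skeleton --- backward induction; followers' best responses $\alpha_i^{t*}$ obtained from the Hamiltonian analysis of Theorem~\ref{theorem_optimal_alpha}, with the second-order coefficient $-2(1-\varphi_i)\bigl(1-(1-\tfrac{\rho_i^t}{N\phi(t)})^K\bigr)<0$ certifying a unique maximizer; mean-field consistency closed by the fixed point of $\phi(t)$; leader's optimal reward existence drawn from the analysis underlying Lemma~\ref{lemma_optimal_reward}; and verification of both inequalities of Definition~\ref{definition_stackelberg} --- is exactly the paper's. The one place you genuinely diverge is the Stage I existence step, and your version has a gap that the paper's argument avoids. The paper does not compactify the reward set: it uses $\partial^2 U_t(R_t,\boldsymbol{\alpha^*})/\partial R_t^2>0$ together with $\lim_{R_t\to 0}\partial U_t/\partial R_t=-\infty$ and $\lim_{R_t\to+\infty}\partial U_t/\partial R_t=+\infty$, so the first-order condition~(\ref{optimal_r}) has a root in the open set $(0,+\infty)$ which, by convexity, is the global minimizer --- no compactness is needed. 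Your Weierstrass route requires two things you only assert: (i) that the feasible reward set is a compact interval, whereas the leader's problem places no explicit bound on $R_t$, and deducing boundedness of $R_t$ from $\rho_i^t=G_i^tR_t+H_i^t\in[\rho_L,\rho_H]$ presumes $G_i^t\neq 0$ and imports a constraint the paper never imposes; and (ii) that the resulting minimizer is interior, since a Weierstrass minimizer on a compact interval may sit on the boundary and then fails to satisfy the stationarity condition~(\ref{optimal_r}) with which you identify it. Substituting the paper's convexity-plus-coercive-derivative argument for your compactness argument repairs both points simultaneously. One respect in which your write-up is tighter than the paper's: you make the compatibility between the mean-field fixed point (Theorem~\ref{theorem_fixed_point}) and the equilibrium reward explicit, which the paper's proof of this theorem leaves implicit.
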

\begin{proof}
Based on Theorem \ref{theorem_optimal_alpha}, we derive the optimal strategy of correction factor $\alpha_{i}^{t*}$ for each client $i \! \in \! \{1, 2, \ldots, N\}$ given any reward $R_{t} \!\in\! \mathbb{R}$. Subsequently, we need to demonstrate that there exists an optimal reward $R_{t}^{*}$ for the central server's cost function $U_{t}(R_{t},\boldsymbol{\alpha})$ according to the optimal correction factor vector $\boldsymbol{\alpha_{i}^{*}}\!=\!\{\alpha_{i}^{t*}, t \! \in \! \{0, 1, \ldots, T\}\}$. The derivation of Lemma~\ref{lemma_optimal_reward} indicates that the cost function of central server $U_{t}(R_{t},\boldsymbol{\alpha})$ is strictly concave with $\partial^{2} U_{t}(R_{t},\boldsymbol{\alpha^{*}})/\partial R_{t}^{2} \!>\! 0$ and $\lim_{R_{t} \rightarrow 0} \frac{\partial U_{t}(R_{t},\boldsymbol{\alpha^{*}})}{ \partial R_{t}} \rightarrow - \infty$, $\lim_{R_{t} \rightarrow +\infty} \frac{\partial U_{t}(R_{t},\boldsymbol{\alpha^{*}})}{ \partial R_{t}} \rightarrow + \infty$. Consequently, the optimal reward $R_{t}^{*}$ for the central server exists by solving the first-order derivation $\partial U_{t}(R_{t},\boldsymbol{\alpha^{*}})/\partial R_{t} \!=\! 0$ based on the optimal correction factor vector $\boldsymbol{\alpha_{i}^{*}}\!=\!\{\alpha_{i}^{t*}, t \! \in \! \\ \{0, 1, \ldots, T\}\}$. Essentially, the strategy $R_{t}^{*}$ and $\boldsymbol{\alpha_{i}^{*}}$ are mutual optimal strategy for each sampled client $i$ and the central server. Thus, the optimal strategy profile $(\boldsymbol{R^{*}},\! \{\boldsymbol{\alpha_{i}^{*}}$,\! $i \!\in\! \{1,2, \ldots, N\}\})$ with $ \boldsymbol{R^{*}} \!\!=\! \{R_{t}^{*}, t \! \in \! \{1, 2, \ldots, T \}\}\!$ and $\boldsymbol{\alpha_{i}^{*}} \!=\! \{\alpha_{i}^{t*}, \! t \!\in\! \{0, 1, \ldots, \\ T\}\}$ of the two-stage Stackelberg game possesses a Stackelberg Nash Equilibrium.  \end{proof}

\begin{algorithm}[t]
\small
\caption{Privacy-Aware Client Sampling in FL}
\label{alg2}
\SetAlgoLined
\KwIn{$N$, $\tau$, $T$, $\phi(t)$, learning rate $\eta$, weight $\theta_{i}$, $\varphi_{i}$ and $\gamma$.}
\KwOut{The global model parameter $\boldsymbol{w(T)}$.}
\For{$t \leftarrow 0, 1, \ldots, T$}{
    The central server decides the optimal reward $R_{t}^{*}$ and samples a client subset $\mathcal{K}_{t}$ according to the probability vector $\boldsymbol{X_{t}}$ \tcp*[r]{Client Sampling} 
    Each client decides the optimal correction factor $\alpha_{i}^{t*}$\;
    \ForEach{sampled client}{
        Local training: $\ \boldsymbol{w_{i}(t \!+\! 1)} \!=\! \boldsymbol{w(t)} \!-\! \eta \nabla F_{i}(\boldsymbol{w(t)})$ \;
        Gradient Perturbation: $\nabla \widetilde{F}_{i}(\boldsymbol{w(t)}) \!=\! \nabla F_{i}(\boldsymbol{w(t)}) \!+\! \boldsymbol{n_{i}(t)}$ \;
    }
    The central server updates the global model with: $w(t + 1) = w(t) - \eta \sum_{i=1}^{K} \! \frac{\theta_{i}}{K x_{i}^{t}} \nabla \widetilde{F}_{i}(\boldsymbol{w(t)})$  \tcp*[r]{Model Aggregation}
}
\end{algorithm}

\subsection{Update of \texorpdfstring{$\phi(t)$}{} for Finalizing Strategy Profile}
Note that the mean-field term $\phi(t)$ is affected by the privacy budget $\rho_{i}^{t}$ of all clients at global iteration $t \!\in\! \{0, 1, \ldots, T\}$, which in return reciprocally impacts the determination of the value of $\rho_{i}^{t}$. In the following discussion, we delve into a rigorous analysis to ascertain the existence of a fixed point for $\phi(t)$ and employ an iterative algorithm to obtain the precise solution, thereby finalizing the above optimal strategy design.
\begin{theorem} \label{theorem_fixed_point}
There exists a fixed point for the mean-field term $\phi(t)$, which can be attained through Algorithm \ref{alg1}.
\end{theorem}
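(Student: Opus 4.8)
The plan is to recast the self-consistency requirement on $\phi(t)$ as a fixed point of a single operator acting on the entire mean-field trajectory, then establish existence via Brouwer's fixed-point theorem and certify that the iteration in Algorithm~\ref{alg1} converges to it via a contraction argument. First I would collect the trajectory $\boldsymbol{\phi} = (\phi(0), \phi(1), \ldots, \phi(T)) \in \mathbb{R}^{T+1}$ as the unknown and define a composite operator $\Psi$ as follows: given an input $\boldsymbol{\phi}$, (i) compute the optimal correction factors $\alpha_{i}^{t*}(\boldsymbol{\phi})$ from Theorem~\ref{theorem_optimal_alpha} and the optimal reward $R_{t}^{*}(\boldsymbol{\phi})$ from Lemma~\ref{lemma_optimal_reward}; (ii) propagate the privacy budgets forward through $\rho_{i}^{t+1} = (1-\alpha_{i}^{t*})\phi(t) + \alpha_{i}^{t*}\rho_{i}^{t}$ as in Eq.~(\ref{update_constraint_refined}); and (iii) re-average to form $[\Psi(\boldsymbol{\phi})](t) = \frac{1}{N}\sum_{i=1}^{N} \rho_{i}^{t}$. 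By construction, a fixed point $\boldsymbol{\phi} = \Psi(\boldsymbol{\phi})$ is exactly a mean-field profile consistent with its own induced best responses, which is precisely the object Algorithm~\ref{alg1} seeks.

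Next I would pin down the domain and the self-mapping property. Since every privacy budget satisfies $\rho_{i}^{t} \in [\rho_{L}, \rho_{H}]$ and each correction factor obeys $\alpha_{i}^{t} \in (0,1)$, the update in Eq.~(\ref{update_constraint_refined}) is a convex combination of $\phi(t)$ and $\rho_{i}^{t}$; hence if $\phi(t) \in [\rho_{L}, \rho_{H}]$ then $\rho_{i}^{t+1} \in [\rho_{L}, \rho_{H}]$, and averaging preserves the interval. Therefore $\Psi$ maps the compact convex set $\Omega = [\rho_{L}, \rho_{H}]^{T+1}$ into itself. The positivity $\rho_{L} > 0$ also guarantees that the sampling-probability denominators $1 - (1 - \tfrac{\rho_{i}^{t}}{N\phi(t)})^{K}$ appearing in $\alpha_{i}^{t*}$ and $R_{t}^{*}$ stay bounded away from zero on $\Omega$, so each intermediate map, and hence $\Psi$, is continuous. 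Brouwer's fixed-point theorem then yields at least one fixed point $\boldsymbol{\phi}^{*}$, establishing existence.

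Finally, to show the fixed point is actually attained by the iteration in Algorithm~\ref{alg1}, I would strengthen continuity to a contraction on $\Omega$: bound the Lipschitz constant of $\Psi$ and verify it is strictly below one, so that Banach's fixed-point theorem delivers uniqueness and geometric convergence of the iterates $\phi_{m}^{\texttt{est}}(t)$ to $\boldsymbol{\phi}^{*}$, matching the stopping criterion $\epsilon \le \epsilon_{0}$. The main obstacle is exactly this Lipschitz estimate, because $\alpha_{i}^{t*}$ and $R_{t}^{*}$ are intricate nonlinear (indeed non-polynomial) functions of $\boldsymbol{\phi}$ that couple forward-propagated budgets with backward-accumulated future terms. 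I would control it by separating the benign convex-combination part of the recursion, whose contraction factor is governed by $\max_{i,t}(1-\alpha_{i}^{t*})$, from the perturbation induced by the $\boldsymbol{\phi}$-dependence of $\alpha_{i}^{t*}$ and $R_{t}^{*}$, bounding the latter derivatives uniformly on $\Omega$ using $\rho_{L} > 0$, $\rho_{H} < \infty$, and the boundedness of $Q_{i}^{t}$, $M_{i}^{t}$, and $S(t)$; identifying the parameter regime in which the aggregate constant falls below one then closes the argument.
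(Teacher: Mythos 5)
Your overall architecture matches the paper's: both arguments establish existence via Brouwer's fixed-point theorem on a compact set of budget/mean-field trajectories and then argue convergence of the iteration in Algorithm~\ref{alg1} via a contraction (Banach/Cauchy-sequence) argument. Your existence half is complete and in one respect cleaner than the paper's: observing that Eq.~(\ref{update_constraint_refined}) is a convex combination of $\phi(t)$ and $\rho_{i}^{t}$, so that $[\rho_L,\rho_H]^{T+1}$ is invariant and $\rho_L>0$ keeps the denominators $1-(1-\rho_{i}^{t}/(N\phi(t)))^{K}$ away from zero, gives the self-map and continuity directly, whereas the paper instead writes out explicit interval bounds for each $\Theta_i^t$ before invoking Brouwer.

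The gap is in the convergence half. You correctly identify that the hard part is a Lipschitz bound on the composite operator $\Psi$, because $\alpha_i^{t*}$ and $R_t^*$ depend on $\boldsymbol{\phi}$ through the nonlinear expressions in Theorem~\ref{theorem_optimal_alpha} and Lemma~\ref{lemma_optimal_reward}, but you never produce the bound: the proposal ends with ``identifying the parameter regime in which the aggregate constant falls below one then closes the argument,'' which is a description of what remains to be done, not a proof that it can be done. As written, the claim that Algorithm~\ref{alg1} attains the fixed point is therefore unestablished. The paper closes this step differently: it measures distances directly on the budget recursion, treating the correction factors as exogenous quantities confined to $[\alpha_L,\alpha_H]\subset(0,1)$, and obtains an explicit contraction factor $\sqrt{\nu}=\max\{|\alpha_L\rho_L-\alpha_H\rho_H|,|\alpha_H\rho_L-\alpha_L\rho_H|\}/|\rho_L-\rho_H|<1$, from which the Cauchy-sequence/Banach conclusion follows. (Notably, the paper thereby sidesteps rather than resolves the $\boldsymbol{\phi}$-dependence you worry about; but it at least exhibits a concrete constant, whereas your version stops short of any constant at all.) To complete your route you would need either to carry out the uniform derivative bounds on $\alpha_i^{t*}(\boldsymbol{\phi})$ and $R_t^*(\boldsymbol{\phi})$ over $\Omega$ and verify the resulting aggregate Lipschitz constant is below one, or to adopt the paper's device of freezing the correction factors in a fixed subinterval of $(0,1)$ before estimating the contraction.
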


\begin{figure*}[b]
\centering
\vspace{-10pt}
\noindent\rule{\textwidth}{0.5pt}
\vspace{-10pt}
\begin{minipage}{1.0\textwidth}
\begin{align} 
\text{\small $\mathbb{E}[\|\boldsymbol{w(T)} \!-\! \boldsymbol{w^{*}}\|_{2}^{2}] $} \!\leq& \text{\small $(1 \!-\! \frac{\psi \eta [(N \!-\! 1)\rho_{H} \!+\! \rho_{L}]}{K \rho_{L}})^{T}\mathbb{E}[\|\boldsymbol{w(0)} \!-\! \boldsymbol{w^{*}}\|_{2}^{2}] \!+\! \sum\nolimits_{s=0}^{T-1} $} \text{\small $(1 \!-\! \frac{\psi \eta [(N \!-\! 1)\rho_{H} \!+\! \rho_{L}]}{K \rho_{L}})^{s}\{\frac{\beta D^{2} \eta^{3} [(N \!-\! 1)\rho_{H} \!+\! \rho_{L}]}{K \rho_{L}}$} \nonumber \\
&+ \text{\small $\frac{\eta^{2} [(N \!-\! 1)\rho_{H} \!+\! \rho_{L}]^{2}}{K^{2} \rho_{L}^{2}} (D^{2} \!+\! \frac{2 d K W^{2}}{\rho_{L} (\sum_{i=1}^{N} |\mathcal{D}_{i}|)^{2}}) \!+\! \frac{\beta \eta^{3} [(N \!-\! 1)\rho_{H} \!+\! \rho_{L}]^{3}}{K^{3} \rho_{L}^{3}} (D^{2} \!+\! \frac{2 d K W^{2}}{\rho_{L} (\sum_{i=1}^{N} |\mathcal{D}_{i}|)^{2}}) \}$} . \label{optimal_gap_convexity} \\
\text{\small $\mathbb{E}[\|\boldsymbol{w(T)} \!-\! \boldsymbol{w^{*}}\|_{2}^{2}] $} \leq&\text{\small $(1 \!+\! \frac{2 T \eta \|g^\top \|_{2} [(N \!-\! 1)\rho_{H} \!+\! \rho_{L}]}{K \rho_{L}})\mathbb{E}[\|\boldsymbol{w(0)} \!-\! \boldsymbol{w^{*}}\|_{2}^{2}] \!+\! \frac{T \beta \eta^{3} [(N \!-\! 1)\rho_{H} \!+\! \rho_{L}]^{3}}{K^{3} \rho_{L}^{3}}(D^{2} \!+\! \frac{2 d K W^{2}}{\rho_{L} (\sum_{i=1}^{N} |\mathcal{D}_{i}|)^{2}})$} \nonumber \\
&\text{\small $+\frac{T \eta^{2} [(N \!-\! 1)\rho_{H} \!+\! \rho_{L}]^{2}}{K^{2} \rho_{L}^{2}}((T \!-\! 1) \|g^\top\|_{2} D \!+\! D^{2} \!+\! \frac{2 d K W^{2}}{\rho_{L} (\sum_{i=1}^{N} |\mathcal{D}_{i}|)^{2}}) \!+\! \frac{T \beta D^{2} \eta^{3} [(N \!-\! 1)\rho_{H} \!+\! \rho_{L}]}{K \rho_{L}}$} . \label{optimal_gap_non_convexity}
\end{align}
\vspace{5pt}
\end{minipage}
\end{figure*}

We provide the detailed proof of Theorem \ref{theorem_fixed_point} in Appendix~\ref{proof_theorem_fixed_point} and conclude the offline pretraining process in Algorithm~\ref{alg1}, which iteratively calculates the fixed point of mean-field term $\phi(t)$, effectively degrading the high computational complexity to the linear scale, denoted as $O(TNLM)$, where $M$ represents the number of iterative rounds required to achieve the fixed point. At $t$-th global training iteration, with the initial values for $\phi_{0}(t)$, $\rho_{i}^{0}$, $R_{t}^{0}$ and $\alpha_{i}^{t}$, $t \!\in\! \{0,1, \ldots, T\}$, $i \!\in\! \{1,2,  \ldots, N\}$, we calculate the privacy budget $\rho_{i}^{t}$ of each client based on the derivations in Theorem~\ref{theorem_optimal_alpha}~and Eq.~(\ref{update_constraint_refined}), which allows for the iterative refinement of the mean-field term $\phi(t)$. The iteration continues until $\phi_{m+1}(t) \!\rightarrow\! \phi_{m}(t)$ within the preset error $\epsilon$, signifying the identification of the fixed point. Note that, during the online training process, the central server and clients engage in the Stackelberg game at each iteration by leveraging the mean-field term $\phi(t)$. Both parties perform the strategic interaction with the optimal strategy profile $(\boldsymbol{R^{*}}, \{\boldsymbol{\alpha_{i}^{*}}$, $i \!\in\!  \{1,2, \ldots, N\}\})$, accordingly enhancing the performance of the FL global model. The complete workflow of our proposed framework FedPCS is showcased in Algorithm~\ref{alg2}.

\begin{remark}
Prior to the formal FL training, the central server precomputes the mean-field term and disseminates it to the clients along with the initial global model $\boldsymbol{w(0)}$. Consequently, the mean-field terms are regarded as known functions from the clients' view. Moreover, the model information required for mean-field terms calculation can be obtained from the historical training process, thereby mitigating information exchange and reducing the risk of privacy leakage in the formal FL training.
\end{remark}

\section{Convergence Analysis with Dynamic Client Sampling Probabilities} \label{Convergence_Analysis}
In this section, we conduct rigorous theoretical analysis for the convergence upper bound, thus establishing a robust foundation for our proposed FedPCS. To evaluate the FL global model accuracy convergence, we analyze the expected optimality gap $\mathbb{E}[\|\boldsymbol{w(T)} \!-\! \boldsymbol{w^{*}}\|_{2}^{2}]$, which serves as a critical metric and has been comprehensively utilized for the FL model performance assessment \cite{ding2023incentive}. We consider Assumptions~\ref{assumption_convexity}-\ref{assumption_subgradient} for convex and non-convex scenarios, and suppose $i$-th client's local gradient $\nabla F_{i}(\boldsymbol{w})$ satisfies Assumption~\ref{assumption_bounded_gradients} as in \cite{luo2022tackling}. Then, we introduce the following theorem that provides an upper bound for the optimality gap of the global model trained with the partial participation and $\rho$-$z$CDP mechanism, ultimately affecting the training performance.
\begin{theorem} \label{theorem_model_gap}
(Model Optimality Gap) After $T$ iterations of FL model training, for the convex scenario, the expected optimality gap of the global model is upper bounded by Eq.~(\ref{optimal_gap_convexity}); for the non-convex global loss function, the expected optimality gap of the global model is upper bounded by Eq.~(\ref{optimal_gap_non_convexity}).
\end{theorem}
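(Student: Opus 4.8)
The plan is to derive a single-step recursion for the expected optimality gap and then unroll it over the $T$ global iterations, treating the convex and non-convex cases in parallel up to the point where the first-order term is bounded. I would start from the perturbed, importance-weighted update actually executed in Algorithm~\ref{alg2}, namely $\boldsymbol{w(t+1)} = \boldsymbol{w(t)} - \eta \boldsymbol{g(t)}$ with $\boldsymbol{g(t)} = \sum_{i=1}^{K} \frac{\theta_{i}}{K x_{i}^{t}}\big(\nabla F_{i}(\boldsymbol{w(t)}) + \boldsymbol{n_{i}(t)}\big)$, and expand
\[
\|\boldsymbol{w(t+1)} \!-\! \boldsymbol{w^{*}}\|_{2}^{2} = \|\boldsymbol{w(t)} \!-\! \boldsymbol{w^{*}}\|_{2}^{2} - 2\eta \langle \boldsymbol{g(t)}, \boldsymbol{w(t)} \!-\! \boldsymbol{w^{*}}\rangle + \eta^{2}\|\boldsymbol{g(t)}\|_{2}^{2}.
\]
Taking expectation over the independent client sampling and the zero-mean Gaussian perturbations $\boldsymbol{n_{i}(t)} \sim \mathcal{N}(0,\sigma_{i}^{2}(t))$ decouples the bound into a first-order inner-product term and a second-order squared-norm term.

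For the second-order term, I would split each summand into its gradient and noise parts. The noise part uses independence across clients together with $\mathbb{E}[\|\boldsymbol{n_{i}(t)}\|_{2}^{2}] = d\,\sigma_{i}^{2}(t) = \frac{2dW^{2}}{\rho_{i}^{t}|\mathcal{D}_{i}|^{2}}$ from Proposition~\ref{proposition_variance}, while the gradient part is controlled by $\|\nabla F_{i}\|_{2} \leq D$ (Assumption~\ref{assumption_bounded_gradients}) together with $\sum_{i}\theta_{i}=1$. The crucial quantitative step is bounding the importance weight: since $x_{i}^{t} = \frac{\rho_{i}^{t}}{\sum_{j}\rho_{j}^{t}}$ and $\rho_{i}^{t} \in [\rho_{L},\rho_{H}]$, we have $\frac{1}{K x_{i}^{t}} = \frac{\sum_{j}\rho_{j}^{t}}{K\rho_{i}^{t}} \leq \frac{(N-1)\rho_{H}+\rho_{L}}{K\rho_{L}}$, which is exactly the coefficient recurring throughout Eqs.~(\ref{optimal_gap_convexity})--(\ref{optimal_gap_non_convexity}). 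Combining this with $\theta_{i}^{2}\sigma_{i}^{2}(t) \leq \frac{2W^{2}}{\rho_{L}(\sum_{j}|\mathcal{D}_{j}|)^{2}}$ produces the noise term $\frac{2dKW^{2}}{\rho_{L}(\sum_{i}|\mathcal{D}_{i}|)^{2}}$ and the $D^{2}$ gradient term carrying the squared coefficient $\frac{[(N-1)\rho_{H}+\rho_{L}]^{2}}{K^{2}\rho_{L}^{2}}$; re-expanding the gradient difference via $\beta$-smoothness (Assumption~\ref{assumption_smoothness}) is what injects the higher-order $\eta^{3}$ contributions and the cubic coefficient $\frac{[(N-1)\rho_{H}+\rho_{L}]^{3}}{K^{3}\rho_{L}^{3}}$.

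The two scenarios diverge at the first-order term. In the convex case I would invoke $\psi$-strong convexity (Assumption~\ref{assumption_convexity}) to lower-bound $\langle \nabla F(\boldsymbol{w(t)}), \boldsymbol{w(t)}-\boldsymbol{w^{*}}\rangle$ and convert the cross term into a contraction, yielding the factor $(1-\frac{\psi\eta[(N-1)\rho_{H}+\rho_{L}]}{K\rho_{L}})$. In the non-convex case the subgradient inequality (Assumption~\ref{assumption_subgradient}) only gives $F(\boldsymbol{w(t)})-F(\boldsymbol{w^{*}}) \geq g^{\top}(\boldsymbol{w(t)}-\boldsymbol{w^{*}})$, so the cross term cannot be made contractive; bounding it via Cauchy--Schwarz with $\|g^{\top}\|_{2}$ and the bounded gradient $D$ produces instead the expansive factor $(1+\frac{2T\eta\|g^{\top}\|_{2}[(N-1)\rho_{H}+\rho_{L}]}{K\rho_{L}})$ and the accumulating $(T-1)\|g^{\top}\|_{2}D$ term. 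In either case the one-step bound takes the form $\mathbb{E}[\|\boldsymbol{w(t+1)}-\boldsymbol{w^{*}}\|_{2}^{2}] \leq A\,\mathbb{E}[\|\boldsymbol{w(t)}-\boldsymbol{w^{*}}\|_{2}^{2}] + B$.

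Finally, I would unroll this recursion from $t=0$ to $T$: in the convex case the contraction $A<1$ gives the geometric factor $A^{T}$ on the initial gap plus the geometric sum $\sum_{s=0}^{T-1}A^{s}B$, reproducing Eq.~(\ref{optimal_gap_convexity}); in the non-convex case the per-step constants accumulate linearly over $T$ steps, reproducing Eq.~(\ref{optimal_gap_non_convexity}). The main obstacle I anticipate is the bookkeeping in the second-order and cross terms: routing the $\beta$-smoothness re-expansion so that the cubic coefficient and the mixed $\eta^{2}/\eta^{3}$ terms land precisely, and---in the non-convex case---ensuring the subgradient cross term is accumulated consistently across all $T$ iterations rather than contracted, which is what turns the contraction factor into the $(1+\,\cdot\,T)$ expansion.
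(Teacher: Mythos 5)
Your proposal is correct and follows essentially the same route as the paper's proof: expand $\|\boldsymbol{w(t+1)}-\boldsymbol{w^{*}}\|_{2}^{2}$ around the perturbed importance-weighted update, kill the noise cross term by zero-mean, bound the second-order term via Jensen, the gradient bound $D$, Proposition~\ref{proposition_variance}, and the sampling-probability bound $\frac{1}{Kx_{i}^{t}}\leq\frac{(N-1)\rho_{H}+\rho_{L}}{K\rho_{L}}$, inject the $\eta^{3}$ contributions through the $\beta$-smoothness control of the local--global deviation, then split into a strong-convexity contraction versus a subgradient/Cauchy--Schwarz accumulation before unrolling. The only cosmetic difference is that the paper routes the convex cross term through $F(\boldsymbol{w(t)})-F(\boldsymbol{w^{*}})$ and an auxiliary bound on $\sum_{i}\theta_{i}\|\boldsymbol{w_{i}(t)}-\boldsymbol{w(t)}\|_{2}^{2}$ before applying Assumption~\ref{assumption_convexity}, whereas you apply strong convexity to the inner product directly; the resulting recursion is the same.
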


The detailed proof of Theorem~\ref{theorem_model_gap} is presented in Appendix~\ref{proof_theorem_model_gap}. Theorem~\ref{theorem_model_gap} demonstrates that, for our proposed FedPCS algorithm, the upper bound is affected by the ratio $\frac{\rho_{H}}{\rho_{L}}$. The greater the ratio $\frac{\rho_{H}}{\rho_{L}}$, the smaller the upper bound of the optimal gap in Eqs.~(\ref{optimal_gap_convexity})-(\ref{optimal_gap_non_convexity}). Subsequently, based on Assumption~\ref{assumption_moment}, we proceed to formulate a rigorous convergence upper bound for the divergence of the global loss function between the consecutive iterations, which is summarized in the following proposition.
\begin{proposition} \label{proposition_difference}
(One Round Progress) Considering the~artificial Gaussian noise and the partial client participation scenario, the expected convergent upper bound of the global loss function for any consecutive global training iterations is given by:
\begin{align} \label{difference}
&\mathbb{E}[F(\boldsymbol{w(t)}) \!-\! F(\boldsymbol{w(t \!-\! 1)})] \nonumber \\
\leq& - \frac{\eta \kappa}{2} \|\nabla \! F(\boldsymbol{w(t \!-\! 1)})\|_{2}^{2} \!+\! \frac{d \beta \eta^{2}}{2} \sum\nolimits_{i=1}^{N} \! \theta_{i}^{2} \sigma_{i}^{2}(t \!-\! 1) \nonumber \\
&+ \frac{\beta \eta}{2 K}  \sum\nolimits_{i=1}^{N} \! \left[ \frac{\theta_{i}^{2}}{x_{i}^{t}} (D^{2} \!+\! d \sigma_{i}^{2}(t \!-\! 1))\right]  \!+\! \frac{M \beta \eta^{2} N}{2} .
\end{align}
\end{proposition}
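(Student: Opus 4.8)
The plan is to start from the $\beta$-Lipschitz smoothness descent inequality in Assumption~\ref{assumption_smoothness}, applied to consecutive iterates $\boldsymbol{w(t)}$ and $\boldsymbol{w(t-1)}$, and then take expectations over both the stochastic gradient randomness and the client-sampling randomness. Writing the aggregated update as in Eq.~(\ref{sgd_sample}) with the perturbed gradients $\nabla\widetilde{F}_i = \nabla F_i + \boldsymbol{n_i}$, I would substitute $\boldsymbol{w(t)} - \boldsymbol{w(t-1)} = -\eta \sum_{i=1}^{K} \frac{\theta_i}{K x_i^{t}} \nabla\widetilde{F}_i(\boldsymbol{w(t-1)})$ into the quadratic descent bound. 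This yields a linear term $\nabla F(\boldsymbol{w(t-1)})^{\top}(\boldsymbol{w(t)}-\boldsymbol{w(t-1)})$ and a quadratic term $\frac{\beta}{2}\|\boldsymbol{w(t)}-\boldsymbol{w(t-1)}\|_2^2$.

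For the linear term, I would use the first-moment bound in Assumption~\ref{assumption_moment}(i), namely $\nabla F(\boldsymbol{w})^{\top}\mathbb{E}[\nabla F_i(\boldsymbol{w})] \geq \kappa\|\nabla F(\boldsymbol{w})\|_2^2$, together with the fact that the Gaussian noise $\boldsymbol{n_i}$ is zero-mean, so it drops out of the linear term in expectation. The key combinatorial step is that the importance-sampling weights $\frac{\theta_i}{K x_i^{t}}$ are designed so that the expected aggregated gradient is unbiased (up to the $\theta_i$ weighting); sampling $K$ clients without replacement with probabilities $x_i^t$ and reweighting by $1/(Kx_i^t)$ recovers $\sum_i \theta_i \nabla F_i$ in expectation, which produces the $-\frac{\eta\kappa}{2}\|\nabla F(\boldsymbol{w(t-1)})\|_2^2$ descent term.

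For the quadratic term, I would expand $\|\sum_i \frac{\theta_i}{Kx_i^t}\nabla\widetilde{F}_i\|_2^2$ and separate it into a gradient contribution and a noise contribution. The noise variance $\mathbb{E}\|\boldsymbol{n_i(t-1)}\|_2^2 = d\,\sigma_i^2(t-1)$ (using the dimension $d$ and Proposition~\ref{proposition_variance}) contributes the terms $\frac{d\beta\eta^2}{2}\sum_i \theta_i^2\sigma_i^2(t-1)$ and $\frac{\beta\eta}{2K}\sum_i \frac{\theta_i^2}{x_i^t} d\,\sigma_i^2(t-1)$, arising respectively from the diagonal self-terms and the sampling-variance terms. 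The bounded-gradient Assumption~\ref{assumption_bounded_gradients} ($\|\nabla F_i\|_2\leq D$) bounds the deterministic gradient magnitude by $D^2$, giving the $\frac{\beta\eta}{2K}\sum_i \frac{\theta_i^2}{x_i^t}D^2$ piece, and the second-moment bound in Assumption~\ref{assumption_moment}(ii) with constant $M$ contributes the residual $\frac{M\beta\eta^2 N}{2}$ term after summing the variance bounds across the $N$ clients.

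The hard part will be carefully handling the cross-terms and the sampling variance when $K$ clients are drawn \emph{without replacement}: the naive independence argument fails, so I would need to bound the second moment of the reweighted sum using the standard Horvitz--Thompson-type estimator analysis, controlling the off-diagonal covariance contributions and showing they are dominated by the diagonal variance terms already accounted for. A secondary subtlety is ensuring the noise and sampling randomness are treated as independent sources so their contributions to the quadratic term add cleanly without introducing spurious cross-terms; I would condition on the sampling outcome first, take expectation over the zero-mean noise to eliminate linear noise--gradient cross-terms, and only then average over the sampling distribution to assemble the four terms in Eq.~(\ref{difference}).
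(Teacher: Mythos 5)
Your route is genuinely different from the paper's. The paper never expands the second moment of the importance-weighted sampled update directly. Instead it splits the analysis in two: (i) a one-round descent bound under \emph{full} participation with noise, which uses smoothness, the zero-mean noise, and Assumption~\ref{assumption_moment} to produce the $-\frac{\eta\kappa}{2}\|\nabla F\|_2^2$, $\frac{d\beta\eta^2}{2}\sum_i\theta_i^2\sigma_i^2$, and $\frac{M\beta\eta^2 N}{2}$ terms; and (ii) a separate ``virtual global model'' argument, where the gap between the actually aggregated model $\boldsymbol{w(t)}$ and the full-participation aggregate $\boldsymbol{\hat w(t)}$ is bounded by a cited sampling-variance lemma, $\mathbb{E}\|\boldsymbol{w(t)}-\boldsymbol{\hat w(t)}\|_2^2\leq\frac{\eta}{K}\sum_i\frac{\theta_i^2}{x_i^t}\mathbb{E}\|\nabla\widetilde F_i\|_2^2$, and then converted to a function-value penalty via smoothness, yielding the $\frac{\beta\eta}{2K}\sum_i\frac{\theta_i^2}{x_i^t}(D^2+d\sigma_i^2)$ term. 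This decomposition is what lets the paper avoid entirely the Horvitz--Thompson-type covariance analysis for without-replacement sampling that you flag as your ``hard part.''

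There are two concrete problems with your plan. First, if you substitute the sampled update into the quadratic term of the smoothness inequality, every contribution from that term carries a factor $\frac{\beta\eta^2}{2}$; expanding $\|\sum_{i}\frac{\theta_i}{Kx_i^t}\nabla\widetilde F_i\|_2^2$ can therefore only produce sampling-variance contributions scaled by $\eta^2$, whereas the stated bound has $\frac{\beta\eta}{2K}\sum_i\frac{\theta_i^2}{x_i^t}(D^2+d\sigma_i^2)$ with a \emph{single} power of $\eta$. That coefficient originates in the cited lemma's bound on $\mathbb{E}\|\boldsymbol{w}-\boldsymbol{\hat w}\|_2^2$, not from the update's second moment, so your expansion cannot terminate at Eq.~(\ref{difference}) as written. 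Second, your linear-term argument needs the inclusion probability of client $i$ under $K$-draws-without-replacement to equal exactly $Kx_i^t$ so that the $1/(Kx_i^t)$ reweighting is unbiased; this holds only for specific sampling designs and is precisely the kind of claim the paper's proof never has to establish because the descent term is extracted from the virtual full-participation model. If you want to salvage the direct approach you would need to (a) prove the unbiasedness for the sampling scheme actually used, (b) carry out the off-diagonal covariance control, and (c) accept that the resulting bound will differ from Eq.~(\ref{difference}) in the power of $\eta$ on the sampling-variance term.
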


The detailed proof of Proposition~\ref{proposition_difference} is provided in Appendix~\ref{proof_proposition_difference}. As delineated by Eq.~(\ref{difference}), it demonstrates that the one-round convergence analysis for our proposed algorithm FedPCS inherently relates to the $\ell_{2}$-norm of the gradient of the global loss function at iteration $t-1$, providing a reliable convergence upper bound. Leveraging the foundational results of Proposition~\ref{proposition_difference}, we extend our analysis to rigorously investigate the upper bounds on the convergence rate and convergence error, as given in the following theorem.
\begin{theorem}\label{theorem_dp_convergence} 
(Convergence Rate \& Convergence Error) With the learning rate satisfying $\eta \!\leq\! \frac{\kappa}{\beta (M_{V}+\kappa_{G}^{2})}$, the gap between the expected global loss function and the optimal global loss function (i.e., convergence rate) satisfies:
\begin{align} \label{convergence_rate}
& \mathbb{E}[F(\boldsymbol{w(T)})] \!-\! F(\boldsymbol{w^{*}}) \nonumber \\
\leq&\ (1 \!-\! \mu \eta \kappa)^{T} \! (\mathbb{E}[F(\boldsymbol{w(0)})] \!-\! F(\boldsymbol{w^{*}})) \!+\! \frac{\beta \eta}{2} \sum\nolimits_{t=0}^{T-1} \! (1 \!-\! \mu \eta \kappa)^{t}  \nonumber \\
& \times\!\left[M \! N \! \eta \!+\! \sum\nolimits_{i=1}^{N} \! \frac{\theta_{i}^{2}D^{2}}{K x_{i}^{t}} \!+\! (\frac{1}{K x_{i}^{t}} \!+\! \eta) d \theta_{i}^{2} \sigma_{i}^{2}(t) \right].
\end{align}
And the expected average-squared $\ell_{2}$-norm of the gradient of the global loss function (i.e., convergence error) satisfies:
\begin{align} \label{convergence_error}
&\frac{1}{T} \sum\nolimits_{t=0}^{T-1} \mathbb{E}[\|\nabla \! F(\boldsymbol{w(t)})\|_{2}^{2}]  \nonumber \\
\leq&\ \frac{2}{\eta \kappa T}(\mathbb{E}[F(\boldsymbol{w(0)})] \!-\! \mathbb{E}[F(\boldsymbol{w^{*}})]) + \frac{\beta \eta}{2T}\!\sum\nolimits_{t=0}^{T-1}  \left[ \frac{M \beta \eta^{2} N}{2} \right. \nonumber \\
&+ \left. \sum\nolimits_{i=1}^{N} \frac{\theta_{i}^{2}}{K x_{i}^{t}} (D^{2} \!+\! d \sigma_{i}^{2}(t)) \!+\! d \eta \theta_{i}^{2} \sigma_{i}^{2}(t) \right] .
\end{align}
\end{theorem}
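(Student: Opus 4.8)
The plan is to prove Theorem~\ref{theorem_dp_convergence} by building directly on the one-round descent inequality established in Proposition~\ref{proposition_difference}, then telescoping across iterations using the Polyak-\L{}ojasiewicz condition from Lemma~\ref{lemma_PL_inequality}. First I would recall Eq.~(\ref{difference}), which bounds $\mathbb{E}[F(\boldsymbol{w(t)}) - F(\boldsymbol{w(t-1)})]$ in terms of $-\frac{\eta\kappa}{2}\|\nabla F(\boldsymbol{w(t-1)})\|_2^2$ plus noise and sampling-variance terms. The role of the learning-rate restriction $\eta \leq \frac{\kappa}{\beta(M_V + \kappa_G^2)}$ is to absorb the second-moment growth term $M_V\|\nabla F\|^2$ from Assumption~\ref{assumption_moment}(ii) and the term $\kappa_G^2\|\nabla F\|^2$ from Assumption~\ref{assumption_moment}(i), guaranteeing that the coefficient multiplying $\|\nabla F(\boldsymbol{w(t-1)})\|_2^2$ remains negative (effectively $-\frac{\eta\kappa}{2}$) after collecting all gradient-norm contributions. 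I would make this cancellation explicit before proceeding.

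For the convergence-rate bound Eq.~(\ref{convergence_rate}), the key step is to apply the PL inequality, which gives $\|\nabla F(\boldsymbol{w(t-1)})\|_2^2 \geq 2\mu(F(\boldsymbol{w(t-1)}) - F(\boldsymbol{w^*}))$. Substituting this into the descent inequality converts the recursion into the standard contraction form
\begin{align}
\mathbb{E}[F(\boldsymbol{w(t)})] - F(\boldsymbol{w^*}) \leq (1 - \mu\eta\kappa)\big(\mathbb{E}[F(\boldsymbol{w(t-1)})] - F(\boldsymbol{w^*})\big) + B_{t-1}, \nonumber
\end{align}
where $B_{t-1}$ collects the per-round noise and sampling terms $\frac{\beta\eta}{2}[MN\eta + \sum_i \frac{\theta_i^2 D^2}{K x_i^t} + (\frac{1}{K x_i^t} + \eta)d\theta_i^2\sigma_i^2(t)]$. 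Unrolling this recursion from $t=0$ to $T$ yields the geometric factor $(1-\mu\eta\kappa)^T$ on the initial gap and the weighted sum $\sum_{t=0}^{T-1}(1-\mu\eta\kappa)^t B_t$, which is exactly Eq.~(\ref{convergence_rate}). The substitution $\sigma_i^2(t) = \frac{2W^2}{\rho_i^t|\mathcal{D}_i|^2}$ from Proposition~\ref{proposition_variance} can be kept implicit or expanded, matching the theorem's stated form.

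For the convergence-error bound Eq.~(\ref{convergence_error}), which targets the non-convex regime, I would instead sum the descent inequality directly without invoking PL. Rearranging Eq.~(\ref{difference}) isolates $\frac{\eta\kappa}{2}\|\nabla F(\boldsymbol{w(t-1)})\|_2^2$ on one side; summing over $t = 1, \ldots, T$ telescopes the left-hand difference into $\mathbb{E}[F(\boldsymbol{w(0)})] - \mathbb{E}[F(\boldsymbol{w(T)})]$, which I bound above by $\mathbb{E}[F(\boldsymbol{w(0)})] - F(\boldsymbol{w^*})$ using $F(\boldsymbol{w(T)}) \geq F(\boldsymbol{w^*})$ from Assumption~\ref{assumption_moment}. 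Dividing through by $\frac{\eta\kappa T}{2}$ produces the leading factor $\frac{2}{\eta\kappa T}$ and the averaged sum of noise terms, giving Eq.~(\ref{convergence_error}).

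The main obstacle I anticipate is the bookkeeping in the learning-rate step: verifying that $\eta \leq \frac{\kappa}{\beta(M_V + \kappa_G^2)}$ precisely cancels the positive gradient-norm coefficients arising from the smoothness expansion and the second-moment bound in Proposition~\ref{proposition_difference}, so that the net gradient coefficient is sign-definite. This requires tracking how $\kappa_G$ and $M_V$ enter the one-round bound (they appear in the derivation of Proposition~\ref{proposition_difference} even though Eq.~(\ref{difference}) is stated only with $\kappa$, $M$, and $\beta$), and confirming the threshold is tight enough for the contraction factor $1 - \mu\eta\kappa$ to lie in $(0,1)$. The telescoping and PL-substitution steps themselves are routine once this coefficient control is secured.
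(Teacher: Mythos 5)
Your proposal is correct and follows essentially the same route as the paper's proof: both start from the one-round bound of Proposition~\ref{proposition_difference}, apply the Polyak-\L{}ojasiewicz inequality to obtain the contraction recursion with factor $(1-\mu\eta\kappa)$ and unroll it for Eq.~(\ref{convergence_rate}), then rearrange and average the same one-round bound (telescoping the loss differences) for Eq.~(\ref{convergence_error}). Your observation that the learning-rate condition is actually consumed inside the derivation of Proposition~\ref{proposition_difference}, where the $(M_V+\kappa_G^2)\|\nabla F\|^2$ term is absorbed to leave the coefficient $-\frac{\eta\kappa}{2}$, is accurate and matches the paper.
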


The detailed proof of Theorem \ref{theorem_dp_convergence} is provided in Appendix~\ref{proof_theorem_dp_convergence}. It is evident that the upper bound of the convergence rate and convergence error both converge to the linear combination of the expected global loss function of the initial model $\boldsymbol{w(0)}$, the optimal model $\boldsymbol{w^{*}}$, and the predetermined parameters, which facilitates the selection of hyperparameters to enhance the convergence property. 
\begin{remark}
(Convex \& Non-Convex Applicability) The~above convergence analysis for our proposed framework FedPCS exclusively relies on the $\beta$-Lipschitz smoothness assumption, which indicates that the effectiveness of our results transcends the convexity paradigm of the global loss function. Specifically, through rigorous derivation, we demonstrate that the convergence upper bounds of FedPCS are applicable for both convex and non-convex domains, which broadens the applicability of our framework as a robust mechanism for addressing distributed optimization challenges. 
\end{remark}

\section{Efficiency Analysis of Equilibrium Strategy via Price of Anarchy Analysis} \label{poa_analysis}
In previous sections, we demonstrate the optimal strategy profile $(\boldsymbol{R^{*}}, \{\boldsymbol{\alpha_{i}^{*}}$, $i \in \{1,2, \ldots, N\}\})$ formulate the Stackelberg Nash Equilibrium (SNE). Subsequently, we further explore the efficiency of the Stackelberg Nash Equilibrium demonstrated in Theorem~\ref{SNE}, which provides profound insights into our proposed two-stage Stackelberg game approach in FL systems.  

\subsection{Social Welfare and Socially Optimal Strategy}

Firstly, we introduce the socially optimal strategy in the FL systems. Each client is tasked with determining the optimal correction factor $\alpha_{i}^{t}$, which will directly affect the privacy budget $\rho_{i}^{t}$ as shown in Eq.~(\ref{update_constraint_refined}). Given the allocated reward $R_{t}$, the long-term objective function under the socially optimal strategy (i.e., social welfare function) is defined as the sum of the profits of each client during total time horizons \cite{aguiar2021network, lu2021non}, thereby, the social welfare function is formulated~as:\footnote{In the following analysis, we consider the optimization problem on the privacy budget $\rho_{i}^{t}$ for convenience, which is equivalent to directly analyze on the correction factor $\alpha_{i}^{t}$ defined in Theorem \ref{theorem_optimal_alpha} according to Eq.~(\ref{update_constraint_refined}).} 
\begin{align} \label{social_welfare}
SW = \max_{\rho_{i}^{t}} \sum\nolimits_{t=0}^{T} \sum\nolimits_{i=1}^{N} \left[\rho_{i}^{t} R_{t} - \varphi_{i} (\rho_{i}^{t})^{2} \right] .
\end{align}

From Eq.~(\ref{social_welfare}), under the socially optimal strategy, the social welfare depends on the optimal privacy budgets of each client over time. Given that Eq.~(\ref{social_welfare}) is convex, consequently, we take the first-order and second-order of Eq.~(\ref{social_welfare}) regarding the privacy budget $\rho_{i}^{t}$ for each client $i \!\in\! \{1, 2, \ldots, N\}$ and obtain the following derivations:
\begin{align} \label{social_welfare_first_and_second_order}
\frac{\partial SW(\rho_{i}^{t})}{\partial \rho_{i}^{t}} = R_{t} - 2 \varphi_{i} \rho_{i}^{t},\ \frac{\partial^{2} SW(\rho_{i}^{t})}{\partial(\rho_{i}^{t})^{2}} = - 2 \varphi_{i} < 0.  
\end{align}

As the inequality $\partial^{2} SW(\rho_{i}^{t})/\partial(\rho_{i}^{t})^{2} \!<\! 0$ holds in Eq.~(\ref{social_welfare_first_and_second_order}), through calculating the first derivative of Eq.~(\ref{social_welfare}) regarding $\rho_{i}^{t}$ equals to 0, i.e., $\partial SW(\rho_{i}^{t})/\partial \rho_{i}^{t} = 0$, we can obtain the optimal privacy budget $\rho_{i}^{t}$ under social optimum as $\rho_{i}^{t*} \!=\! \frac{R_{t}}{2 \varphi_{i}}$. Consequently, the optimal social welfare is derived as follows:
\begin{align} \label{opt_social_welfare}
SW^{(opt)} = \frac{1}{4} \sum\nolimits_{t=0}^{T} R_{t}^{2} \sum\nolimits_{i=1}^{N} \frac{1}{\varphi_{i}}.
\end{align}

Following the definition of the price of anarchy (PoA) \cite{papadimitriou2001algorithms} and the social welfare defined in Eq.~(\ref{social_welfare}), we quantify the efficiency gap between the worst Stackelberg Nash equilibrium and the socially optimal strategies follows.  
\begin{definition} \label{definition_PoA}
(Price of Anarchy, PoA) The Price of Anarchy (PoA) is defined as the ratio of the social welfare achieved under the socially optimal strategy to that achieved by the worst-case Stackelberg Nash Equilibrium, i.e., 
\begin{align} \label{PoA}
PoA = \frac{SW^{(opt)}}{\min_{\substack{\rho_{i}^{t*}, i \in \{1, 2, \ldots, N\}, t \in \{0, 1, \ldots, T\}}} SW^{(Nash)}},
\end{align}
and a higher PoA indicates a greater inefficiency in the system. It is obvious that $\text{PoA} \geq 1$. 
\end{definition}

\subsection{Comparing Random Sampling to Social Optimum}
In this part, we explore the efficiency gap between the Nash Equilibrium achieved by the socially optimal strategy and the random sampling strategy. We prove that the gap is solely determined by the lower boundary of the privacy budgets and present the lower bound of PoA in the following proposition.
\begin{proposition} \label{proposition_random_sampling_poa}
As compared to the optimal social welfare in Eq.~(\ref{opt_social_welfare}), the price of anarchy under the random strategy achieves $\text{PoA}^{\text{(rand)}} \geq \frac{\sum\nolimits_{t=0}^{T} R_{t} \sum\nolimits_{i=1}^{N} \frac{1}{\varphi_{i}}}{4 N \rho_{L}(T+1)}$, which can be arbitrarily large when the lower bound of privacy budget $\rho_{L} \rightarrow 0$.
\end{proposition}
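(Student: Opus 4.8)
The plan is to exhibit the Nash equilibrium induced by random sampling, upper-bound the social welfare it attains, and divide the fixed optimum $SW^{(opt)}$ by this bound; since the denominator of Definition~\ref{definition_PoA} is the smallest Nash welfare, any upper bound on a valid equilibrium welfare immediately lower-bounds PoA. The first and crucial step is to determine the clients' equilibrium privacy budgets. Under uniform random sampling the probability of being selected is independent of $\rho_i^t$, so the marginal selection benefit that the privacy-aware scheme grants for a larger budget vanishes; an egocentric client is then left with a utility that is decreasing in its privacy-exposure cost, and its best response drives the budget to the lower boundary $\rho_i^{t*}=\rho_L$ for all $i\in\{1,\ldots,N\}$ and $t\in\{0,\ldots,T\}$. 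This collapsed profile is a Nash equilibrium, so the $\min$ in Eq.~(\ref{PoA}) is at most the social welfare evaluated there.

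I would then substitute $\rho_i^{t*}=\rho_L$ into the welfare functional of Eq.~(\ref{social_welfare}), obtaining $SW^{(\text{Nash})}=\sum_{t=0}^{T}\sum_{i=1}^{N}\left[\rho_L R_t-\varphi_i\rho_L^{2}\right]$, and discard the nonnegative privacy-cost term $\varphi_i\rho_L^{2}$ to reach the clean estimate $SW^{(\text{Nash})}\le N\rho_L\sum_{t=0}^{T}R_t$. Dividing the optimal welfare $SW^{(opt)}=\tfrac14\sum_{t=0}^{T}R_t^{2}\sum_{i=1}^{N}\tfrac{1}{\varphi_i}$ from Eq.~(\ref{opt_social_welfare}) by this upper bound yields the intermediate inequality $\text{PoA}^{(\text{rand})}\ge\frac{\sum_{t=0}^{T}R_t^{2}\sum_{i=1}^{N}1/\varphi_i}{4N\rho_L\sum_{t=0}^{T}R_t}$.

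To match the stated form I would apply the Cauchy--Schwarz inequality $\left(\sum_{t=0}^{T}R_t\right)^{2}\le(T+1)\sum_{t=0}^{T}R_t^{2}$, equivalently $\frac{\sum_{t=0}^{T}R_t^{2}}{\sum_{t=0}^{T}R_t}\ge\frac{\sum_{t=0}^{T}R_t}{T+1}$, which downgrades the quadratic numerator to the linear one and gives $\text{PoA}^{(\text{rand})}\ge\frac{\sum_{t=0}^{T}R_t\sum_{i=1}^{N}1/\varphi_i}{4N\rho_L(T+1)}$; letting $\rho_L\to0$ then sends the right-hand side to $+\infty$. I expect the main obstacle to be the equilibrium characterization of the first paragraph, namely rigorously showing that severing the dependence of the sampling probability on $\rho_i^t$ eliminates the incentive to raise the budget and leaves the privacy-cost term to dominate the best-response analysis, forcing $\rho_i^{t*}=\rho_L$; the welfare substitution, the discard of the quadratic term, and the Cauchy--Schwarz step are then routine.
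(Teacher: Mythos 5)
Your proposal matches the paper's own proof essentially step for step: both identify the all-$\rho_L$ profile as the worst-case equilibrium under random sampling (since selection probability no longer rewards a larger budget), both drop the nonnegative quadratic cost term to bound the Nash welfare by $N\rho_L\sum_{t}R_t$, and both finish with the same Cauchy--Schwarz reduction from $\sum_t R_t^2$ to $\sum_t R_t/(T+1)$. Your explicit remark that an upper bound on any one equilibrium's welfare suffices to lower-bound the PoA is a slightly cleaner framing of what the paper does implicitly, but the argument is the same.
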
 
\begin{proof}
We consider the Stackelberg Nash Equilibrium under the random sampling strategy. Generally, the participants in $\rho$-$z$CDP based FL systems are perceived as egocentric clients and selfishly determine their Privacy Protection Level (PPL), which is negatively correlated with the privacy budget \cite{xu2023personalized}, thereby maximizing their profits and preserving private local data \cite{hu2022incentive}. Without a well-designed monetary mechanism, in the worst case, egocentric clients may gradually decrease the privacy budget $\rho_{i}^{t}$ to the acceptable lowest boundary $\rho_{L}$ by dynamically adjusting the correction factor $\alpha_{i}^{t}$. Thus, from this insight, the social welfare under the Nash Equilibrium of the random sampling strategy is given by: 
\begin{align} \label{social_welfare_random}
SW^{(rand)} &= \sum\nolimits_{t=0}^{T} \sum\nolimits_{i=1}^{N} \left( \rho_{L} R_{t} - \varphi_{i} \rho_{L}^{2} \right) \nonumber \\
&= N\rho_{L} \sum\nolimits_{t=0}^{T} R_{t} - (T+1) \rho_{L}^{2} \sum\nolimits_{i=1}^{N} \varphi_{i} .
\end{align}

Subsequently, according to Definition \ref{definition_PoA} and the social~welfare derived in Eq.~(\ref{social_welfare_random}), the price of anarchy under the random sampling strategy is formulated as follows
\begin{align}
PoA^{(rand)} \!&=\! \frac{\frac{1}{4} \sum\nolimits_{t=0}^{T} R_{t}^{2} \sum\nolimits_{i=1}^{N} \frac{1}{\varphi_{i}}}{N\rho_{L} \sum\nolimits_{t=0}^{T} R_{t} - (T + 1) \rho_{L}^{2} \sum\nolimits_{i=1}^{N} \varphi_{i}} \nonumber \\
&\geq\! \frac{\sum\nolimits_{t=0}^{T} R_{t}^{2} \sum\nolimits_{i=1}^{N} \frac{1}{\varphi_{i}}}{4 N\rho_{L} \sum\nolimits_{t=0}^{T} R_{t}} \!\stackrel{(a)}{\geq}\! \frac{\sum\nolimits_{t=0}^{T} R_{t} \sum\nolimits_{i=1}^{N} \frac{1}{\varphi_{i}}}{4 N \rho_{L} (T \!+\! 1)}, 
\end{align}
where (a) follows Cauchy-Schwarz Inequality. Further, from the clients point, the allocated reward $R_{t}$, $t \in \{0, 1, \ldots, T\}$ is known parameters.  \end{proof}

According to Proposition \ref{proposition_random_sampling_poa}, the random sampling policy performs worse with lower privacy budget $\rho_{L}$. As $\rho_{L} \!\rightarrow\! 0$, $\text{PoA}^{\text{(rand)}}$ approaches infinity, and the learning efficiency of the FL systems becomes arbitrarily bad. Thus, it is critical to design an efficient incentive mechanism to enhance social~welfare.

\subsection{Comparing Privacy-aware Sampling to Social Optimum}
To motivate selfish clients to contribute their private local data and achieve a relatively high client Privacy Protection Level (PPL) simultaneously, we propose a privacy-aware client sampling strategy as shown in Eq.~(\ref{sgd_sample}) that successfully avoids the worst cases of $\text{PoA}^{\text{(rand)}}$ in Proposition \ref{proposition_random_sampling_poa}. In the following, we demonstrate how our strategy effectively bounds the PoA.
\begin{theorem} \label{theorem_privacy_poa}
Our privacy-aware sampling strategy in Eq.~(\ref{sgd_sample}) with the Stackelberg game-based incentive mechanism design achieves $\text{PoA}^{\text{(pri)}} \leq \frac{R_{\text{max}}}{2 N} \sum\nolimits_{i=1}^{N} \frac{1}{\varphi_{i}}$ when the upper boundary of the privacy budget $\rho_{H} \rightarrow + \infty$.
\end{theorem}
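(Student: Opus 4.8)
The plan is to upper-bound $\text{PoA}^{\text{(pri)}}$ by lower-bounding the social welfare attained at the Stackelberg Nash Equilibrium under our privacy-aware sampling strategy, then comparing it to $SW^{(opt)}$ in Eq.~(\ref{opt_social_welfare}). First I would observe that, in contrast to the random sampling case of Proposition~\ref{proposition_random_sampling_poa} where egocentric clients collapse their privacy budgets to the lower boundary $\rho_L$, the incentive mechanism induced by the SNE in Theorem~\ref{SNE} drives each client's equilibrium privacy budget $\rho_i^{t*}$ upward. Specifically, when the upper boundary $\rho_H \to +\infty$, the sampling-probability reward coupling encourages clients to raise $\rho_i^{t*}$ so that the worst-case equilibrium budget is bounded below by a favorable expression. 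I would substitute the socially optimal per-client budget $\rho_i^{t*} = \frac{R_t}{2\varphi_i}$ derived in Eq.~(\ref{social_welfare_first_and_second_order}) into the social welfare definition of Eq.~(\ref{social_welfare}) and track how the equilibrium under our scheme compares against this benchmark.

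The key computation is to form the ratio $PoA^{(pri)} = SW^{(opt)} / SW^{(pri,Nash)}$ explicitly. I would write $SW^{(pri,Nash)} = \sum_{t=0}^{T}\sum_{i=1}^{N}[\rho_i^{t*} R_t - \varphi_i (\rho_i^{t*})^2]$ using the equilibrium budgets and then identify a lower bound for this quantity as $\rho_H \to +\infty$. The crucial structural fact is that in this limit the privacy-aware mechanism removes the degenerate small-$\rho$ behavior, so each term $\rho_i^{t*} R_t - \varphi_i(\rho_i^{t*})^2$ stays bounded away from the pathological regime. Denoting $R_{\max} = \max_t R_t$, I would bound $SW^{(opt)} = \frac{1}{4}\sum_t R_t^2 \sum_i \frac{1}{\varphi_i} \leq \frac{R_{\max}}{4}\sum_t R_t \sum_i \frac{1}{\varphi_i}$, and combine this numerator bound with the equilibrium welfare lower bound to collapse the ratio into $\frac{R_{\max}}{2N}\sum_{i=1}^N \frac{1}{\varphi_i}$. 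The factor $\frac{1}{2N}$ should emerge from the interplay between the $\frac{1}{4}$ in $SW^{(opt)}$ and the equilibrium welfare scaling by $N$.

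The main obstacle I expect is characterizing the worst-case equilibrium privacy budget $\rho_i^{t*}$ precisely enough in the limit $\rho_H \to +\infty$ to obtain a clean lower bound on $SW^{(pri,Nash)}$. Unlike the random sampling case, where the worst case trivially pins every budget to $\rho_L$, here the equilibrium budgets are governed by the coupled fixed-point system for the mean-field term $\phi(t)$ (Theorem~\ref{theorem_fixed_point}) and the reward relation $\rho_i^t = G_i^t R_t + H_i^t$ from Lemma~\ref{lemma_optimal_reward}. I would need to argue that as $\rho_H \to +\infty$ the binding constraint shifts so that the equilibrium welfare is dominated by terms scaling linearly in $R_t$ rather than quadratically, ensuring the denominator does not vanish and the ratio stays bounded by the stated constant. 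Establishing this monotone dependence of the equilibrium on $\rho_H$, and confirming that the resulting bound is \emph{reachable} (i.e., tight), is the delicate part; the remaining algebra reducing the ratio to $\frac{R_{\max}}{2N}\sum_{i=1}^N \frac{1}{\varphi_i}$ is routine once the welfare lower bound is in hand.
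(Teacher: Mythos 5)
Your proposal has the right outer architecture --- form the ratio $SW^{(opt)}/SW^{(pri,Nash)}$, bound the numerator via $\sum_t R_t^2 \le R_{\max}\sum_t R_t$ (the paper does exactly this via Cauchy--Schwarz), and lower-bound the equilibrium welfare --- but the central idea is missing, and you yourself flag it as the unresolved ``delicate part.'' The paper's proof hinges on a concrete characterization of the worst-case equilibrium: under privacy-aware sampling, an unsampled client earns zero utility, so egocentric clients push their budgets \emph{up} toward the upper boundary, and the worst-case SNE is simply $\rho_i^t \to \rho_H$. Plugging this in gives the explicit formula $SW^{(pri)} = N\rho_H\sum_{t=0}^{T}R_t - (T+1)\rho_H^2\sum_{i=1}^{N}\varphi_i$, which is the exact mirror of the $\rho_L$ worst case in Proposition~\ref{proposition_random_sampling_poa}. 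Without this identification your ``equilibrium welfare lower bound'' never materializes; gesturing at the fixed-point system for $\phi(t)$ and the relation $\rho_i^t = G_i^t R_t + H_i^t$ does not substitute for it, and your heuristic that the welfare is ``dominated by terms scaling linearly in $R_t$ rather than quadratically'' does not match the actual structure (the welfare is linear in $R_t$ but \emph{quadratic in $\rho_H$}, and that quadratic term is precisely what must be controlled).

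Concretely, two steps you cannot reach from your sketch: (i) the factor $\tfrac{1}{2N}$ in the final bound does not come from ``the interplay between the $\tfrac14$ and scaling by $N$'' alone --- the paper converts $\tfrac14$ into $\tfrac12$ by invoking the side condition $N\sum_{t=0}^{T}R_t \le (T+1)\sum_{i=1}^{N}\varphi_i$ on the denominator $N\rho_H\sum_t R_t - (T+1)\rho_H^2\sum_i\varphi_i$, rewriting it as $2\bigl(N\sum_t R_t - \tfrac{(T+1)\sum_i\varphi_i}{\rho_H}\bigr)$; and (ii) the limit $\rho_H\to+\infty$ is then applied to kill the $\tfrac{(T+1)\sum_i\varphi_i}{\rho_H\sum_t R_t}$ correction term, yielding $\tfrac{R_{\max}}{2N}\sum_{i=1}^{N}\tfrac{1}{\varphi_i}$. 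Neither the side condition nor the role of the limit appears in your plan. Also note that substituting the socially optimal budget $\rho_i^{t*}=\tfrac{R_t}{2\varphi_i}$ into the welfare is only relevant for the numerator $SW^{(opt)}$, which is already given in Eq.~(\ref{opt_social_welfare}); it tells you nothing about the equilibrium denominator. To complete the proof you must supply the game-theoretic argument that pins the worst-case equilibrium at $\rho_H$ and then execute the algebra above.
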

\begin{proof}
Leveraging the privacy-aware client sampling strategy delineated in Eq.~(\ref{sgd_sample}) and the two-stage Stackelberg game framework constructed in Eqs.~(\ref{central_server_obj})-(\ref{client_obj}), at each global iteration $t \in \{0, 1, \ldots, T\}$, the egocentric clients will progressively increase their privacy budget $\rho_{i}^{t}$ to maximize their utility and ensure selection by the central server, otherwise, the client's utility will default to zero if not sampled. Consequently, the worst-case scenario of Stackelberg Nash Equilibrium emerges as $\rho_{i}^{t} \rightarrow \rho_{H}$, with the corresponding social welfare under the privacy-aware sampling strategy expressed as follows:
\begin{align} \label{social_welfare_privacy}
SW^{(pri)} = N\rho_{H} \sum\nolimits_{t=0}^{T} R_{t} - (T + 1) \rho_{H}^{2} \sum\nolimits_{i=1}^{N} \varphi_{i} . 
\end{align}

From Definition \ref{definition_PoA} and the social welfare proposed in Eq.~(\ref{social_welfare_privacy}), the upper bound of the price of anarchy with the privacy-aware sampling policy is given as,
\begin{align} \label{poa_privacy}
PoA^{(pri)} &= \frac{\frac{1}{4} \sum\nolimits_{t=0}^{T} R_{t}^{2} \sum\nolimits_{i=1}^{N} \frac{1}{\varphi_{i}}}{N \rho_{H} \sum\nolimits_{t=0}^{T} R_{t} - (T + 1) \rho_{H}^{2} \sum\nolimits_{i=1}^{N} \varphi_{i}} \nonumber \\
&\stackrel{(a)}{\leq} \frac{\sum\nolimits_{t=0}^{T} R_{t}^{2} \sum\nolimits_{i=1}^{N} \frac{1}{\varphi_{i}}}{2(N \sum\nolimits_{t=0}^{T} R_{t} - \frac{(T+1)\sum\nolimits_{i=1}^{N} \varphi_{i}}{\rho_{H}})} \nonumber \\
&\stackrel{(b)}{\leq} \frac{R_{\text{max}} \sum\nolimits_{i=1}^{N} \frac{1}{\varphi_{i}}}{2(N - \frac{(T+1)\sum\nolimits_{i=1}^{N} \varphi_{i}}{\rho_{H} \sum\nolimits_{t=0}^{T} R_{t}})}. 
\end{align}
where (a) follows from the inequality $N \sum\nolimits_{t=0}^{T}  R_{t} \leq (T + 1) \\ \sum\nolimits_{i=1}^{N} \varphi_{i}$; (b) follows from the Cauchy-Schwarz Inequality that $\sum\nolimits_{t=0}^{T} R_{t}^{2} / \sum\nolimits_{t=0}^{T}\! R_{t} \!\leq\! R_{\text{max}}$, where $R_{\text{max}} \!=\! \max_{t \in \{0, 1, \ldots, T\}} R_{t}$ represents the maximum among allocated rewards during the FL training process.~\end{proof}

From Theorem~\ref{theorem_privacy_poa}, our privacy-aware client sampling strategy promotes the egocentric clients to prudently determine the values of the privacy budgets to avoid being unsampled at each global iteration and maximize personal utilities in the long term, which successfully addresses the worst-case in Proposition~\ref{proposition_random_sampling_poa}. Our FedPCS's PoA in Eq.~(\ref{poa_privacy}) decreases with the increment of the upper boundary of the privacy budgets $\rho_{H}$. As $\rho_{H} \rightarrow + \infty$, $\text{PoA}^{\text{(pri)}}$ converges to $\frac{R_{\text{max}}}{2 N} \sum\nolimits_{i=1}^{N} \frac{1}{\varphi_{i}}$.

\section{Adaptive Client Sampling Ratio Determination} \label{adaptive_sampling_ratio}
In this section, we extend our analysis and explore the more realistic scenario that considers dynamic privacy constraints for the optimization of the central server and egocentric clients. The privacy constraint indicates that the total privacy budget of sampled clients is confined within a time-varying range, which instructs the central server to adaptively regulate the dimension of sampled clients' subsets at each global iteration. In the following, we will introduce the reformulated optimization function of the central server and the corresponding adaptive optimal client sampling ratio and allocated rewards.

\begin{figure*}[ht]
\setlength{\abovecaptionskip}{2pt} 
    \centering
    \begin{minipage}{340pt}
    \centerline{\includegraphics[width=1.0\textwidth, trim=50 70 50 70,clip]{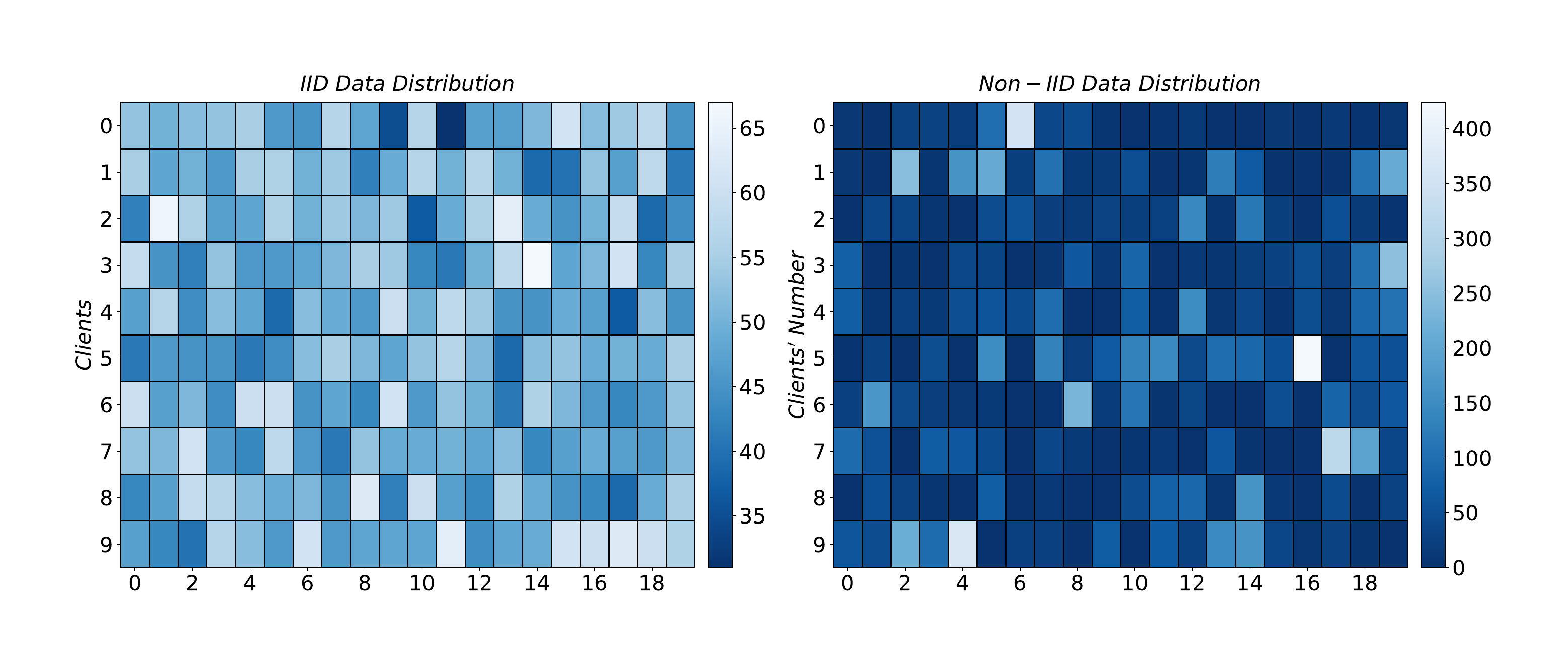}}
    \caption{The data distribution on the CIFAR-10 dataset under IID and Non-IID settings with the client sampling rate $\tau = 0.2$.}
    \label{data_distribution}
    \end{minipage}
    \hspace{2pt}
    \begin{minipage}{160pt}
    \centerline{\includegraphics[width=1.0\textwidth, trim=0 0 0 0,clip]{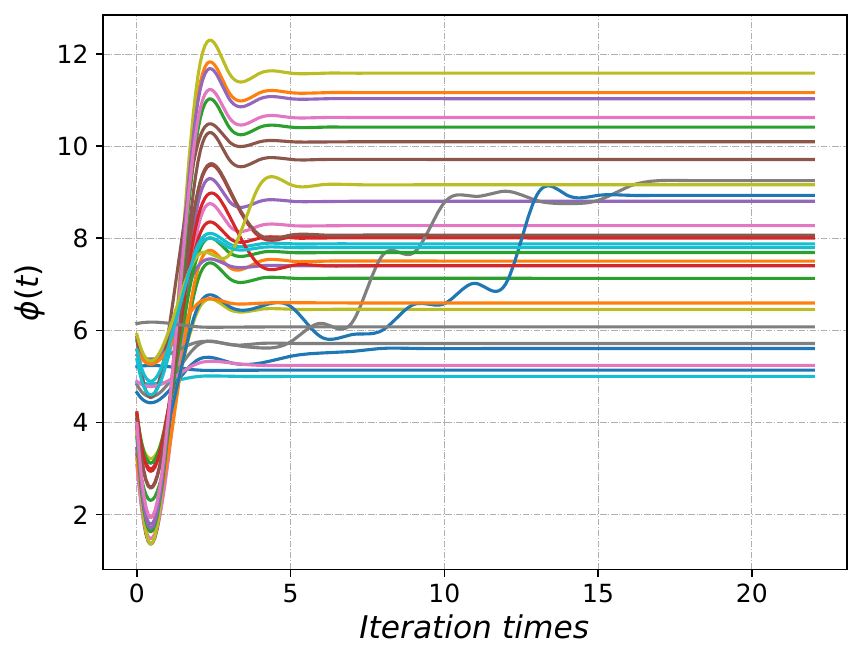}}
    \caption{The iterative convergence for the mean-field estimator $\phi(t)$.}
    \label{phi}
    \end{minipage}
\vspace{-12pt}
\end{figure*}

\subsection{Optimization Objection Reformulation for Central Server}
To satisfy the different requirements during the FL model training, and enhance the comprehensiveness and efficiency of the FL system while balancing privacy protection and model performance,\footnote{In different stages of FL model training, the privacy protection demands may vary. For instance, during the early stages, a higher privacy budget constraint may be needed to explore and acquire preliminary information on the local data, however, in the later stages, the privacy protection demands might become more stringent with a lower privacy budget constraint.} inspired by \cite{sun2022profit, sun2024socially}, we consider the dynamic scenario that the sampled clients’ total privacy budgets at each global iteration cannot exceed the time-various maximum privacy budget of the central server. Furthermore, based on the privacy budget constraint, the central server adaptively adjusts the optimal sampling ratio $\tau(t)$. Consequently, to tackle the cost-minimize problem in Eq.~(\ref{central_server_utility}) by optimizing the time-depend reward $R_{t}$ and the sampling ratio $\tau(t)$, we reformulate the optimization function of the central server as follows.
\begin{align}
U_{t}(R_{t},&\boldsymbol{\alpha}) \!=\! \min_{\tau(t), R_{t}} \sum\nolimits_{i \in \mathcal{K}_{t}} \!\! \left(\gamma \frac{\theta_{i}^{2}}{t |\mathcal{D}_{i}|^{2}\rho_{i}^{t}} \!+\! (1 \!-\! \gamma) R_{t} \rho_{i}^{t} \right), \label{central_server_utility_with_constraint} \\
& s.t.\ \sum\nolimits_{i \in \mathcal{K}_{t}} \rho_{i}^{t} \leq  B_{t}, \  |\mathcal{K}_{t}| = K_{t}= \tau(t) N, \label{central_server_constraint}
\end{align}
where Eq.~(\ref{central_server_constraint}) represents the time-various privacy budget constraint condition and we denote the affordable privacy upper bound by $B_{t}$. In the following, by analyzing the optimization problem in Eqs.~(\ref{central_server_utility_with_constraint})-(\ref{central_server_constraint}) similar to Theorem~\ref{theorem_optimal_alpha} and Lemma~\ref{lemma_optimal_reward}, we derive the optimal dynamic client subset and allocated reward at each global iteration.

\begin{algorithm}[t]
\small
\caption{Adaptive Client Sampling in FedPCS}
\label{alg3}
\SetAlgoLined
\KwIn{$N$, $\tau$, $T$, $\phi(t)$, learning rate $\eta$, weight $\theta_{i}$ $\varphi_{i}$, $\gamma$ and privacy constraint $B_{t}$, where $t \!\in\! \{0, 1, \ldots, T\}$.}
\KwOut{The global model parameter $\boldsymbol{w(T)}$.}
\For{$t \leftarrow 0, 1, \ldots, T$}{
    The central server decides the optimal reward $R_{t}^{*}$ and the adaptive client sampling ratio $\tau^{*}(t)$ based on Eq.~(\ref{opt_reward_and_tau})\tcp*[l]{Adaptive Sampling Ratio Calculation}
    The central server samples a client subset $\mathcal{K}_{t}$ according to the probability vector $\boldsymbol{X_{t}}$\tcp*[r]{Client Sampling} 
    Each client decides the optimal correction factor $\alpha_{i}^{t*}$\;
    \ForEach{sampled client}{
        Local training: $\ \boldsymbol{w_{i}(t \!+\! 1)} \!=\! \boldsymbol{w(t)} \!-\! \eta \nabla F_{i}(\boldsymbol{w(t)})$ \;
        Gradient Perturbation: $\nabla \widetilde{F}_{i}(\boldsymbol{w(t)}) \!=\! \nabla F_{i}(\boldsymbol{w(t)}) \!+\! \boldsymbol{n_{i}(t)}$ \;
    }
    The central server updates the global model with: $w(t + 1) = w(t) - \eta \sum_{i=1}^{K} \! \frac{\theta_{i}}{K_{t} x_{i}^{t}} \nabla \widetilde{F}_{i}(\boldsymbol{w(t)})$  \tcp*[r]{Model Aggregation}
}
\end{algorithm}

\subsection{Adaptive Sampling Ratio and Reward Determination}
Under the privacy budget constraint, the central server aims to find the optimal reward and sampling ratio at $t$-th iteration simultaneously to minimize the cost. Note that the objective function with dynamic constraints in Eqs.~(\ref{central_server_utility_with_constraint})-(\ref{central_server_constraint}) is strictly concave and the constraint set is compact and convex. Thus, we construct a special Lagrangian function that satisfies the Karush–Kuhn–Tucker (KKT) optimality conditions \cite{bertsekas1997nonlinear} to resolve the optimization problem in Eqs.~(\ref{central_server_utility_with_constraint})-(\ref{central_server_constraint}). To obtain the closed-form of optimal reward $R_{t}^{*}$ and sampling ratio $\tau^{*}(t)$, here we analyze the homogeneous scenarios\footnote{Heterogeneous scenarios can also be efficiently solved with numerical optimization tools such as Scipy and CVXPY libraries.} as summarized in the following theorem. 
\begin{theorem} \label{theorem_opt_reward_and_tau}
(Optimal Reward \& Optimal Sampling Ratio)~In Stage~\uppercase\expandafter{\romannumeral1}, with $\upsilon = \frac{\gamma \theta^{2}}{t |\mathcal{D}|^{2}}$, the optimal reward $R_{t}^{*}$ and the optimal sampling ratio $\tau^{*}(t)$ of the central server at $t$-th global iteration is given by:
\begin{align} \label{opt_reward_and_tau}
&R_{t}^{*} = \frac{1}{G^{t}} \left(\frac{2 G^{t} \upsilon}{1 - \gamma}\right)^{\frac{1}{3}} - \frac{H^{t}}{G^{t}},  \\
&\tau^{*}(t) = \frac{1}{N} \left\lfloor B_{t}\left(\frac{2 G^{t} \upsilon}{1 - \gamma}\right)^{-\frac{1}{3}} \right\rfloor. 
\end{align}
\end{theorem}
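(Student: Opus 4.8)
The plan is to follow the same backward-induction logic used in Theorem~\ref{theorem_optimal_alpha} and Lemma~\ref{lemma_optimal_reward}, now augmented with the privacy-budget constraint, and to resolve the resulting constrained program through the KKT conditions announced in the text. First I would specialize to the homogeneous setting, where $\theta_i = \theta$, $|\mathcal{D}_i| = |\mathcal{D}|$ and $\varphi_i = \varphi$ for every client, so that all sampled clients adopt the same equilibrium privacy budget $\rho^t$. Invoking the affine client best-response $\rho^t = G^t R_t + H^t$ established in Lemma~\ref{lemma_optimal_reward}, the server's objective in Eq.~(\ref{central_server_utility_with_constraint}) collapses to
\begin{align}
U_t = K_t\left(\frac{\upsilon}{\rho^t} + (1-\gamma) R_t \rho^t\right), \qquad K_t = \tau(t) N,
\end{align}
with $\upsilon = \gamma\theta^2/(t|\mathcal{D}|^2)$, while the constraint in Eq.~(\ref{central_server_constraint}) reduces to $K_t \rho^t \leq B_t$.

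Next I would argue that the budget constraint is active at the optimum. Since the reformulated cost is strictly concave and the feasible set is compact and convex, the minimizer of a concave objective over such a set must lie on its boundary; equivalently, the KKT multiplier attached to $K_t\rho^t \leq B_t$ is strictly positive, so complementary slackness forces $K_t \rho^t = B_t$. This permits me to eliminate $K_t = B_t/\rho^t$ and, after substituting $R_t = (\rho^t - H^t)/G^t$, write the cost as the single-variable function
\begin{align}
U_t(\rho^t) = B_t\left(\frac{\upsilon}{(\rho^t)^2} + (1-\gamma)\frac{\rho^t - H^t}{G^t}\right).
\end{align}

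The remaining step is a one-dimensional optimization. Because $\partial^2 U_t/\partial(\rho^t)^2 = 6 B_t \upsilon/(\rho^t)^4 > 0$, the reduced cost is convex, so the stationarity condition $\partial U_t/\partial \rho^t = 0$ yields the cubic $(\rho^t)^3 = 2 G^t \upsilon/(1-\gamma)$, whence $\rho^{t*} = (2 G^t \upsilon/(1-\gamma))^{1/3}$. Back-substitution into the best-response relation recovers $R_t^* = \frac{1}{G^t}(2 G^t \upsilon/(1-\gamma))^{1/3} - H^t/G^t$, and the binding budget gives $K_t^* = B_t/\rho^{t*} = B_t (2 G^t \upsilon/(1-\gamma))^{-1/3}$.

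Finally, since $K_t = \tau(t) N$ must be a nonnegative integer, I would round the real-valued optimizer down to the largest feasible integer, i.e. $K_t^* = \lfloor B_t (2 G^t \upsilon/(1-\gamma))^{-1/3}\rfloor$, and divide by $N$ to obtain $\tau^*(t)$. The main obstacle I anticipate is rigorously justifying that the budget constraint binds---namely establishing the strict concavity of the server cost over the feasible set so that the minimum is pinned to the budget boundary---together with confirming that the integer rounding preserves optimality; by contrast, the cubic solve and the second-order check are routine.
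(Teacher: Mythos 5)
Your proposal is correct and follows essentially the same route as the paper: both specialize to the homogeneous case, import the affine best response $\rho^{t}=G^{t}R_{t}+H^{t}$ from Lemma~\ref{lemma_optimal_reward}, and apply the KKT conditions to the budget-constrained Lagrangian, arriving at the identical cubic $(\rho^{t})^{3}=2G^{t}\upsilon/(1-\gamma)$ and hence the same $R_{t}^{*}$ and $\tau^{*}(t)$. The only difference is organizational---you eliminate $K_{t}$ via the binding constraint and solve a one-dimensional problem in $\rho^{t}$, whereas the paper jointly solves the two stationarity conditions $\partial\mathcal{L}/\partial R_{t}=0$ and $\partial\mathcal{L}/\partial K_{t}=0$ with complementary slackness---and your informal ``concave objective forces a boundary minimum'' justification for the constraint binding is at the same level of rigor as the paper's, which simply exhibits a nonnegative multiplier $\delta$ and notes that $\delta=0$ would render the constraint ineffective.
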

\begin{proof}
We construct the Lagrangian function to facilitate the optimization of the optimal reward and sampling ratio at $t$-th iteration with the Lagrange multiplier $\delta$ as follows. 
\begin{align}
\!\!\!\! \text{\small $\mathcal{L}(K_{t},R_{t}) \!=\!\! \sum\nolimits_{i=1}^{K_{t}}\!\! \left[\! \frac{\gamma \theta^{2}}{t |\mathcal{D}|^{2}\rho_{i}^{t}} \!+\! (1 \!-\! \gamma) R_{t}\rho_{i}^{t} \!\right] \!\!+\! \delta \! \left(\sum\nolimits_{i=1}^{K_{t}} \! \rho_{i}^{t} \!-\! B_{t} \! \right) $} . \!\!\!
\end{align}

\begin{table*}[t]
\centering
\setlength{\abovecaptionskip}{3pt} 
\renewcommand\arraystretch{1.0}
\caption{Accuracy Comparison on Different Datasets.}
\resizebox{1.0\textwidth}{!}{\begin{tabular}
{c|c|c|c|c|c|c|c|c|c|c|c} 
\toprule[1.2pt]
\multirow{2}{*}{\multirowcell{2}{\centering\textbf{Datasets}}} & \multirow{2}{*}{\multirowcell{2}{\centering\textbf{Distribution}}} & \multicolumn{2}{c|}{\parbox{2cm}{\centering\textbf{Random}}} & \multicolumn{2}{c|}{\parbox{2cm}{\centering\textbf{AOCS}\cite{chen2022optimal}}} & \multicolumn{2}{c|}{\parbox{2cm}{\centering\textbf{Fed-CBS}\cite{zhang2023fed}}} & \multicolumn{2}{c|}{\parbox{2cm}{\centering\textbf{DELTA}\cite{wang2024delta}}} & \multicolumn{2}{c}{\parbox{2cm}{\centering\textbf{FedPCS}(Ours)}}    \\ \cmidrule[0.5pt](l{1pt}r{0pt}){3-12}

&  & \parbox{0.5cm}{{\centering{IID}}} & \parbox{1cm}{\centering Non-IID} & \parbox{0.5cm}{{\centering IID}} & \parbox{1cm}{\centering Non-IID} & \parbox{0.5cm}{{\centering IID}} & \parbox{1cm}{\centering Non-IID} & \parbox{0.5cm}{{\centering IID}} & \parbox{1cm}{\centering Non-IID} & \parbox{0.5cm}{{\centering IID}} & \parbox{1cm}{\centering Non-IID} \\ \cmidrule[0.8pt](l{1pt}r{0pt}){1-12}

\multirow{3}{*}{\multirowcell{2}{\textbf{Fashion} \\ \textbf{-MNIST}}}
        & $\tau=0.2$ & 74.69$\pm$0.78 & 72.79$\pm$0.43 & 78.62$\pm$0.26 & 76.32$\pm$0.31 & 78.87$\pm$0.34 & 76.88$\pm$0.25 & 82.71$\pm$0.11 & 80.31$\pm$0.07 & \textbf{84.43$\pm$0.18} & \textbf{83.35$\pm$0.22} \\ \cmidrule[0.5pt](l{1pt}r{0pt}){2-12}

        & $\tau=0.5$ & 80.51$\pm$0.07 & 77.69$\pm$0.08 & 82.25$\pm$0.29 & 80.95$\pm$0.19 & 82.63$\pm$0.54 & 80.43$\pm$0.49 & 83.76$\pm$0.16 & 81.48$\pm$0.23 & \textbf{86.20$\pm$0.34} & \textbf{85.14$\pm$0.28} \\ \cmidrule[0.5pt](l{1pt}r{0pt}){2-12}
        
        & $\tau=0.8$ & 82.20$\pm$0.22 & 80.28$\pm$0.09 & 82.23$\pm$0.02 & 81.17$\pm$0.06 & 83.72$\pm$0.10 & 81.59$\pm$0.04 & 84.14$\pm$0.48 & 83.05$\pm$0.37 & \textbf{87.33$\pm$0.16} & \textbf{85.82$\pm$0.13} \\ \midrule[0.8pt]

\multirow{3}{*}{\multirowcell{2}{\textbf{CIFAR-10}}}
        & $\tau=0.2$ & 56.31$\pm$0.21 & 54.76$\pm$0.38 & 58.63$\pm$0.16 & 56.35$\pm$0.29 & 57.92$\pm$0.17 & 56.17$\pm$0.08 & 60.91$\pm$0.29 & 59.03$\pm$0.14  & \textbf{62.21$\pm$0.15} & \textbf{60.56$\pm$0.10} \\ \cmidrule[0.5pt](l{1pt}r{0pt}){2-12}

        & $\tau=0.5$ & 60.34$\pm$0.07 & 58.75$\pm$0.12 & 63.26$\pm$0.07 & 61.13$\pm$0.09 & 61.81$\pm$0.29 & 60.88$\pm$0.17 & 63.64$\pm$0.39 & 62.02$\pm$0.20 & \textbf{65.39$\pm$0.40} & \textbf{63.82$\pm$0.18} \\ \cmidrule[0.5pt](l{1pt}r{0pt}){2-12}
        
        & $\tau=0.8$ & 62.19$\pm$0.16 & 61.13$\pm$0.20 & 64.10$\pm$0.22 & 63.11$\pm$0.19 & 65.08$\pm$0.57 & 63.62$\pm$0.43 & 63.91$\pm$0.26 & 61.87$\pm$0.14 & \textbf{67.18$\pm$0.39} & \textbf{65.50$\pm$0.26} \\  \midrule[0.8pt]

\multirow{3}{*}{\multirowcell{2}{\textbf{SVHN}}}
        & $\tau=0.2$ & 82.64$\pm$0.04 & 80.63$\pm$0.06 & 84.91$\pm$0.07 & 82.44$\pm$0.12 & 83.69$\pm$0.04 & 81.77$\pm$0.02 & 85.76$\pm$0.04 & 83.64$\pm$0.03 & \textbf{87.53$\pm$0.05} & \textbf{86.06$\pm$0.08} \\ \cmidrule[0.5pt](l{1pt}r{0pt}){2-12}

        & $\tau=0.5$ & 83.53$\pm$0.03 & 81.94$\pm$0.04 & 85.29$\pm$0.06 & 83.18$\pm$0.08 & 84.86$\pm$0.05 & 82.92$\pm$0.02 & 86.29$\pm$0.03 & 84.57$\pm$0.05 & \textbf{87.79$\pm$0.03} & \textbf{85.92$\pm$0.04} \\ \cmidrule[0.5pt](l{1pt}r{0pt}){2-12}
        
        & $\tau=0.8$ & 82.09$\pm$0.02 & 81.12$\pm$0.03 & 85.68$\pm$0.03 & 83.85$\pm$0.13 & 83.05$\pm$0.24 & 82.17$\pm$0.20 & 86.87$\pm$0.04 & 84.95$\pm$0.09 & \textbf{87.81$\pm$0.02} & \textbf{86.15$\pm$0.02} \\  \midrule[0.8pt]

\multirow{3}{*}{\multirowcell{2}{\textbf{CIFAR-100}}}
        & $\tau=0.2$ & 41.34$\pm$0.06 & 40.27$\pm$0.39 & 44.36$\pm$0.29 & 41.10$\pm$0.38 & 44.87$\pm$0.10 & 43.32$\pm$0.15 & 46.17$\pm$0.29 & 44.65$\pm$0.23 & \textbf{47.58$\pm$0.15} & \textbf{46.28$\pm$0.29} \\ \cmidrule[0.5pt](l{1pt}r{0pt}){2-12}

        & $\tau=0.5$ & 45.38$\pm$0.09 & 44.64$\pm$0.08 & 46.70$\pm$0.09 & 46.51$\pm$0.19 & 47.35$\pm$0.07 & 46.52$\pm$0.28 & 47.36$\pm$0.22 & 46.89$\pm$0.37 & \textbf{48.72$\pm$0.12} & \textbf{47.92$\pm$0.13} \\ \cmidrule[0.5pt](l{1pt}r{0pt}){2-12}
        
        & $\tau=0.8$ & 47.07$\pm$0.01 & 46.65$\pm$0.22 & 47.36$\pm$0.28 & 45.96$\pm$0.11 & 47.87$\pm$0.08 & 46.78$\pm$0.06 & 47.71$\pm$0.23 & 46.57$\pm$0.14 & \textbf{49.37$\pm$0.02} & \textbf{47.50$\pm$0.05} \\  \midrule[0.8pt]

\multirow{3}{*}{\multirowcell{2}{\textbf{CINIC-10}}}
        & $\tau=0.2$ & 40.81$\pm$0.22 & 40.39$\pm$0.25 &42.70$\pm$0.14 & 42.27$\pm$0.33 & 42.69$\pm$0.11 & 41.20$\pm$0.07 & 45.03$\pm$0.26 & 42.56$\pm$0.19 & \textbf{47.47$\pm$0.23} & \textbf{44.41$\pm$0.25} \\ \cmidrule[0.5pt](l{1pt}r{0pt}){2-12}

        & $\tau=0.5$ & 45.30$\pm$0.19 & 44.53$\pm$0.22 & 47.26$\pm$0.03 & 44.83$\pm$0.12 & 46.70$\pm$0.18 & 45.21$\pm$0.25 & 48.23$\pm$0.25 & 45.41$\pm$0.22 & \textbf{49.35$\pm$0.09} & \textbf{47.08$\pm$0.08} \\ \cmidrule[0.5pt](l{1pt}r{0pt}){2-12}
        
        & $\tau=0.8$ & 47.62$\pm$0.22 & 45.65$\pm$0.14 & 47.86$\pm$0.29 & 46.24$\pm$0.30 & 47.56$\pm$0.13 & 46.43$\pm$0.27 & 48.85$\pm$0.14 & 46.81$\pm$0.21 & \textbf{50.01$\pm$0.12} & \textbf{48.76$\pm$0.04} \\  \midrule[0.8pt]

\multirow{3}{*}{\multirowcell{2}{\textbf{Tiny} \\ \textbf{-ImageNet}}}
        & $\tau=0.2$ & 60.55$\pm$0.10 & 58.86$\pm$0.02 & 63.15$\pm$0.24 & 61.60$\pm$0.12 & 62.72$\pm$0.15 & 61.97$\pm$0.13 & 64.38$\pm$0.11 & 63.96$\pm$0.20 & \textbf{66.38$\pm$0.04} & \textbf{65.71$\pm$0.22} \\ \cmidrule[0.5pt](l{1pt}r{0pt}){2-12}

        & $\tau=0.5$ & 61.11$\pm$0.21 & 60.62$\pm$0.11 & 64.67$\pm$0.17 & 64.57$\pm$0.12 & 63.78$\pm$0.13 & 63.28$\pm$0.20 & 66.13$\pm$0.22 & 64.98$\pm$0.29 & \textbf{67.50$\pm$0.07} & \textbf{66.28$\pm$0.09} \\ \cmidrule[0.5pt](l{1pt}r{0pt}){2-12}
        
        & $\tau=0.8$ & 61.92$\pm$0.11 & 61.58$\pm$0.30 & 66.33$\pm$0.09 & 65.71$\pm$0.06 & 64.59$\pm$0.08 & 64.62$\pm$0.15 & 66.61$\pm$0.06 & 66.34$\pm$0.19 & \textbf{68.18$\pm$0.12} & \textbf{67.85$\pm$0.05} \\
         
\bottomrule[1.2pt]
\end{tabular}}
\vspace{-12pt}
\label{accuracy}
\end{table*}

\begin{figure*}[ht]
\setlength{\abovecaptionskip}{3pt} 
\centering
\subfloat[Correlation of $\rho_{i}^{t}$ with Random]{
    \label{Correlation_random}
    \includegraphics[width=0.245\textwidth, trim=5 5 5 5,clip]{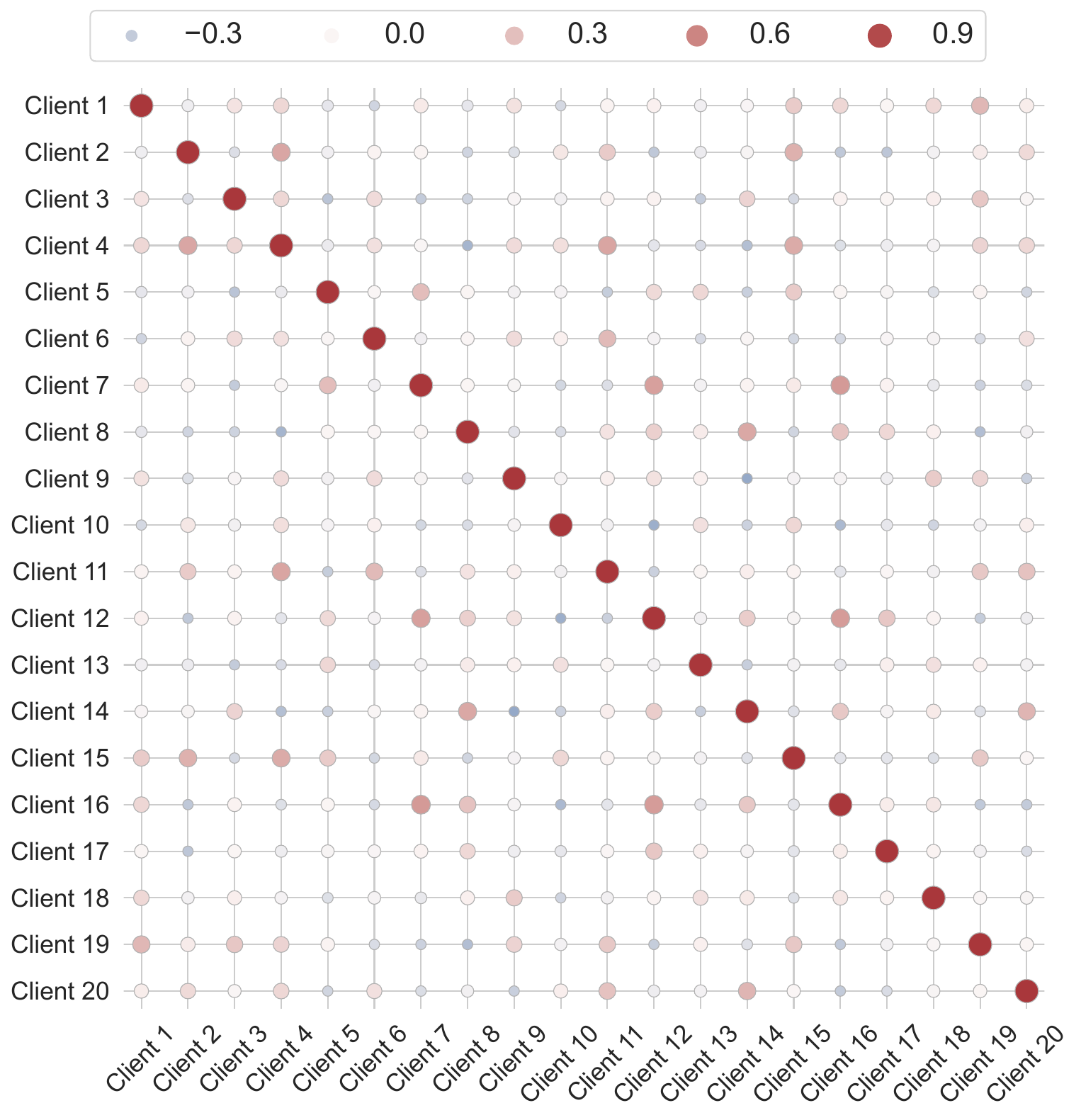}}
\subfloat[Correlation of $\rho_{i}^{t}$ with FedCBS]{
    \label{Correlation_cbs}
    \includegraphics[width=0.245\textwidth, trim=5 5 5 5,clip]{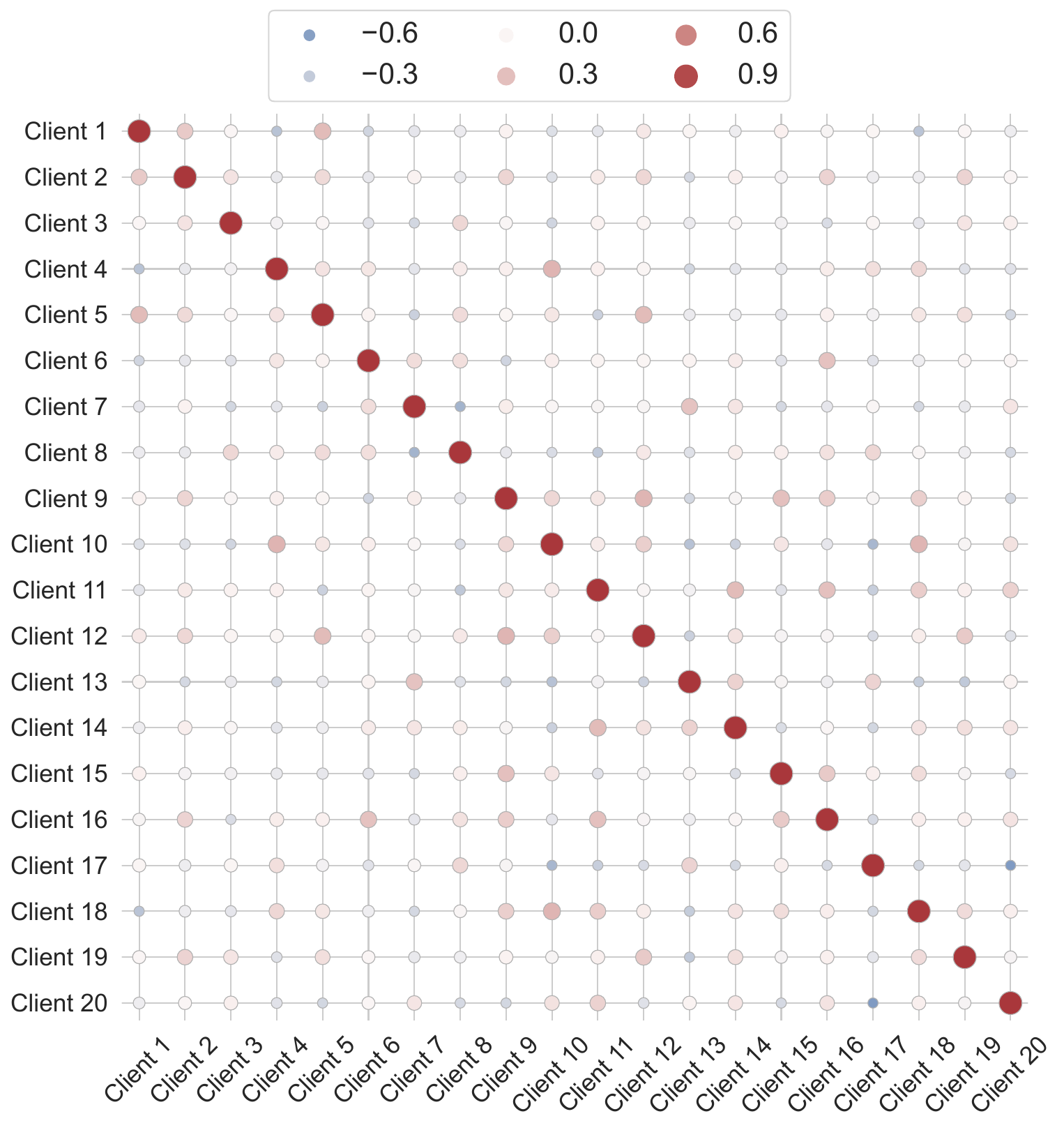}}
\subfloat[Correlation of $\rho_{i}^{t}$ with DELTA]{
    \label{Correlation_delta}
    \includegraphics[width=0.245\textwidth, trim=5 5 5 5,clip]{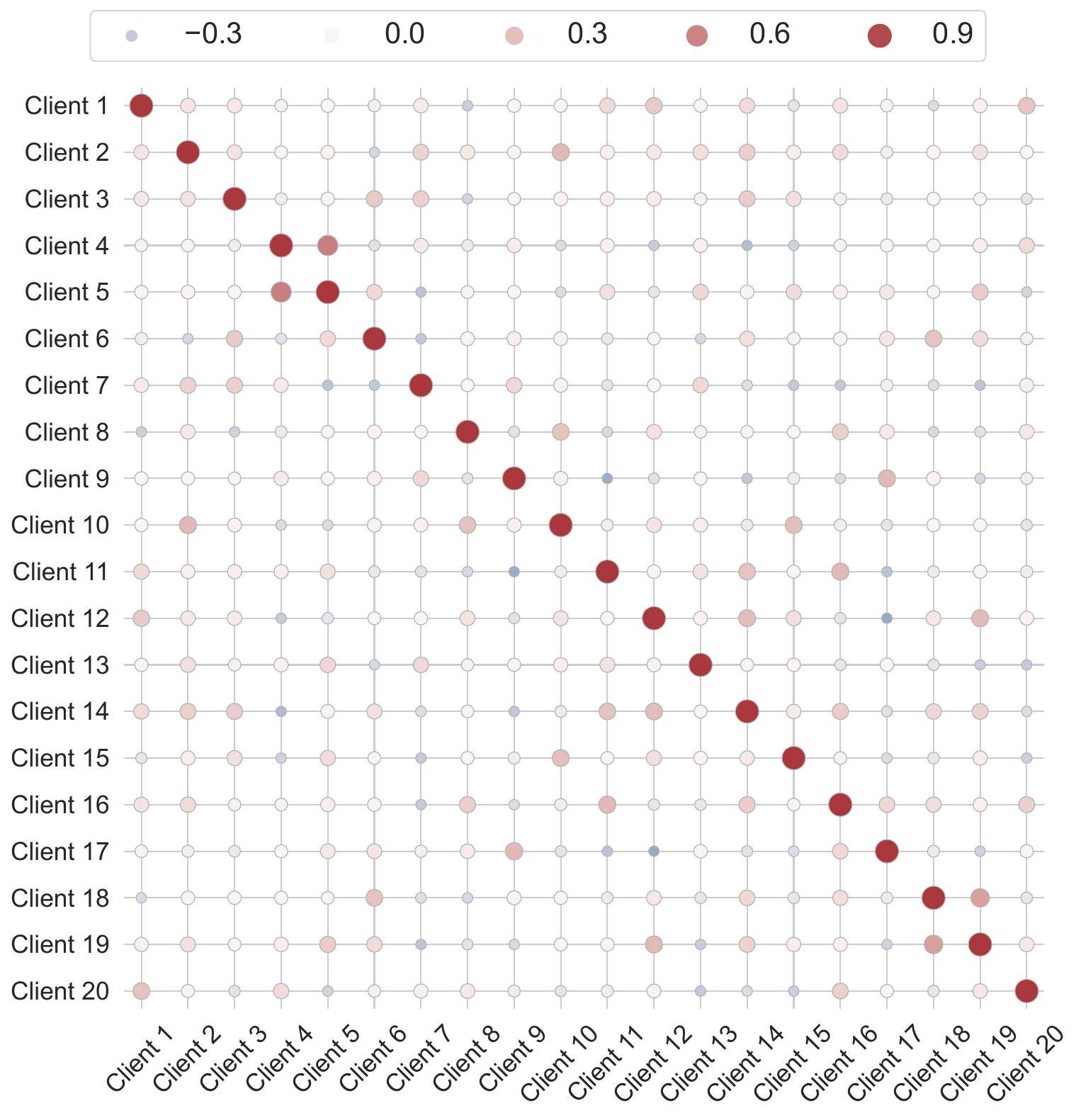}}
\subfloat[Correlation of $\rho_{i}^{t}$ with FedPCS]{
    \label{Correlation_pcs}
    \includegraphics[width=0.245\textwidth, trim=5 5 5 5,clip]{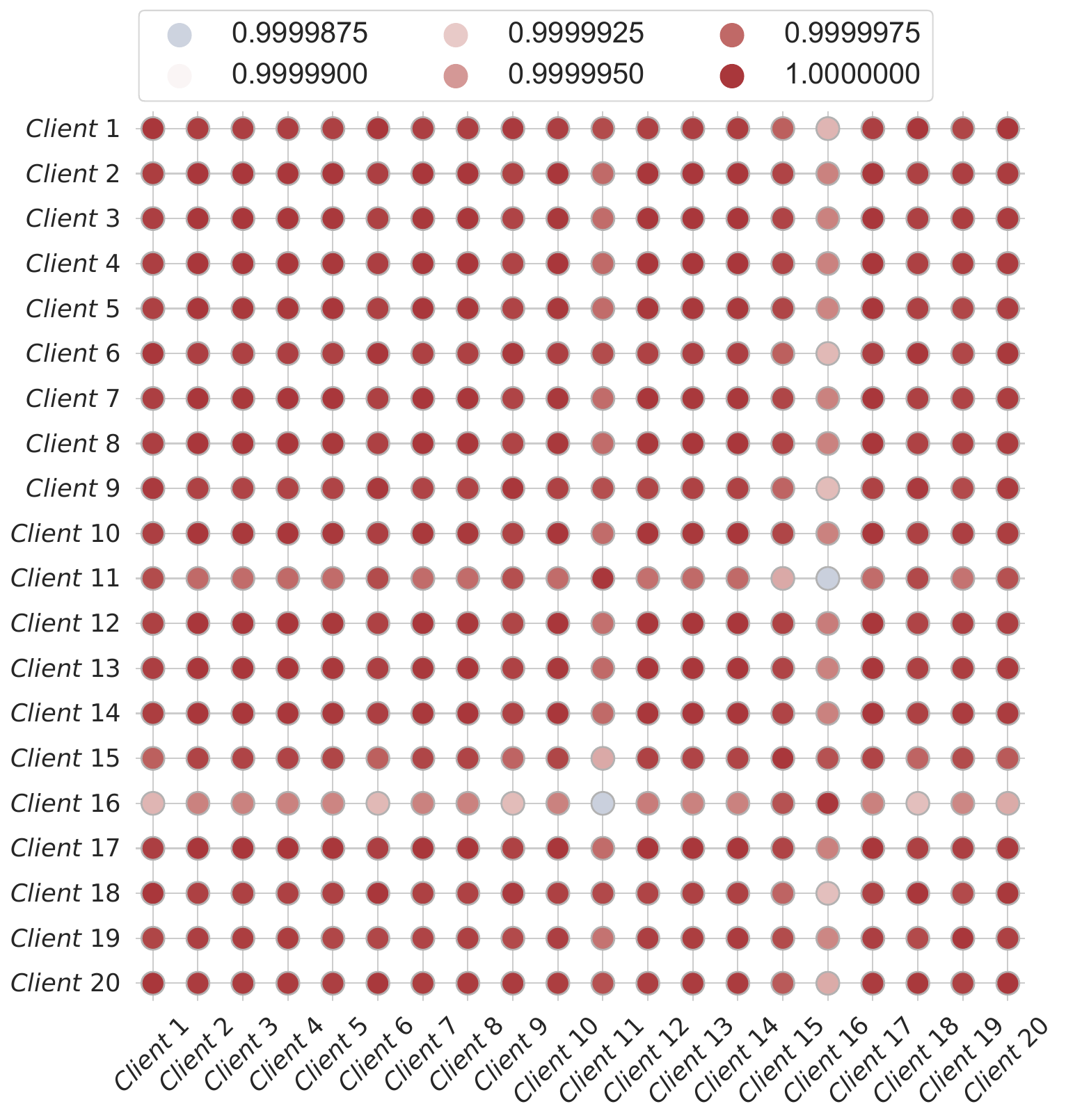}}
\caption{The Pearson correlation coefficient of various schemes on Fashion-MNIST dataset with sampling rate $\tau = 0.2$.}
\label{correlation}
\vspace{-12pt}
\end{figure*}

Then, as the optimization function of the central server in Eq.~(\ref{central_server_utility_with_constraint}) is concave, thus its optimal solution satisfies the KKT optimality conditions. According to the equation $\rho_{i}^{t} \!=\! G_{i}^{t} R_{t} \!+\! H_{i}^{t}$ in Lemma \ref{lemma_optimal_reward} and the stationary point condition, we derive the first-order derivatives of the Lagrangian function $\mathcal{L}(K_{t}, R_{t})$ regarding the reward $R_{t}$ and the size of client subset $K_{t}$:
\begin{align}\label{stationary_point_condition}
\frac{\partial \mathcal{L}}{\partial R_{t}} \!&=\! \frac{- G^{t} \upsilon}{(G^{t} R_{t} \!+\! H^{t})^{2}} \!+\! (1 \!-\! \gamma) (2 G^{t} R_{t} \!+\! H^{t}) \!+\! \delta G^{t} \!=\! 0, \\
\frac{\partial \mathcal{L}}{\partial K_{t}} \!&=\! \frac{\upsilon}{G^{t} R_{t} \!+\! H^{t}} \!+\! \left[(1 \!-\! \gamma)R_{t} \!+\! \delta \right] (G^{t} R_{t} \!+\! H^{t}) \!=\! 0.
\end{align}
where  and the second-order derivatives regarding the reward $\partial \mathcal{L}^{2}/\partial^{2} R_{t} > 0$ holds. Furthermore, based on the complementary slackness condition, the Lagrange multiplier satisfies $\delta \!\geq\! 0$ (If $\delta \!=\! 0$, the constraints on the feasible region become ineffective, indicating that the equivalent formulation is unconstrained) and the following equation holds:
\begin{align}\label{complementary_slackness}
\text{$\delta \left(\sum\nolimits_{i=1}^{K_{t}} \rho_{i}^{t} - B_{t} \right) = 0 $} .
\end{align}

From Eqs.~(\ref{stationary_point_condition})-(\ref{complementary_slackness}) and with the Lagrange multiplier $\delta = \upsilon (\frac{2 G^{t} \upsilon}{1 - \gamma})^{- \frac{2}{3}} - \frac{1 - \gamma}{G^{t}}(2(\frac{2 G^{t} \upsilon}{1 - \gamma})^{\frac{1}{3}}- H^{t}) \geq 0$ holds, we can obtain the expression of the optimal reward $R_{t}^{*}$ and the optimal sampling ratio $\tau^{*}(t)$ in Eq.~(\ref{opt_reward_and_tau}).  \end{proof}

According to Theorem \ref{theorem_opt_reward_and_tau},  with the optimal strategy profile, participating clients achieve their maximal utilities and the central server reaches its minimum cost simultaneously. Therefore, we can conclude that the two-stage Stackelberg game can still possess a Stackelberg Nash Equilibrium as demonstrated in Theorem~\ref{theorem_fixed_point} under the privacy budget constraint. The interaction between the central server and clients with the dynamic client subset and time-various privacy constraints within the FL system is summarized in Algorithm \ref{alg3}.

\begin{figure*}[t]
\setlength{\abovecaptionskip}{2pt} 
    \centering
    \begin{minipage}{350pt}
        \begin{minipage}{350pt}
            \centerline{\includegraphics[width=1.0\textwidth, trim=0 5 0 5,clip]{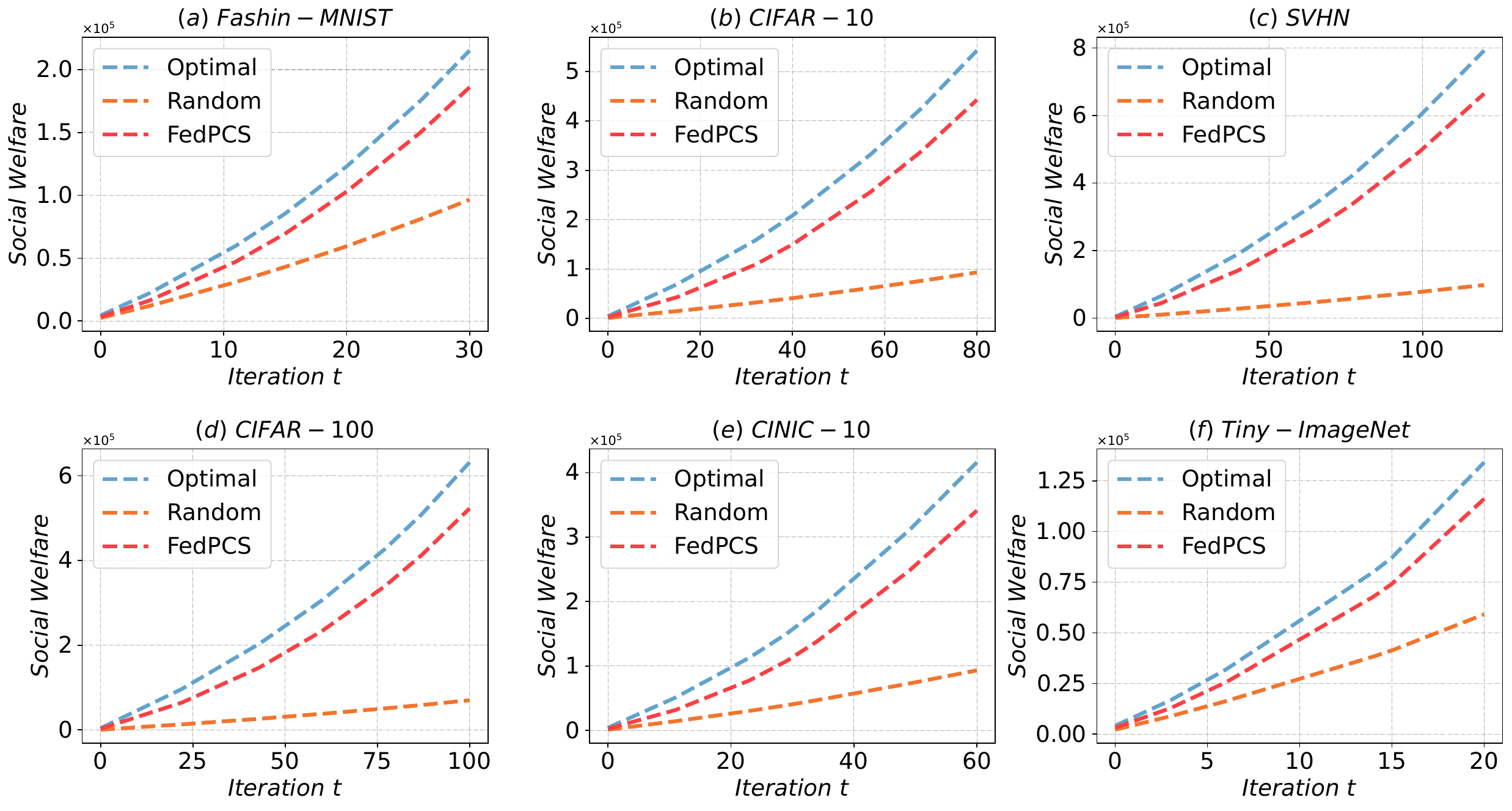}}
            \caption{The social welfare comparison on Fashion-MNIST/CIFAR-10/SVHN/CIFAR-100/CINIC-10/Tiny-ImageNet datasets under different client sampling strategies.}
            \label{social_welfare_comparison}
        \end{minipage}
        \begin{minipage}{350pt}
            \centerline{\includegraphics[width=1.0\textwidth, trim=50 70 50 35,clip]{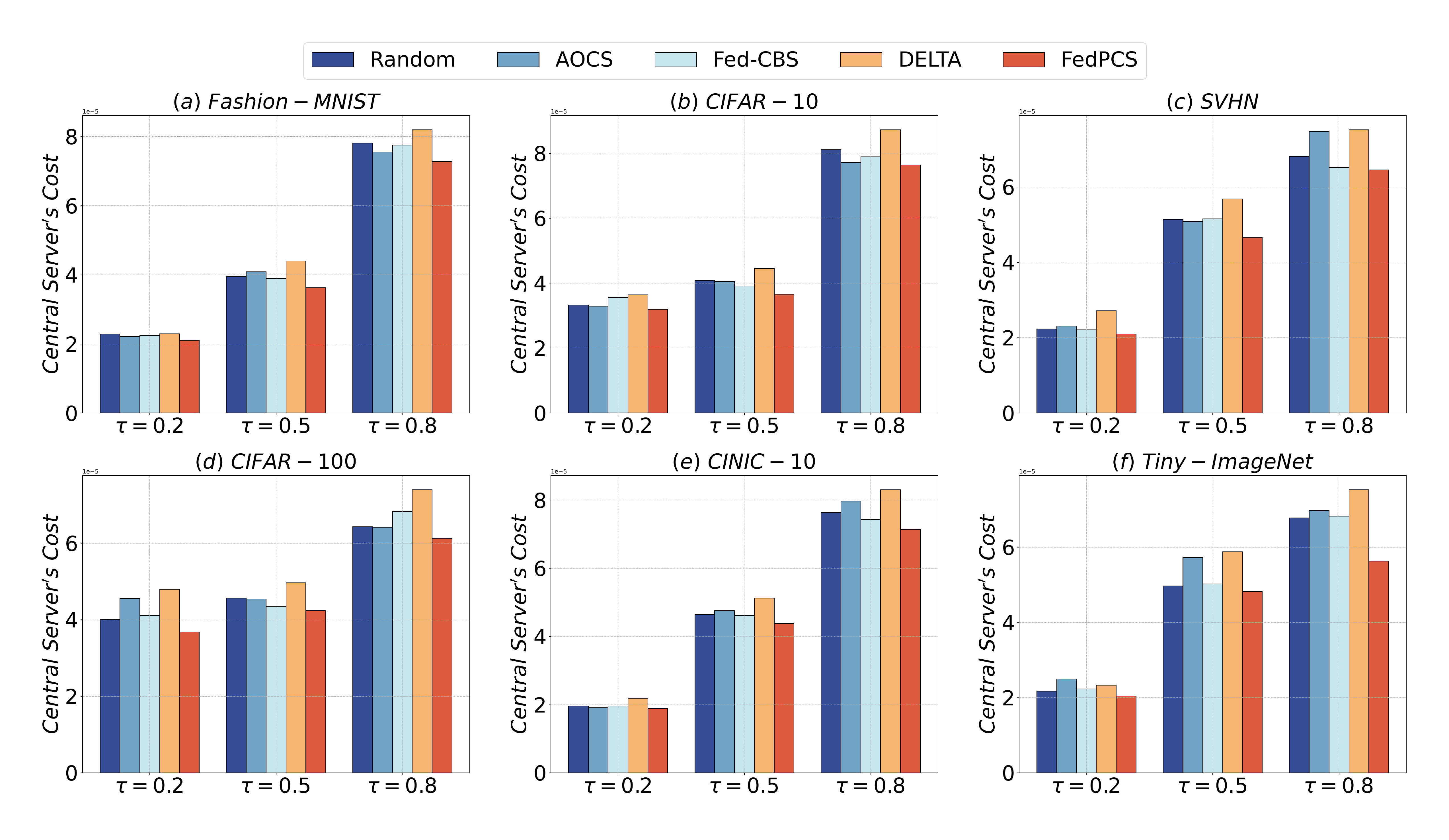}}
            \caption{The central server's cost comparison conducted on Fashion-MNIST/CIFAR-10/SVHN/CIFAR-100/CINIC-10/Tiny-ImageNet datasets with $\tau = \{0.2, 0.5, 0.8\}$.}
            \label{central_server_cost}
        \end{minipage}
    \end{minipage}
    \hspace{2pt}
    \begin{minipage}{150pt}
        \begin{minipage}{150pt}
            \centerline{\includegraphics[width=1.0\textwidth, trim=0 0 0 0,clip]{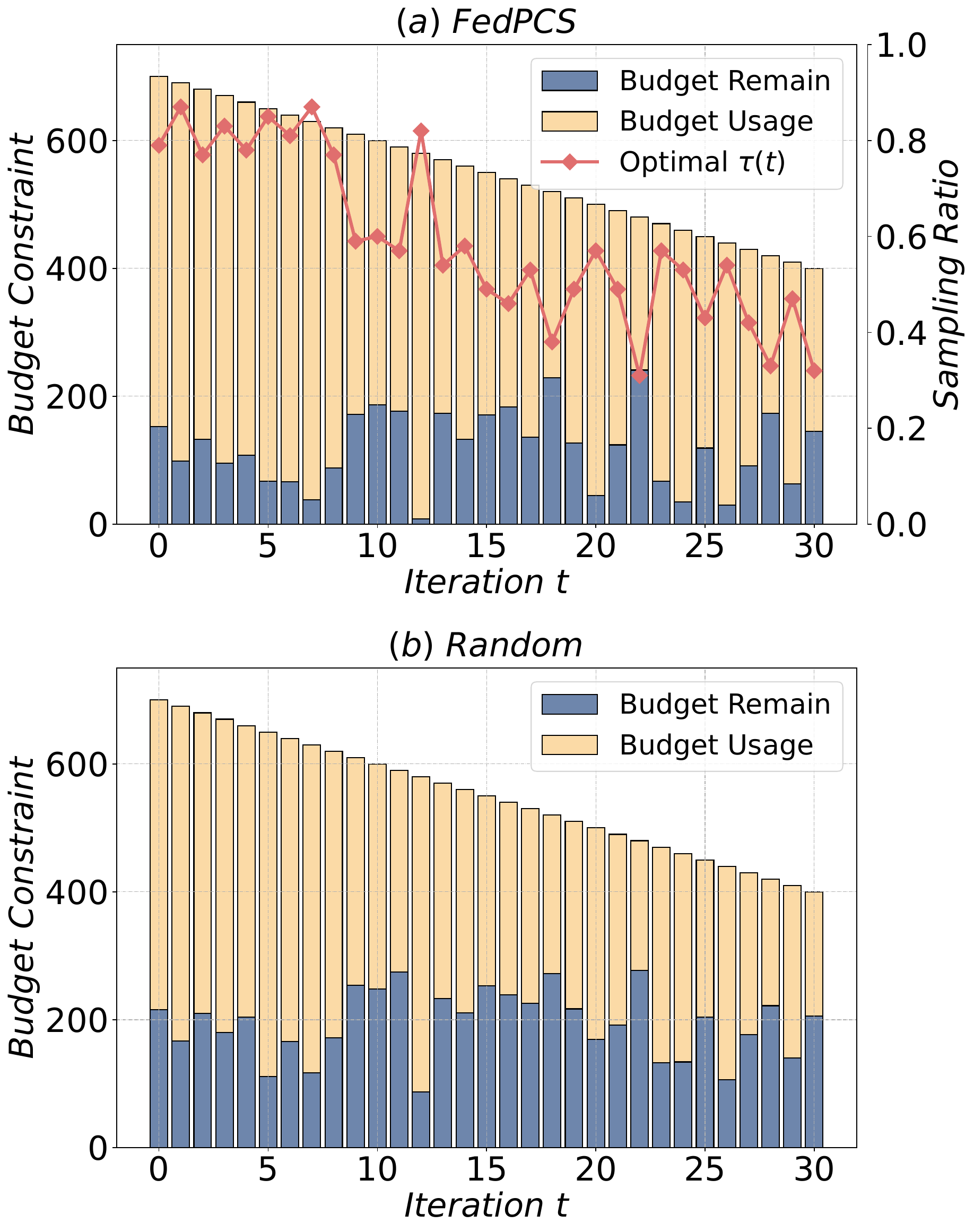}}
            \caption{Privacy budget usage comparison on Fashion-MNIST.}
            \label{opt_K_fmnist}
        \end{minipage}
        \begin{minipage}{150pt}
            \centerline{\includegraphics[width=1.0\textwidth, trim=0 0 0 0,clip]{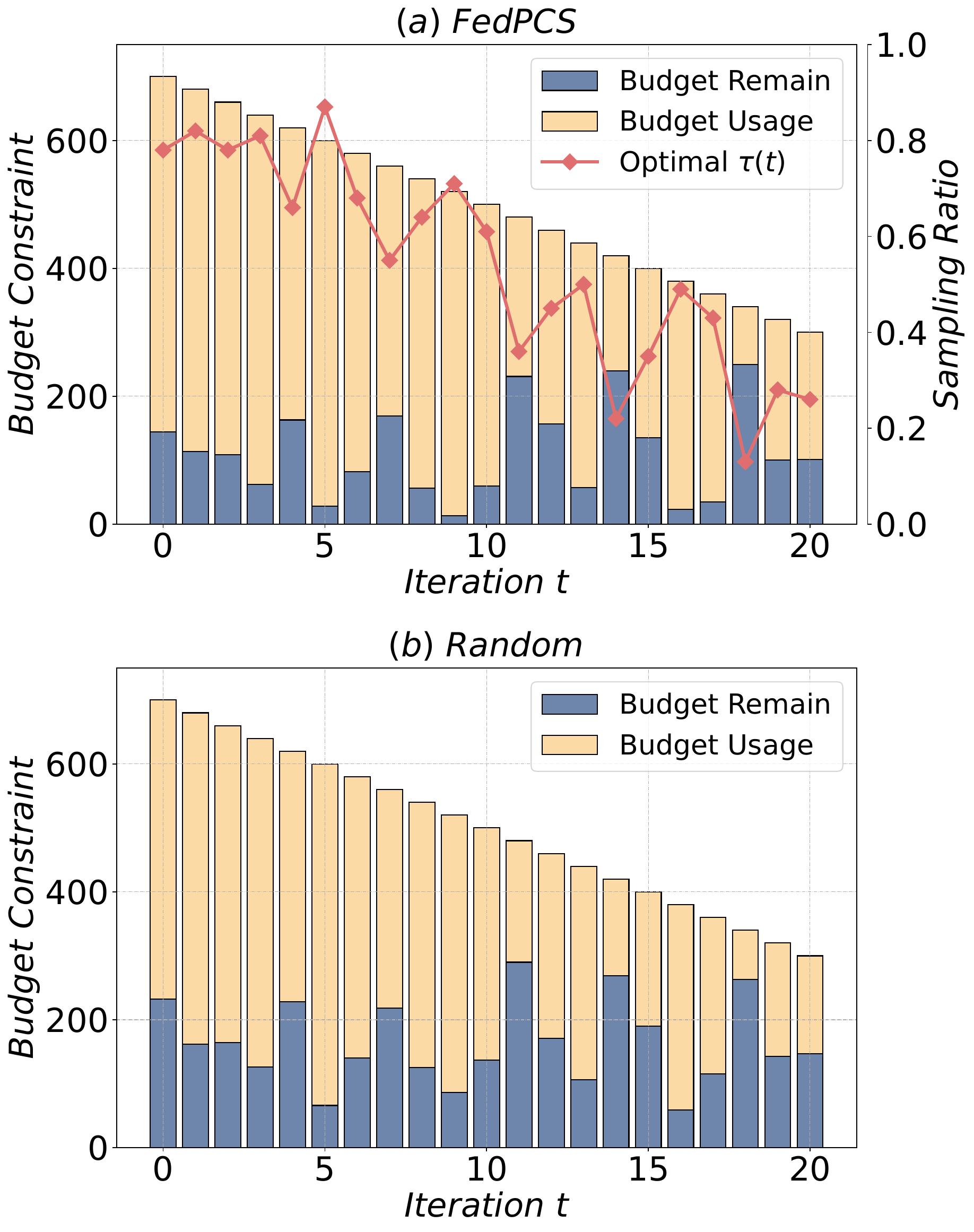}}
            \caption{Privacy budget usage comparison on Tiny-ImageNet.}
            \label{opt_K_tinyimagenet}
        \end{minipage}
    
    \end{minipage}
\vspace{-10pt}
\end{figure*}

\section{Experiments} \label{Experiments}
In this section, we perform comprehensive numerical experiments to validate our analytical results and assess the effectiveness of our proposed FedPCS framework. We first introduce the experiment setups and then provide the experimental results compared with the state-of-the-art benchmarks on different real-world datasets. 

\begin{table}[t]
\setlength{\abovecaptionskip}{0pt} 
\renewcommand\arraystretch{1.0}
\caption{Basic Information of Datasets}
\begin{center}
\resizebox{0.48\textwidth}{!}{\begin{tabular}{c|c|c|c|c}
\toprule[1pt]
\textbf{Datasets}&\textbf{Traing Set Size}&\textbf{Test Set Size}&\textbf{Class} &\textbf{Image Size}  \\ \cmidrule[0.5pt](l{1pt}r{0pt}){1-5}

\textbf{Fashion-MNIST} & 60,000 & 10,000  & 10 & 1 $\times$ 28 $\times$ 28  \\ \cmidrule[0.5pt](l{1pt}r{0pt}){1-5}

\textbf{CIFAR-10} & 50,000 & 10,000  & 10 & 3 $\times$ 32 $\times$ 32  \\ \cmidrule[0.5pt](l{1pt}r{0pt}){1-5}

\textbf{SVHN} & 73,257 & 26,032  & 10 & 3 $\times$ 32 $\times$ 32  \\ \cmidrule[0.5pt](l{1pt}r{0pt}){1-5}

\textbf{CIFAR-100} & 50,000 & 10,000  & 100 & 3 $\times$ 32 $\times$ 32  \\ \cmidrule[0.5pt](l{1pt}r{0pt}){1-5}

\textbf{CINIC-10} & 90,000 & 90,000  & 10 & 3 $\times$ 32 $\times$ 32  \\ \cmidrule[0.5pt](l{1pt}r{0pt}){1-5}

\textbf{Tiny-ImageNet} & 100,000 & 10,000  & 200 & 3 $\times$ 64 $\times$ 64  \\ 
\bottomrule[1pt]
\end{tabular}}
\label{basic_information_of_datasets}
\end{center}
\vspace{-6pt}
\end{table}

\subsection{Experimental Setups}
\subsubsection{Experimental Environment} 
We conduct our experiment on a workstation equipped with Ubuntu 22.04.4; CPU: Intel(R) Xeon(R) Gold 6133 @ 2.50GHz; RAM: 32GB DDR4 2666 MHz; GPU: NVIDIA GeForce RTX 4090; CUDA version 12.2.

\subsubsection{Datasets and Local Model Architecture}
We conduct our experiments on six different real-world datasets, i.e., Fashion-MNIST \cite{xiao2017fashion}, CIFAR-10\&CIFAR-100 \cite{krizhevsky2009learning}, SVHN \cite{netzer2011reading}, CINIC-10 \cite{darlow2018cinic} and Tiny-ImageNet\footnote{\href{https://www.kaggle.com/c/tiny-imagenet}{https://www.kaggle.com/c/tiny-imagenet}.}, to validate the performance of our proposed FedPCS. The basic information regarding the datasets is summarized in TABLE~\ref{basic_information_of_datasets}. We partition the training data for each participating client into IID and Non-IID distribution based on Dirichlet distribution \cite{hsu2019measuring} and default the Dirichlet parameter $\alpha_{\text{Dir}} = 0.5$ to validate the robustness of our proposed FedPCS. As shown in Fig.~\ref{data_distribution}, we display the heat map of IID and Non-IID datasets on CIFAR-10 respectively. The color of the heat map is shallow under the IID distribution with a close range of label numbers, while for the Non-IID distribution, the heat map is much deeper with a wider range of label quantity, indicating a higher level of data heterogeneity. 

We utilize different model architectures for local training. For Fashion-MNIST and CIFAR-10, we implement the convolutional neural network (CNN) models similar to the architectures in \cite{mcmahan2017communication}; for CINIC-10, we adopt the CNN same as \cite{ye2023feddisco}; for SVHN, CIFAR-100, and Tiny-ImageNet datasets, we utilize the VGG-11, VGG-16 and ResNet-18 for local training. Besides, similar to \cite{feng2023towards}, we use ResNet-18 pre-trained on ImageNet-1K as the backbone of the experiments on Tiny-ImageNet.


\subsubsection{Hyperparameter settings}
We apply the stochastic gradient descent (SGD) optimizer for local model training and summarize the learning rate, batchsize, global iteration, and local training epoch specific to each dataset in TABLE~\ref{hyperparameter}. Regarding the parameters for
the incentive-based FL system, we default the total participating client number $N = 100$, the sampling rate $\tau = \{0.2, 0.5, 0.8\}$, the boundary condition for the privacy budget $\rho_{i}^{t} \in [0.01, 12]$, aggregation parameter $\gamma \in (0,1)$ for the central server's cost function in Eq.~(\ref{central_server_utility}) and the clipping threshold $\epsilon_{0} \!=\! 10^{-3}$ for obtaining the fixed point of $\phi(t)$ in Algorithm \ref{alg2}. Additionally, to enable a heterogeneous client set for model training, we randomly generate the weight parameter $\varphi_{i}$, which follows the uniform distribution $\mathbb{U}(0,1)$.

\begin{table}[t]
\setlength{\abovecaptionskip}{0pt} 
\caption{FL Training Hyperparameter Settings}
\renewcommand\arraystretch{1.0}
\begin{center}
\resizebox{0.48\textwidth}{!}{\begin{tabular}{c|c|c|c|c}
\toprule[1pt]
\textbf{Datasets} & \textbf{Learning Rate}&\textbf{Batchsize} & \textbf{Global Iteration} & \textbf{Local Epoch}  \\ \cmidrule[0.5pt](l{1pt}r{0pt}){1-5}

\textbf{Fashion-MNIST} & 0.1 & 32  & 30 & 5 \\ \cmidrule[0.5pt](l{1pt}r{0pt}){1-5}

\textbf{CIFAR-10} & 0.1 & 32  & 80 & 5 \\ \cmidrule[0.5pt](l{1pt}r{0pt}){1-5}

\textbf{SVHN} & 0.1 & 256  & 120 & 5 \\ \cmidrule[0.5pt](l{1pt}r{0pt}){1-5}

\textbf{CIFAR-100} & 0.01 & 64  & 100 & 5 \\ \cmidrule[0.5pt](l{1pt}r{0pt}){1-5}

\textbf{CINIC-10} & 0.01 & 64  & 60 & 5 \\ \cmidrule[0.5pt](l{1pt}r{0pt}){1-5}

\textbf{Tiny-ImageNet} & 0.01 & 64  & 20 & 10 \\ 
\bottomrule[1pt]
\end{tabular}}
\label{hyperparameter}
\end{center}
\vspace{-6pt}
\end{table}

\begin{figure*}[t]
\setlength{\abovecaptionskip}{3pt} 
\centerline{\includegraphics[width=0.98\textwidth, trim=0 10 0 5,clip]{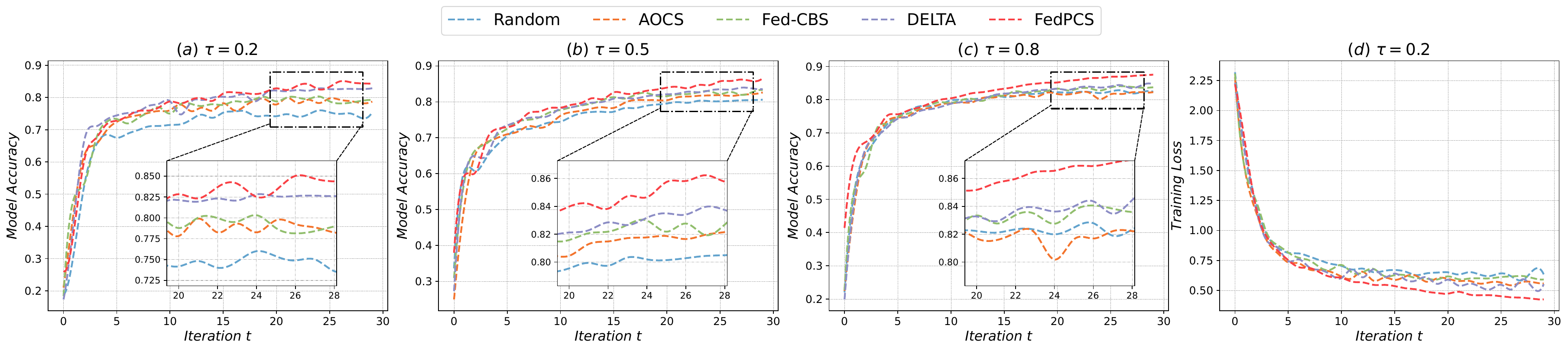}}
\caption{Accuracy and loss curve on Fashion-MNIST dataset with different sampling rate $\tau$.}
\label{FMNIST}
\vspace{-13pt}
\end{figure*}

\begin{figure*}[t]
\setlength{\abovecaptionskip}{3pt} 
\centerline{\includegraphics[width=0.98\textwidth, trim=0 10 0 5,clip]{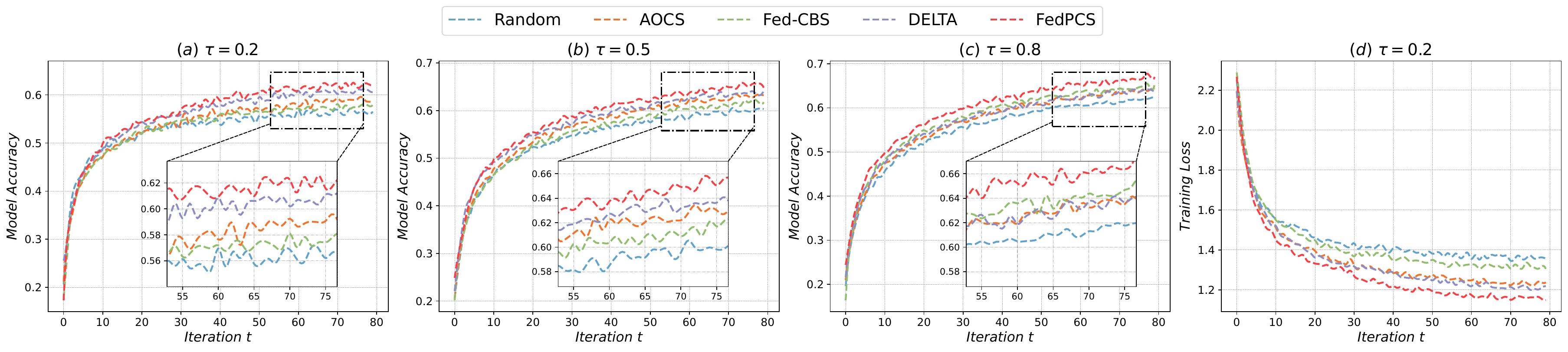}}
\caption{Accuracy and loss curve on CIFAR-10 dataset with different sampling rate $\tau$.}
\label{CIFAR10}
\vspace{-13pt}
\end{figure*}

\begin{figure*}[t]
\setlength{\abovecaptionskip}{3pt} 
\centerline{\includegraphics[width=0.98\textwidth, trim=0 10 0 5,clip]{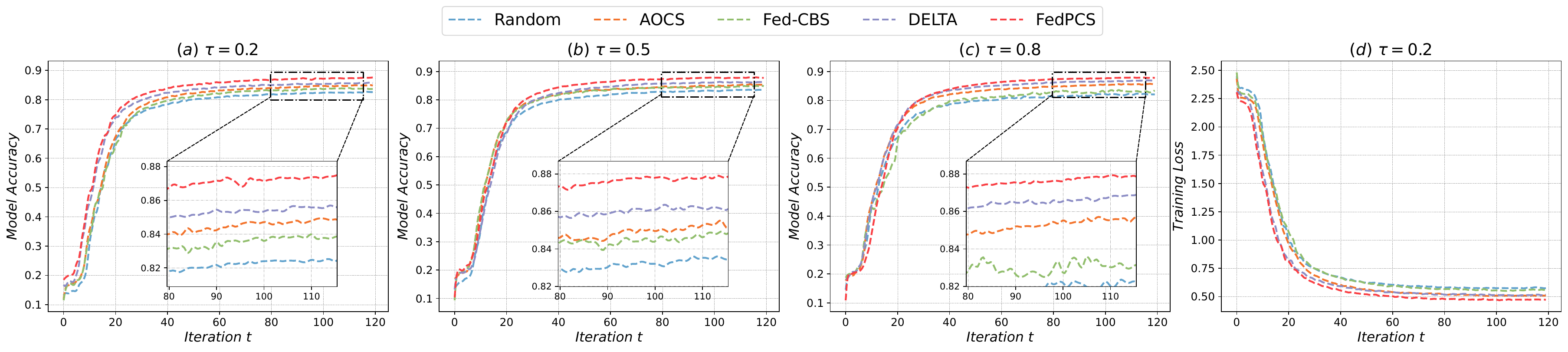}}
\caption{Accuracy and loss curve on SVHN dataset with different sampling rate $\tau$.}
\label{SVHN}
\vspace{-13pt}
\end{figure*}

\begin{figure*}[t]
\setlength{\abovecaptionskip}{2pt} 
\centerline{\includegraphics[width=0.98\textwidth, trim=0 10 0 5,clip]{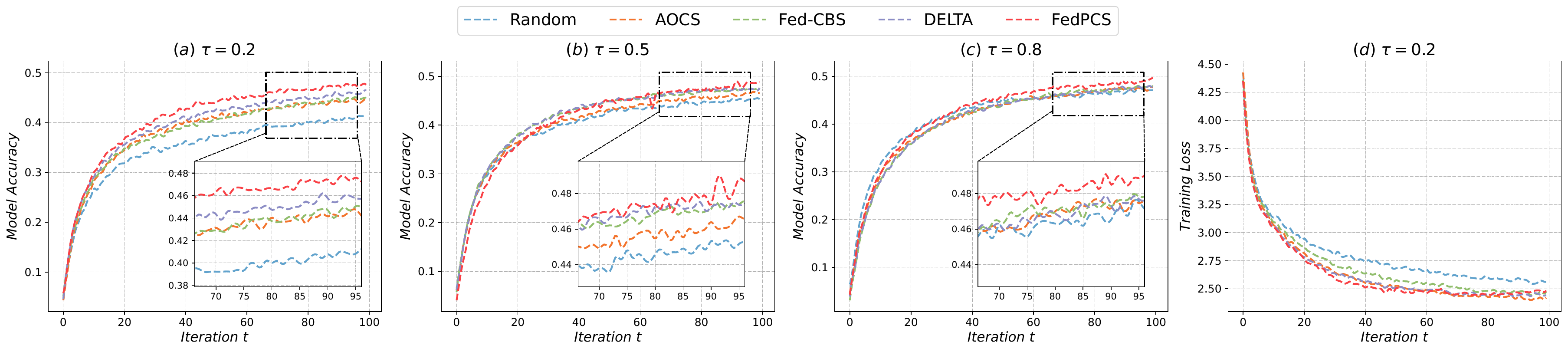}}
\caption{Accuracy and loss curve on CIFAR-100 dataset with different sampling rate $\tau$.}
\label{CIFAR100}
\vspace{-16pt}
\end{figure*}

\subsubsection{Baselines}
We consider several baseline client sampling algorithms in FL systems. The Random client sampling strategy, similar to FedAvg \cite{mcmahan2017communication}, selects clients for global model training with a fixed sampling probability $x_{i}^{t} \!=\! \frac{1}{N}$. The AOCS algorithm \cite{chen2022optimal} reduces the communication burden in FL systems and defines the sampling probability according to the norm of the local update by $x_{i}^{t} \!=\! \frac{\|(m + l - n)\nabla F_{i}(\boldsymbol{w(t)})\|}{\sum_{j=1}^{l} \|\nabla F_{j}(\boldsymbol{w(t)})\|}$. The Fed-CBS mechanism \cite{zhang2023fed} serves as a heterogeneity-aware sampling scheme that utilizes a privacy-preserving quadratic class-imbalance degree (QCID) metric for the client sampling probability design, i.e., the sampling probability $x_{i}^{t} \propto\frac{[QCID(\mathcal{M}_{i-1})]^{\beta_{i-1}}}{[QCID(\mathcal{M}_{i})]^{\beta_{i}}}$, where the parameter $\beta_{i-1} < \beta_{i}$. The DELTA algorithm \cite{wang2024delta} characterizes the influence of the client divergence and local variance, and samples the representative clients with valuable information for global model aggregation. The client sampling probability is defined by $x_{i}^{t} = \frac{\sqrt{\alpha_{1} \zeta_{G,i,t}^{2} + \alpha_{2} \sigma_{L,i}^{2}}}{\sum_{j=1}^{m} \sqrt{\alpha_{1}\zeta_{G,j,t}^{2} + \alpha_{2} \sigma_{L,j}^{2}}}$, where $\zeta_{G,i,t}$ and $\sigma_{L,i}$ represents the gradient diversity and local variance respectively.

\subsection{Utility Evaluation Analysis}
\vspace{-3pt}
It is displayed in Fig. \ref{phi} that $\phi(t)$ typically converges to the fixed point mostly within 5 iterations, which validates the linear computation complexity and fast convergence of Algorithm~\ref{alg1} as elaborated in Section \ref{Stackelberg_Nash_Equilibrium_Analysis}-B. Moreover, we introduce the Pearson correlation coefficient to analyze the optimal strategies of participating clients and present a heat map of the privacy budget $\rho_{i}^{t}$ for each client, as shown in Fig. \ref{correlation}. Generally, the warmer color indicates higher parameter correlations and smaller divergence among the strategies of egocentric clients, thereby increasing the global utilities and enhancing the FL model performance. As illustrated in Fig. \ref{correlation}\subref{Correlation_random}-\subref{Correlation_delta}, the correlation of each client's strategy within baseline methods is primarily highest with its own parameter while relatively lower with the parameters from other clients. This suggests that egocentric clients prioritize personal utility, which may degrade the performance of the FL systems. For our proposed framework FedPCS, as depicted in Fig. \ref{correlation}\subref{Correlation_pcs}, the high correlation among parameters indicates that egocentric clients achieve the trade-off between the individual profit and global model performance, which demonstrates the effectiveness and superiority of our proposed FedPCS. To validate the derivations in Proposition~\ref{proposition_random_sampling_poa} and Theorem~\ref{theorem_privacy_poa}, we conduct a comparative analysis of social welfare across different datasets during the training process, as illustrated in Fig.~\ref{social_welfare_comparison}. The curves demonstrate that the privacy-aware client sampling strategy in FedPCS closely approximates the socially optimal strategy throughout the training process, while a significant gap is observed in comparison to the random sampling policy, reflecting the superiority of our proposed FedPCS framework.

\begin{figure*}[t]
\setlength{\abovecaptionskip}{3pt} 
\centerline{\includegraphics[width=0.98\textwidth, trim=0 10 0 5,clip]{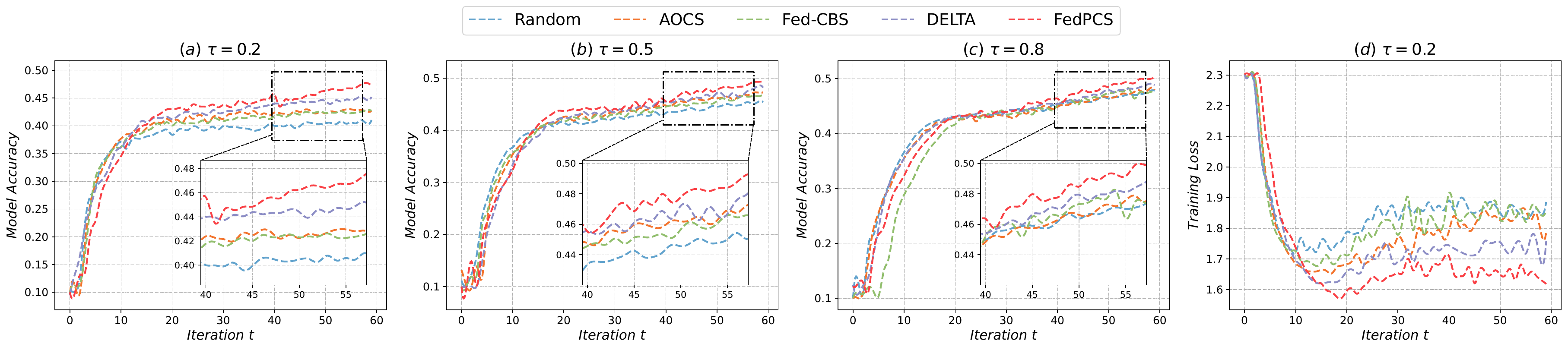}}
\caption{Accuracy and loss curve on CINIC-10 dataset with different sampling rate $\tau$.}
\label{CINIC10}
\vspace{-13pt}
\end{figure*}

\begin{figure*}[t]
\setlength{\abovecaptionskip}{3pt} 
\centerline{\includegraphics[width=0.98\textwidth, trim=0 10 0 5,clip]{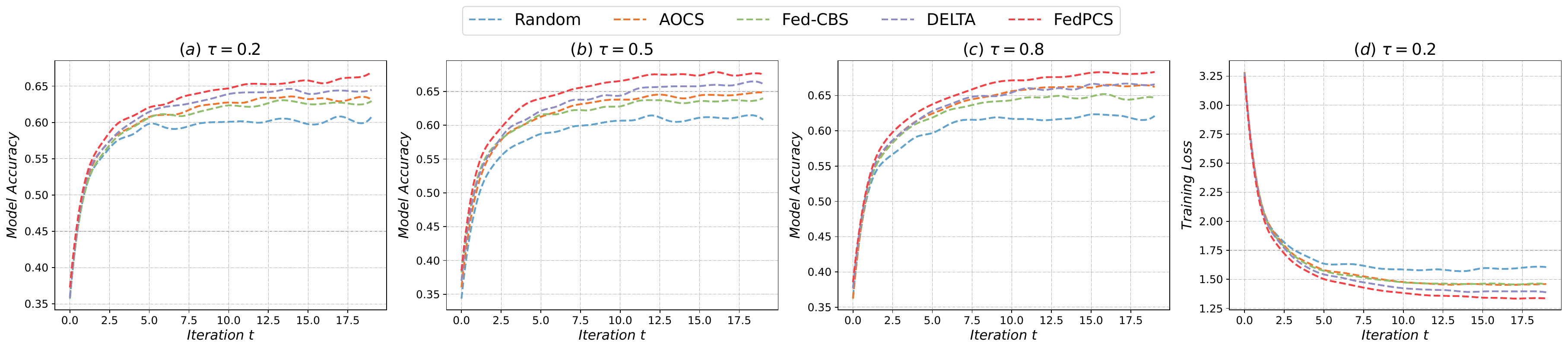}}
\caption{Accuracy and loss curve on Tiny-ImageNet dataset with different sampling rate $\tau$.}
\label{Tinyimagenet}
\vspace{-13pt}
\end{figure*}

\begin{figure*}[t]
\setlength{\abovecaptionskip}{-2pt} 
\centerline{\includegraphics[width=0.97\textwidth, trim=60 50 60 50,clip]{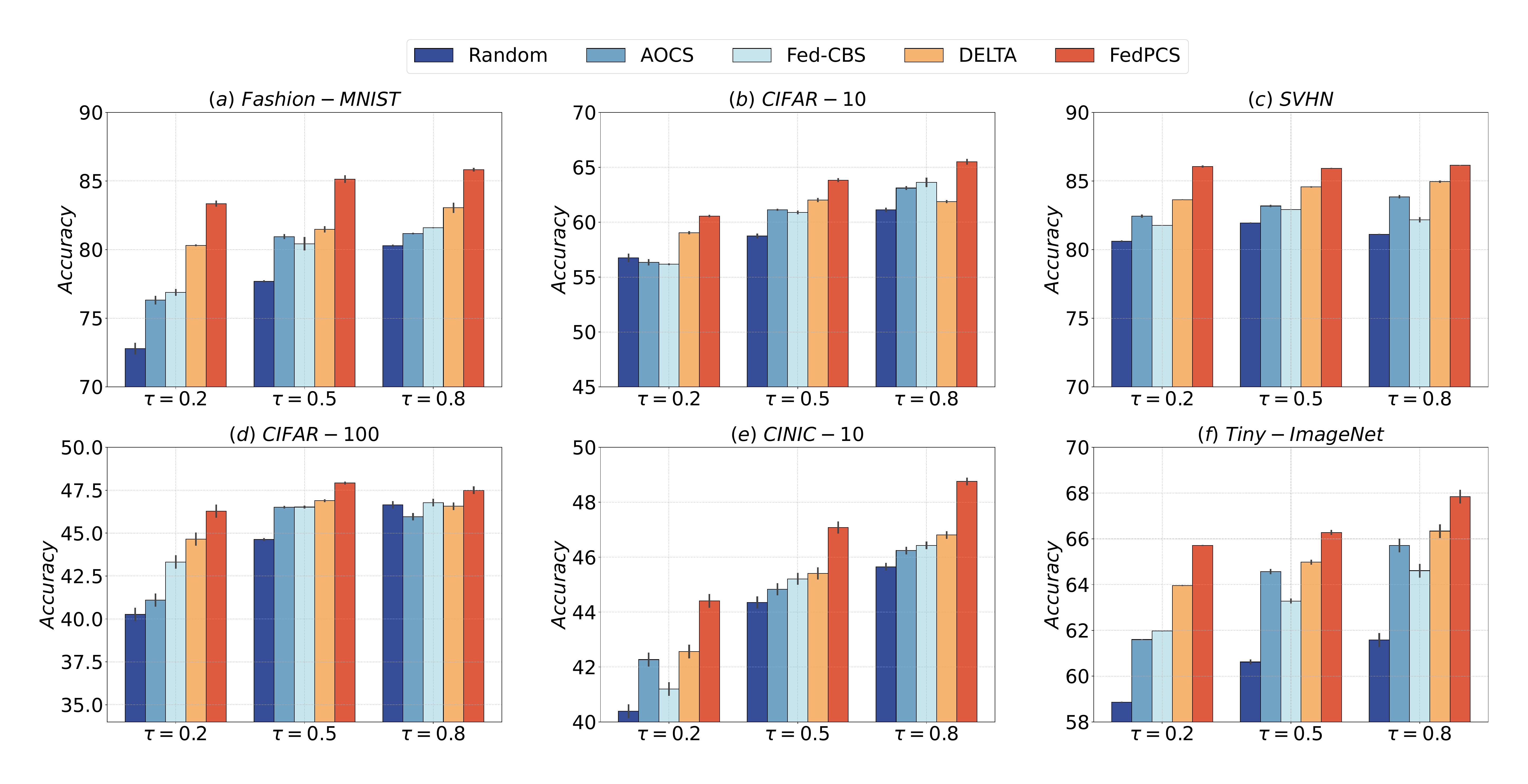}}
\caption{Accuracy comparison on all datasets with different sampling rates under Non-IID settings.}
\label{Non_IID}
\vspace{-16pt}
\end{figure*}

To demonstrate the effectiveness of our proposed FL framework FedPCS in cost minimization for the central server, we compare the cost in Eq.~(\ref{central_server_utility}) on Fashion-MNIST/CIFAR-10/SVHN/CIFAR-100/CINIC-10/Tiny-ImageNet datasets and benchmark FedPCS against several baseline methods under various sampling rate settings ($\tau \!=\! \{0.2, 0.5, 0.8\}$). As illustrated in Fig. \ref{central_server_cost}, based on the optimal reward derived in Lemma~\ref{lemma_optimal_reward}, FedPCS consistently yields the lowest cost values for the central server, underscoring the efficiency of FedPCS in minimizing the central server's cost. Furthermore, as the number of sampled clients increases, the disparity between FedPCS and baseline methods becomes more pronounced, indicating that our method exhibits superior performance regardless of client scale. These advantages become increasingly prominent with larger client populations, highlighting the robustness and effectiveness of FedPCS under large-scale FL systems. Furthermore, we explore the scenarios with the time-varying privacy budget constraints $B_{t} \!\in\! \{300, 700\}$ and $B_{t} \!\in\! \{400, 700\}$ with uniform decrements for Fashion-MNIST and Tiny-ImageNet datasets respectively, as illustrated in Figs.~\ref{opt_K_fmnist}-\ref{opt_K_tinyimagenet}. The results demonstrate that, with the decrement of the privacy budget constraint, the central server gradually reduces the optimal client sampling ratio $\tau(t)$ as derived in Theorem~\ref{theorem_opt_reward_and_tau}. We also analyze the privacy usage ratio of FedPCS, as shown in Fig.~\ref{opt_K_fmnist}(a) and Fig.~\ref{opt_K_tinyimagenet}(a), which consistently achieves a higher privacy usage ratio compared to the random strategy depicted in Fig.~\ref{opt_K_fmnist}(b) and Fig.~\ref{opt_K_tinyimagenet}(b), indicating the privacy efficiency and robustness of our proposed FedPCS framework under dynamic scenarios.

\subsection{Model Training Performance Assessment}
Based on the optimal strategy profile derived in Theorem \ref{theorem_optimal_alpha} and Lemma \ref{lemma_optimal_reward}, we evaluate the performance of our proposed FedPCS framework against the baseline by considering the model accuracy and training loss. Table~\ref{accuracy} shows the FL model accuracy for different real-world datasets (Fashion-MNIST, CIFAR-10, SVHN, CIFAR-100, CINIC-10, and Tiny-ImageNet) under varying sampling rates ($\tau \!=\! \{0.2, 0.5, 0.8\}$) and IID and Non-IID settings. Our experiment results demonstrate that FedPCS consistently outperforms the baseline methods in all client scales. Additionally, we display the prediction accuracy and loss curves observed during the model training process, as illustrated in Figs. \ref{FMNIST}-\ref{Tinyimagenet}. The curves indicate that with the increment of client sampling rates, the accuracy enhancement becomes more pronounced, and FedPCS consistently surpasses the benchmarks. The training loss curves further corroborate that FedPCS achieves a faster reduction in training loss and consistently attains lower loss values compared to the baseline approaches. The results under Non-IID settings, as illustrated in Fig.~\ref{Non_IID}, further confirm the demonstrated superiority.

\section{Conclusion} \label{Conclusion}
In this paper, we propose a game-theoretic framework with a privacy-aware client sampling strategy in federated learning to achieve clients' privacy preservation and enhancement of FL model performance. First, we derive a pioneering upper bound for the accuracy loss of FL models incorporating privacy-aware client sampling probabilities, based on which, a two-stage Stackelberg game is constructed to model the interaction between the central server and clients. We analytically demonstrate that our optimal strategy profile for the central server and clients achieves the Stackelberg Nash Equilibrium. Moreover, to address the challenge of approximating other clients' local information, a mean-field estimator is introduced for the client's optimal strategy design. We conduct rigorous theoretical analyses regarding the existence and convergence of the fixed point for the mean-field estimator and establish convergence upper bound for FedPCS. Through PoA analysis, we theoretically demonstrate the limitation of the random sampling strategy and the efficacy of FedPCS in comparison to the socially optimal policy. We further extend our model and analysis to dynamic privacy constraint scenarios and obtain the closed-form adaptive optimal client sampling ratio and reward at each global iteration. Extensive experimental results validate the superiority and robustness of our proposed framework FedPCS compared to state-of-the-art baselines under IID and Non-IID datasets. 



\bibliographystyle{IEEEtran}
\bibliography{reference}




\newpage

\appendices

\section{Proof of Proposition \ref{proposition_variance}} \label{proof_proposition_variance}
\begin{proof}
For any participating client $i$ with two adjacent datasets $\mathcal{D}_{i}$ and $\mathcal{D}_{i}^{'}$ and hypothesise that there exists a clipping threshold $W$ for the $i$-th client's local model parameters at $t$-th global iteration with adding artificial perturbation ($\|\boldsymbol{w_{i}(t)}\| \!\leq\!  W$), the sensitivity of the query function $Q$ regarding the input $\mathcal{D}_{i}$ and $\mathcal{D}_{i}^{'}$ can be obtained as follows:
\begin{align} \label{sensitivity}
\Delta{Q} =& \max_{\mathcal{D}_{i},\mathcal{D}_{i}^{'}}\|Q(\mathcal{D}_{i}) - Q(\mathcal{D}_{i}^{'}) \|_{2} \nonumber \\
=& \max_{\mathcal{D}_{i},\mathcal{D}_{i}^{'}} \frac{1}{|\mathcal{D}_{i}|} \| \text{arg}\min_{\boldsymbol{w}} \sum_{j \in \mathcal{D}_{i}}  f_{i}(\boldsymbol{w},\mathcal{D}_{i}) - \sum_{j \in \mathcal{D}_{i}^{'}} \! f_{i}(\boldsymbol{w},\mathcal{D}_{i}^{\prime})\|_{2} \nonumber \\
=& \frac{2 W}{|\mathcal{D}_{i}|}. 
\end{align}

Given the constraint of the privacy budget $\rho_{i}^{t}$ aforementioned and combing the expression of the sensitivity of the query function $Q$ in Eq.~(\ref{sensitivity}), the proposition holds.    \end{proof}

\section{Proof of Proposition \ref{proposition_central_server_accuracy_loss}} \label{proof_proposition_central_server_accuracy_loss}
\begin{proof}
Since the global loss function $F(\boldsymbol{w(t)})$ satisfies the $\beta$-Lipschitz smoothness assumption and Polyak-\L{}ojasiewicz condition, assume the expected squared $\ell_{2}$-norm of the perturbed global model satisfies $\mathbb{E}[\|\nabla \! \widetilde{F}(\boldsymbol{w(t)})\|_{2}^{2}] \!\leq\! G^{2}(t)$ \cite{rakhlin2011making}, the upper bound of the accuracy loss of the global model $\mathbb{E}[F(\boldsymbol{w(t)}) \!-\! F(\boldsymbol{w^{*}})] \!\leq\! \frac{\beta G^{2}(t)}{2 \mu^{2} t}$ holds according to \cite{rakhlin2011making}. From the perturbed local gradient in Eq.~(\ref{local_gradient_perturbation}), the global parameter with artificial Gaussian random noise can be derived as $\nabla \widetilde{F}(\boldsymbol{w(t)}) \!=\! \sum_{i=1}^{N} \theta_{i} (\nabla F_{i}(\boldsymbol{w(t)}) \!+\! \boldsymbol{n_{i}(t)}) \!=\! \nabla F(\boldsymbol{w(t)}) \!+\! \sum_{i=1}^{N} \theta_{i} \boldsymbol{n_{i}(t)}$. Further, note that the added Gaussian random noise possesses zero mean and $\mathbb{E}[\|\! \sum_{i=1}^{N} \! \theta_{i} \boldsymbol{n_{i}(t)}\|_{2}^{2}] \!=\! d  \sum_{i=1}^{N} \theta_{i}^{2} \sigma_{i}^{2}(t)$, where the variance $\sigma_{i}^{2}(t)$ is affected by $i$-th client's privacy budget $\rho_{i}^{t}$ based on Proposition~\ref{proposition_variance}. Consequently, the upper boundary of $\mathbb{E}[\|\nabla \widetilde{F}(\boldsymbol{w(t)})\|_{2}^{2}]$ can be derived as follows:
\begin{align}\label{global_gradient_expectation}
\mathbb{E}[\|\nabla \widetilde{F}(\boldsymbol{w(t)})\|_{2}^{2}] \! &= \! \mathbb{E}[\|\nabla F(\boldsymbol{w(t)}) \!+\! \sum\nolimits_{i=1}^{N} \!\theta_{i} \boldsymbol{n_{i}(t)} \|_{2}^{2}] \nonumber \\ 
&= \mathbb{E}[\|\nabla F(\boldsymbol{w(t)})\|_{2}^{2}] \!+\! \mathbb{E}[\|\sum\nolimits_{i=1}^{N} \! \theta_{i} \boldsymbol{n_{i}(t)}\|_{2}^{2}] \nonumber \\ 
&\leq V^{2} + d \sum\nolimits_{i=1}^{N} \theta_{i}^{2} \sigma_{i}^{2}(t) \triangleq G^{2}(t). 
\end{align}

Thus, based on Eq.~(\ref{global_gradient_expectation}), we can derive the upper bound of the accuracy loss of the global model as expressed in Eq.~(\ref{accuracy_loss}).~\end{proof} 

\section{Proof of Theorem \ref{theorem_optimal_alpha}} \label{proof_theorem_optimal_alpha}
\begin{proof}
Following the discrete-time Maximum Principle and mean-field term $\phi(t)$, the Hamilton function is constructed as: 
\begin{align} \label{hamilton}
H(t) &= (1 \!-\! (1 \!-\! \frac{\rho_{i}^{t}}{N \phi(t)})^{K})(\rho_{i}^{t}R_{t} \!-\! \varphi_{i} (\rho_{i}^{t})^{2} \!-\! (1 \!-\!\varphi_{i}) (\alpha_{i}^{t})^{2}) \nonumber \\
&+ \lambda(t + 1)(1 - \alpha_{i}^{t})(\phi(t) - \rho_{i}^{t}), 
\end{align} 
where $\phi(t)$ can be viewed as a given function, which is not affected by the correction factor $\alpha_{i}^{t}$. Based on the properties of discrete-time Hamilton function in \cite{di2012discrete}, to obtain the close form expression of $\alpha_{i}^{t}$, we consider the first-order and second-order constraints on the control vector $\alpha_{i}^{t}$ for each client $i \!\in\! \{1,2,\ldots,N\}$ as follows:
\begin{align}
\frac{\partial H(t)}{\partial \alpha_{i}^{t}} =& - 2 (1 - \varphi_{i})(1 - (1 - \frac{\rho_{i}^{t}}{N \phi(t)})^{K}) \alpha_{i}^{t} \nonumber \\
&+ \lambda(t+1)(\rho_{i}^{t} - \phi(t)) \label{hamilton_first_order} \\
\frac{\partial^{2} H(t)}{\partial(\alpha_{i}^{t})^{2}} =& - 2 (1 - \varphi_{i})(1 - (1 - \frac{\rho_{i}^{t}}{N \phi(t)})^{K}) \alpha_{i}^{t} < 0. \label{hamilton_second_order}
\end{align}

From Eq.~(\ref{hamilton_second_order}), the second-order derivative $\partial^{2} H(t)/\partial(\alpha_{i}^{t})^{2} \!<\! 0$, which indicates that the Hamilton function is strictly concave in the feasible region of $p_{i}(t)$ for all time horizons $t \!\in\! \{0,1, \ldots, \\ T\} $. The maximum of the Hamilton function holds by calculating the first derivative of Eq.~(\ref{hamilton}) regarding $\alpha_{i}^{t}$ equals to 0, i.e., $\partial H(t)/\partial(\alpha_{i}^{t}) \!=\! 0$. Consequently, we obtain the expression of the correction factor $\alpha_{i}^{t}$ as follows:
\begin{align} \label{alpha}
\alpha_{i}^{t} = \frac{\lambda(t+1)(\rho_{i}^{t} - \phi(t))}{2 (1 - \varphi_{i})(1 - (1 - \frac{\rho_{i}^{t}}{N \phi(t)})^{K})}.
\end{align}

Based on the differential property concerning the impulse vector $\lambda$ and the state vector $\rho$ in Hamilton function \cite{qi2021linear}, we have the following derivation:
\begin{align} \label{lambda_difference}
& \text{\small $ \lambda(t + 1) \!-\! \lambda(t) = - \frac{\partial H(t)}{\partial \rho_{i}^{t}} $} \nonumber \\
&= \text{\small $ \lambda(t+1)(1 \!-\! \alpha_{i}^{t}) - (1 \!-\! (1 \!-\! \frac{\rho_{i}^{t}}{N \phi(t)})^{K}) (R_{t} \!-\! 2 \varphi_{i} \rho_{i}^{t}) $} \nonumber \\
&- \text{\small $ \frac{K}{N \phi(t)} (1 \!-\! \frac{\rho_{i}^{t}}{N \phi(t)})^{K - 1} (\rho_{i}^{t}R_{t} \!-\! \varphi_{i} (\rho_{i}^{t})^{2} \!-\! (1 \!-\!\varphi_{i}) (\alpha_{i}^{t})^{2}). $}
\end{align}

From Eq.~(\ref{lambda_difference}), we can simplify the expression of $\lambda(t)$ as:
\begin{align} \label{lamda_iteration}
\lambda(t) = \alpha_{i}^{t} \lambda(t + 1) + S(t), \  S(t) = Q_{i}^{t} R_{t} + M_{i}^{t},
\end{align}
where $Q_{i}^{t} \!=\! 1 \!-\! (1 \!-\! \frac{\rho_{i}^{t}}{N \phi(t)})^{K} \!+\! \frac{K \rho_{i}^{t}}{N \phi(t)} (1 \!-\! \frac{\rho_{i}^{t}}{N \phi(t)})^{K - 1}$ and $M_{i}^{t} \!=\!  - 2 \varphi_{i} \rho_{i}^{t}(1 \!-\! (1 \!-\! \frac{\rho_{i}^{t}}{N \phi(t)})^{K}) - (1 \!-\! \frac{\rho_{i}^{t}}{N \phi(t)})^{K - 1} (\varphi_{i} (\rho_{i}^{t})^{2} \!+\! (1 \!-\!\varphi_{i}) (\alpha_{i}^{t})^{2})$. Following the boundary condition \cite{di2012discrete} for the final value at global iteration $T$ of the impulse vector $\lambda$, we have:
\begin{align}
&\text{\small $ \lambda(T) \!=\! \frac{\partial S(\rho_{i}^{T})}{\partial (\rho_{i}^{T})} \!=\! \frac{K}{N \phi(T)} (1 \!-\! \frac{\rho_{i}^{T}}{N \phi(T)})^{K \!- 1} (\rho_{i}^{T}R_{T} \!-\! \varphi_{i} (\rho_{i}^{T})^{2})  $} \nonumber \\
&+ \text{\small $ \!(1 \!-\! (1 \!-\! \frac{\rho_{i}^{T}}{N \phi(T)})^{K} \!) (R_{T} \!-\! 2 \varphi_{i} \rho_{i}^{T}) \!=\! Q_{i}^{T} \! R_{T} \!+\! M_{i}^{T} \!\!=\! S(T), \! $}
\end{align}
where $S(\cdot)$ is a weighting function of the state variable at the global iteration $T$ with the expression by:
\begin{align}
S(\rho_{i}^{T}) = (1 \!-\! (1 \!-\! \frac{\rho_{i}^{T}}{N \phi(T)})^{K})(\rho_{i}^{T} R_{T} \!-\! \varphi_{i} (\rho_{i}^{T})^{2}).
\end{align}

By iteratively aggregating the formulations in Eq.~(\ref{lamda_iteration}), we obtain the expression of $\lambda(t)$, $0 \!\in\! \{0,1, \ldots, T-1\}$ as:
\begin{align} \label{lambda_t}
\lambda(t) = \sum_{j=t+1}^{T} \left[ S(j) \prod_{r=t}^{j-1} \alpha_{i}^{k}) \right] + S(t)
\end{align}

Then, according to the formula of $\alpha_{i}^{t}$ in Eq.~(\ref{alpha}) and $\lambda(t)$ in Eq.~(\ref{lambda_t}), we can finalize the optimal correction factor $\alpha_{i}^{t*}$ as shown in Eq.~(\ref{optimal_alpha}). Hence, the theorem holds.  \end{proof}

\section{Proof of Lemma \ref{lemma_optimal_reward}} \label{proof_lemma_optimal_reward}
\begin{proof}
Based on Theorem~\ref{theorem_optimal_alpha}, we observe that the optimal correction factor $\alpha_{i}^{t*}$ can be concluded as an analytic function regarding the given reward $R_{k}$, $k \in \{t+1, t+2, \ldots, T\}$, i.e., 
\begin{align} \label{alpha_with_R}
\alpha_{i}^{t*} \!=\! I_{i}^{t+1} R_{t+1} + J_{i}^{t+1}.
\end{align}
where reward $R_{k}$, $k \!\in\! \{t+2, t+3, \ldots, T\}$ can be considered as known functions, {\small $I_{i}^{t+1} \!=\! \frac{(\rho_{i}^{t} - \phi(t))Q_{i}^{t+1}}{2 (1 - \varphi_{i})(1 - (1 - \frac{\rho_{i}^{t}}{N \phi(t)})^{K})}$ } and {\small $J_{i}^{t+1} \!=\! \frac{(\rho_{i}^{t} - \phi(t))\{\sum_{j=t+2}^{T} [S(j) \prod_{r=t+1}^{j-1} \alpha_{i}^{r}] + M_{i}^{t+1}\}}{2 (1 - \varphi_{i})(1 - (1 - \frac{\rho_{i}^{t}}{N \phi(t)})^{K})}$}. By substituting Eq.~(\ref{alpha_with_R}) into Eq.~(\ref{update_constraint_refined}), we obtain the iteration function regarding the state variable, i.e., the privacy budget $\rho_{i}^{t}$ at $t$-th global training iteration as follows:
\begin{align} \label{rho_iteration}
\rho_{i}^{t} &= (1 \!-\! I_{i}^{t} R_{t} \!-\! J_{i}^{t} ) \phi(t \!-\! 1) \!+\! (I_{i}^{t} R_{t} \!+\! J_{i}^{t}) \rho_{i}^{t-1} \nonumber \\
&= (\rho_{i}^{t-1} \!-\! \phi(t \!-\! 1)) I_{i}^{t}R_{t} \!+\! (1 \!-\! J_{i}^{t} ) \phi(t \!-\! 1) \!+\! J_{i}^{t} \rho_{i}^{t-1} \nonumber \\
&= G_{i}^{t} R_{t} + H_{i}^{t} .
\end{align}
where $G_{i}^{t} = (\rho_{i}^{t-1} - \phi(t - 1)) I_{i}^{t} $ and $H_{i}^{t} = (1 - J_{i}^{t} ) \phi(t - 1) + J_{i}^{t} \rho_{i}^{t-1}$. Then, based on Eq.~(\ref{rho_iteration}), we reformulate Eq.~(\ref{central_server_utility}) as:
\begin{align} \label{central_server_utility_with_alpha}
\footnotesize \!\!\!\!\!\!\! U_{t}(R_{t},\boldsymbol{\alpha^{*}}) \!=\!\! \sum_{i=1}^{K} \! \left[\gamma \frac{\theta_{i}^{2}}{t |\mathcal{D}_{i}|^{2}(G_{i}^{t} R_{t} \!+\! H_{i}^{t})} \!+\! (1 \!-\! \gamma)( G_{i}^{t} R_{t}^{2} \!+\! H_{i}^{t} R_{t}) \right], 
\end{align}
where the optimal correction factor vector $\boldsymbol{\alpha_{i}^{*}} \!=\! \{\alpha_{i}^{t*}, t \!\in\! \{0, 1, \\ \ldots, T\}\}$ and Eq.~(\ref{central_server_utility_with_alpha}) is seen as a high-ordered non-polynomial function regarding $R_{t}$. Consequently, the first-order and the second-order of the central server’s cost function in Eq.~(\ref{central_server_utility_with_alpha}) can be derived as follows:
\begin{align}
&\!\!\small \frac{\partial U_{t}(R_{t}, \! \boldsymbol{\alpha^{*}})}{\partial R_{t}} \!=\!\! \sum_{i=1}^{K} \! \left[(1 \!-\! \gamma) (2 G_{i}^{t} R_{t} \!+\! H_{i}^{t}) \!-\! \frac{\gamma \theta_{i}^{2} G_{i}^{t}}{t |\mathcal{D}_{i}|^{2}(G_{i}^{t} R_{t} \!+\! H_{i}^{t})^{2}}\right], \label{central_server_first_order} \\
&\!\!\small \frac{\partial^{2} U_{t}(R_{t}, \! \boldsymbol{\alpha^{*}})}{\partial R_{t}^{2}} \!=\! \sum_{i=1}^{K}\! \left[ 2 G_{i}^{t}(1 \!-\! \gamma) \!+\! \frac{2 \gamma \theta_{i}^{2} (G_{i}^{t})^{2}}{t |\mathcal{D}_{i}|^{2}(G_{i}^{t} R_{t} \!+\! H_{i}^{t})^{2}}\right] \!>\! 0, \!\!  \label{central_server_second_order}
\end{align}

Then, to demonstrate the existence of the minimum for the utility in Eq.~(\ref{central_server_utility_with_alpha}),  we analyze the root of the first-order derivation as shown in Eq.~(\ref{central_server_first_order}) in the feasible region of $R_{t}$:
\begin{align} \label{limit}
\footnotesize \lim_{R_{t} \rightarrow 0} \frac{\partial U_{t}(R_{t},\boldsymbol{\alpha^{*}})}{ \partial R_{t}} \rightarrow - \infty, \ \lim_{R_{t} \rightarrow +\infty} \frac{\partial U_{t}(R_{t},\boldsymbol{\alpha^{*}})}{ \partial R_{t}} \rightarrow + \infty.
\end{align}

From Eq.~(\ref{limit}), we verify the existence of the solution in the feasible region of $R_{t} \in (0, +\infty)$ for the high-ordered non-polynomial equation in Eq.~(\ref{central_server_utility_with_alpha}). By calculating the equation $\partial U_{t}(R_{t},\boldsymbol{\alpha^{*}})/\partial R_{t} \!=\! 0$, we can obtain the function of optimal reward $R_{t}^{*}$ as shown in Eq.~(\ref{optimal_r}). Thus, the theorem holds.   \end{proof}

\section{Proof of Theorem \ref{theorem_fixed_point}} \label{proof_theorem_fixed_point}
\begin{proof} 
Based on Definition \ref{definition_mean_field} and Theorem \ref{theorem_optimal_alpha}, for each~client $i \!\in\! \{1, 2, \ldots, N\}$, we substitute the mean-field term $\phi(t) \!=\! \frac{1}{N}\sum_{i=1}^{N} \rho_{i}^{t}$ where $i \!\in\! \{1,2 ,\ldots, N\}$, $t \!\in\! \{0,1, \ldots, T\}$ into Eq.~(\ref{update_constraint_refined}). Then, the expression of state variable $\rho_{i}^{t}$ in Eq.~(\ref{update_constraint_refined}) can be further reformulated as follows:
\begin{align} \label{rho_iteration_with_estimators}
\rho_{i}^{t} &= \frac{1}{N} \sum_{k=1}^{N} \rho_{k}^{t-1} + (\rho_{i}^{t-1} - \frac{1}{N} \sum_{k=1}^{N} \rho_{k}^{t-1}) \nonumber \\
&\times \!\frac{(\rho_{i}^{t-1} \!-\! \phi(t \!-\! 1))\{\sum_{j=t+1}^{T} [S(j) \prod_{r=t}^{j-1}  \alpha_{i}^{k}] \!+\! S(t)\}}{2 (1 \!-\! \varphi_{i})(1 \!-\! (1 \!-\! \frac{\rho_{i}^{t-1}}{N \phi(t-1)})^{K})}.
\end{align}

From Eq.~(\ref{rho_iteration_with_estimators}), we notice that the privacy budget $\rho_{i}^{t}$ of client $i$ at $t$-th global iteration is a continuous function of $\{\rho_{i}^{t} \mid t \!\in\! \\ \{0,1, \ldots, T\}, i \!\in\! \{1, 2, \ldots, N\}\}$ of all clients over time. Then, we define a continuous mapping from $\{\rho_{i}^{t} \mid t \!\in\! \{0,1,\ldots, T\}, \\ i \!\in\! \{1, 2, \ldots, N\}\}$ to $i$-th client's privacy budget $\rho_{i}^{t}$ in Eq.~(\ref{rho_iteration_with_estimators}) at $t$-th global training iteration by:
\begin{align} \label{rho_mapping}
\Theta_{i}^{t}(\{\rho_{i}^{t} \mid t \in \{0,1, \ldots, T\}, i \in \{1, 2, \ldots, N\}\}) = \rho_{i}^{t}.
\end{align} 

To summarize any possible mapping $\Theta_{i}^{t}$, we define the following vector functions as a mapping from $\{\rho_{i}^{t} \mid t \!\in\! \{0,1, \\ \ldots, T\}, i \!\in\! \{1, 2, \ldots, N\}\}$ to all clients’ privacy budget set over time for $t \!\in\! \{0,1, \ldots, T\}$:
\begin{align} \label{mapping}
& \Theta(\{\rho_{i}^{t} \mid t \in \{0,1, \ldots, T\}, i \in \{1, 2, \ldots, N\}\}) \nonumber \\
 = & \ (\Theta_{1}^{0}(\{\rho_{i}^{t} \mid t \in \{0,1, \ldots, T\}, i \in \{1, 2, \ldots, N\}\}), \ldots, \nonumber \\
& \Theta_{1}^{T}(\{\rho_{i}^{t} \mid t \in \{0,1, \ldots, T\}, i \in \{1, 2, \ldots, N\}\}), \ldots, \nonumber \\
& \Theta_{N}^{0}(\{\rho_{i}^{t} \mid t \in \{0,1, \ldots, T\}, i \in \{1, 2, \ldots, N\}\}), \ldots, \nonumber \\
& \Theta_{N}^{T}(\{\rho_{i}^{t} \mid t \in \{0,1, \ldots, T\}, i \in \{1, 2, \ldots, N\}\})).
\end{align}

Thus, the fixed point to $\Theta_{i}^{t}(\{\rho_{i}^{t} \mid t \!\in\! \{0,1, \ldots, T\}, i \!\in\! \{1, 2, \\ \ldots, N\}\}) = \Theta_{i}^{t}(\{\rho_{i}^{t} \mid t \in \{0, 1, \ldots, T\}, i \in \{1, 2, \ldots, N\}\})$ in Eq.~(\ref{mapping}) should be reached to make the mean-field term $\phi(t)$ replicate $\frac{1}{N}\sum_{i=1}^{N}\rho_{i}^{t}$. Subsequently, according to the iteration formulation of the privacy budget $\rho_{i}^{t}$ in Eq.~(\ref{update_constraint_refined}) and boundary condition in Eq.~(\ref{update_constraint_refined}). Assume that a tighter boundary condition of correction factor $\alpha_{i}^{t}$ exists, i.e., $0 < \alpha_{L} \leq \alpha_{i}^{t} \leq \alpha_{H} < 1$ with $\alpha_{L} = \min\{\alpha_{i}^{t}\}$ and $\alpha_{H} = \max\{\alpha_{i}^{t}\}$ for $t \!\in\! \{0,1, \ldots, T\}, i \!\in\! \{1, 2, \ldots, N\}$, thereby, we derive the boundary for Eq.~(\ref{rho_mapping})~by:
\begin{align} \label{mapping_boundary}
(1 \!-\! \alpha_{H})^{t} &+ \rho_{L} \! \sum_{s=1}^{t-1} \alpha_{H} (1 \!-\! \alpha_{H})^{s} \nonumber \\[-6pt]
&\leq \Theta_{i}^{t} \leq (1 \!-\! \alpha_{H})^{t} \!+\! \rho_{H} \! \sum_{s=1}^{t-1}  \alpha_{H}(1 \!-\! \alpha_{H})^{s}.
\end{align}

Then, we define a continuous space $\Omega = [\rho_{L}, \rho_{H}] \times \ldots \times[(1 \!-\! \alpha_{H})^{T} + \rho_{L} \sum_{s=1}^{T-1} \alpha_{H} (1 - \alpha_{H})^{s}, (1 - \alpha_{H})^{T} + \rho_{H} \sum_{s=1}^{T-1} \alpha_{H}(1 \!-\! \alpha_{H})^{s}]$ for $\rho_{i}^{t}$ in $T$ dimensions. From Eq.~(\ref{rho_mapping}), each mapping $\Theta_{i}^{t}$ is continuous in the continuous space $\Omega$, consequently, $\Theta$ is a continuous mapping from $\Omega$ to $\Omega$. According to Brouwer's fixed-point theorem, $\Theta$ has a fixed point in $\Omega$. 

According to Eq.~(\ref{update_constraint_refined}) and mapping $\Theta_{i}^{t}(\{\rho_{i}^{t} \mid t \in \{0,1, \ldots, \\ T\}, i \!\in\! \{1, 2, \ldots, N\}\})$ in Eq.~(\ref{mapping}), the iteration function of the privacy budget can be characterized as $\rho_{i}^{t+1} \!=\! \Theta_{i}^{t}(\rho_{i}^{t})$.~Subsequently, we define the Complete Metric Space $\Omega$ for all privacy budget parameters $\rho$ and the distance function $\Gamma$ in the metric space to quantify the divergence between two state variables can be defined as the squared $\ell_{2}$-norm, i.e., $\Gamma \!=\! \|\rho_{i} - \rho_{j}\|_{2}^{2}$. By substituting Eq.~(\ref{update_constraint}) into the distance function $\Gamma$, we have the following derivations:
\begin{align}
&\Gamma(\Theta_{i}^{t}(\rho_{i}^{t}), \Theta_{j}^{t}(\rho_{j}^{t})) = \|\rho_{i}^{t+1} - \rho_{j}^{t+1}\|_{2}^{2} \nonumber \\
&= \|(\frac{1 - \alpha_{i}^{t}}{N} \sum_{k=1}^{N} \rho_{k}^{t}  \!+\! \alpha_{i}^{t} \rho_{i}^{t}) \!-\! (\frac{1 - \alpha_{j}^{t}}{N} \sum_{k=1}^{N} \rho_{k}^{t} \!+\! \alpha_{j}^{t} \rho_{j}^{t})\|_{2}^{2} \nonumber \\
&= \| (\alpha_{i}^{t} \rho_{i}^{t} - \alpha_{j}^{t} \rho_{j}^{t}) - \frac{1}{N} (\alpha_{i}^{t} \sum_{k=1}^{N} \rho_{k}^{t}   \!-\! \alpha_{j}^{t} \sum_{k=1}^{N} \rho_{k}^{t}) \|_{2}^{2} \nonumber \\
&\leq \|\alpha_{i}^{t} \rho_{i}^{t} - \alpha_{j}^{t} \rho_{j}^{t} \|_{2}^{2} \leq \sqrt{\nu} \|\rho_{i}^{t} - \rho_{j}^{t}\|_{2}^{2},
\end{align}
where the coefficient $\sqrt{\nu}$ is chosen to satisfy the inequality $0 \!\leq\! \sqrt{\nu} \!<\! 1$ with $\sqrt{\nu} \!=\! \max\{\frac{|\alpha_{L}\rho_{L} - \alpha_{H}\rho_{H}|}{|\rho_{L} - \rho_{H}|}, \frac{|\alpha_{H}\rho_{L} - \alpha_{L}\rho_{H}|}{|\rho_{L} - \rho_{H}|}\}$ for $t \in \{0, 1, \ldots, T\}$, ensuring that the mapping $\Theta_{i}^{t}$ acts as a contraction mapping. Following Banach’s Fixed Point Theorem, we establish the existence of a unique fixed point in the iterative computational process, designated as $\rho_{i}^{t*}$. Furthermore, the inherent contraction characteristic of $\Theta_{i}^{t}$ guarantees that each iteration diminishes the disparity between successive elements in the series by a multiplicative factor of $\sqrt{\nu}$. Consequently, given any positive value $\iota$, there exists a positive integer $\mathcal{Z}$ such that for all $u, v > \mathcal{Z}$ satisfies $\Gamma(\Theta_{u}^{t}(\rho_{u}^{t}), \Theta_{v}^{t}(\rho_{v}^{t})) < \iota$,
which demonstrates that the sequence generated by $\Theta_{i}^{t}$ can be rigorously defined as a Cauchy sequence. Based on the principles of Complete Metric Spaces, we derive that the~sequence generated by $\Theta_{i}^{t}$ will converge to the unique fixed point $\rho_{i}^{t*}$ above. Thus, the theorem holds.  \end{proof}

\section{Proof of Theorem \ref{theorem_model_gap}} \label{proof_theorem_model_gap}
\begin{proof}
Before proof, we introduce the following lemma for the convenience of our theoretical analysis.
\begin{lemma} \label{lemma_Jensen_inequality}
Given $m$ vectors $x_{1}, x_{2}, \ldots, x_{m} \in \mathbb{R}^{d}$, based on Jensen's inequality, we have:
\begin{align}
\| \sum_{i=1}^{m} x_{i}\|^{2} \leq m \sum_{i=1}^{m}\|x_{i}\|^{2}.
\end{align}
\end{lemma}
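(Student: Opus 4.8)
The plan is to invoke the convexity of the squared Euclidean norm together with Jensen's inequality, which is precisely the route advertised by the lemma's name. First I would observe that the map $f:\mathbb{R}^{d}\to\mathbb{R}$ given by $f(x)=\|x\|_{2}^{2}$ is convex: its Hessian is $2 I_{d}$, which is positive semidefinite, so convexity holds on all of $\mathbb{R}^{d}$. Applying Jensen's inequality to $f$ with the uniform weights $1/m$ at the points $x_{1},\ldots,x_{m}$ then gives
\[
\Big\|\frac{1}{m}\sum\nolimits_{i=1}^{m}x_{i}\Big\|_{2}^{2}\leq \frac{1}{m}\sum\nolimits_{i=1}^{m}\|x_{i}\|_{2}^{2}.
\]

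Next I would simply clear denominators. Multiplying both sides of the displayed inequality by $m^{2}$ moves the factor $1/m^{2}$ from the left-hand side and leaves a factor $m$ on the right, yielding exactly $\|\sum_{i=1}^{m}x_{i}\|_{2}^{2}\leq m\sum_{i=1}^{m}\|x_{i}\|_{2}^{2}$, which is the claimed bound and completes the argument.

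As a self-contained alternative that does not cite Jensen as a black box, I would expand the squared norm into the double sum $\langle\sum_{i}x_{i},\sum_{j}x_{j}\rangle=\sum_{i,j}\langle x_{i},x_{j}\rangle$, bound each cross term through the Cauchy--Schwarz and AM--GM inequalities as $\langle x_{i},x_{j}\rangle\leq\tfrac{1}{2}(\|x_{i}\|_{2}^{2}+\|x_{j}\|_{2}^{2})$, and sum over the $m^{2}$ ordered index pairs; each index $i$ appears in $m$ pairs, which reproduces the multiplicative constant $m$. There is essentially no analytical obstacle in either route, since the result is a one-line consequence of convexity; the only detail worth recording is that the constant $m$ is sharp, as equality holds throughout the Jensen step exactly when all the $x_{i}$ coincide.
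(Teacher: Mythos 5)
Your argument is correct and follows exactly the route the paper itself indicates: the paper states this lemma as an immediate consequence of Jensen's inequality without writing out the details, and your proof simply supplies them (convexity of $\|\cdot\|_{2}^{2}$, Jensen with uniform weights $1/m$, then clearing the factor $m^{2}$). Both your main derivation and the Cauchy--Schwarz/AM--GM alternative are valid, and your remark on sharpness of the constant $m$ is accurate.
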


According to Eqs. (\ref{local_gradient_perturbation})-(\ref{sgd_sample}), the stochastic gradient descent of the global model update rule considering the specific probability-based client sampling model and artificial Gaussian noise is given by:
\begin{align} \label{sgd_sample_noise}
\boldsymbol{w(t+1)}=\boldsymbol{w(t)} - \eta \sum_{i=1}^{K} \frac{\theta_{i}}{K x_{i}^{t}} \nabla \widetilde{F}_{i}(\boldsymbol{w(t)})
\end{align}

Consequently, based on Eq.~(\ref{sgd_sample_noise}), the squared $\ell_{2}$-norm of the difference between the global and the optimal model parameter is calculated as follows:
\begin{align} \label{model_gap}
\|\boldsymbol{w(t \!+\! 1)} &\!-\! \boldsymbol{w^{*}}\|_{2}^{2} \!=\! \|\boldsymbol{w(t)} \!-\! \boldsymbol{w^{*}}\|_{2}^{2} \!+\! \underbrace{\|\eta \! \sum_{i=1}^{K} \! \frac{\theta_{i}}{K x_{i}^{t}} \nabla \! \widetilde{F}_{i}(\boldsymbol{w(t)})\|_{2}^{2}}_{T_{1}} \nonumber \\
&  \underbrace{- 2 \langle\boldsymbol{w(t)} \!-\! \boldsymbol{w^{*}}, \eta  \sum_{i=1}^{K} \frac{\theta_{i}}{K x_{i}^{t}} \nabla \widetilde{F}_{i}(\boldsymbol{w(t)}) \rangle}_{T_{2}}.
\end{align}

Note that according to the boundary condition of the privacy budget $\rho_{i}^{t} \!\in\! [\rho_{L}, \rho_{H}]$ the sampling probability $x_{i}^{t}$ satisfies
\begin{align} \label{probability_boundary}
\frac{\rho_{L}}{(N-1)\rho_{H} + \rho_{L}} \leq x_{i}^{t} \leq \frac{\rho_{H}}{(N-1)\rho_{L} + \rho_{H}}.
\end{align}

Then, we bound the two terms on the right-hand side of the equality respectively. For the term $T_{1}$, we take the expectation on both sides of the equation and the derivation is given by:
\begin{align} \label{T1}
\mathbb{E}[T_{1}] &= \mathbb{E}[\|\eta \sum_{i=1}^{K} \frac{\theta_{i}}{K x_{i}^{t}} \nabla \widetilde{F}_{i}(\boldsymbol{w(t)})\|_{2}^{2}] \nonumber \\
&= \mathbb{E}[\|\eta \sum_{i=1}^{K} \frac{\theta_{i}}{K x_{i}^{t}} (\nabla F_{i}(\boldsymbol{w(t)}) + \boldsymbol{n_{i}(t)})\|_{2}^{2}] \nonumber \\
&= \mathbb{E}[\|\eta \sum_{i=1}^{K} \! \frac{\theta_{i}}{K x_{i}^{t}} \nabla F_{i}(\boldsymbol{w(t)}) \|_{2}^{2}] + \mathbb{E}[\|\eta \sum_{i=1}^{K} \! \frac{\theta_{i} \boldsymbol{n_{i}(t)}}{K x_{i}^{t}} \|_{2}^{2}] \nonumber \\
&\stackrel{(a)}{\leq} \frac{\eta^{2} [(N \!-\! 1)\rho_{H} \!+\! \rho_{L}]^{2}}{K^{2} \rho_{L}^{2}} (D^{2} \!+\! \frac{2 d K W^{2}}{\rho_{L} (\sum_{i=1}^{N} |\mathcal{D}_{i}|)^{2}}),
\end{align}
where (a) follows Jensen's inequality in Lemma \ref{lemma_Jensen_inequality}, Assumption~\ref{assumption_bounded_gradients}, Proposition \ref{proposition_variance} and Eq.~(\ref{probability_boundary}). For the term $T_{2}$ in Eq.~(\ref{model_gap}), taking the expectation on both sides of the function, we have the following derivation:
\begin{align} \label{T2}
\!\!\!\!\mathbb{E}[T_{2}] \!=& \ \mathbb{E}[- 2 \langle\boldsymbol{w(t)} \!-\! \boldsymbol{w^{*}}, \eta  \sum_{i=1}^{K} \frac{\theta_{i}}{K x_{i}^{t}} \nabla \widetilde{F}_{i}(\boldsymbol{w(t)}) \rangle] \nonumber \\
=&\ \mathbb{E}[- 2 \langle\boldsymbol{w(t)} \!-\! \boldsymbol{w^{*}}, \eta  \sum_{i=1}^{K} \frac{\theta_{i}}{K x_{i}^{t}} (\nabla F_{i}(\boldsymbol{w(t)}) \!+\! \boldsymbol{n_{i}(t)}) \rangle] \nonumber \\
\stackrel{(a)}{\leq}&\ \mathbb{E}[- 2 \langle\boldsymbol{w(t)} \!-\! \boldsymbol{w^{*}}, \eta  \sum_{i=1}^{K} \frac{\theta_{i}}{K x_{i}^{t}} \nabla F_{i}(\boldsymbol{w(t)}) \rangle] \nonumber \\
\leq&\ \text{\small $- \frac{2 \eta [(N \!-\! 1)\rho_{H} \!+\! \rho_{L}]}{K \rho_{L}} \sum_{i=1}^{K} \! \theta_{i}  \mathbb{E}[\langle\boldsymbol{w(t)} \!-\! \boldsymbol{w^{*}}, \nabla F_{i}(\boldsymbol{w(t)}) \rangle] $}  \nonumber \\
\stackrel{(b)}{\leq}& -\frac{2 \eta [(N \!-\! 1)\rho_{H} \!+\! \rho_{L}]}{K \rho_{L}}(F(\boldsymbol{w(t)}) \!-\! F(\boldsymbol{w^{*}})) \nonumber \\
&\!+ \!\frac{\beta \eta [(N \!\!-\! 1)\rho_{H} \!+\! \rho_{L}]}{K \rho_{L}}\mathbb{E}[\sum_{i=1}^{K} \! \theta_{i} \|\boldsymbol{w_{i}(t)} \!-\! \boldsymbol{w(t)}\|_{2}^{2}], \!\!\!
\end{align}
where inequality (a) holds according to the equation below, in which the artificial Gaussian noise satisfies $\boldsymbol{n_{i}(t)}  \! \sim \! \mathcal{N}(0, \sigma_{i}^{2}(t))$ and $\mathbb{E}[(\boldsymbol{n_{i}(t)}] = 0$, thereby, we have:
\begin{align}
\mathbb{E}[\langle\boldsymbol{w(t)} \!-\! \boldsymbol{w^{*}}, \boldsymbol{n_{i}(t)} \rangle] &= \mathbb{E}[(\boldsymbol{w(t)} \!-\! \boldsymbol{w^{*}})^\top \boldsymbol{n_{i}(t)}] \nonumber \\
&= (\boldsymbol{w(t)} \!-\! \boldsymbol{w^{*}})^\top \mathbb{E}[(\boldsymbol{n_{i}(t)}] \!=\! 0,
\end{align}
and (b) follows the fact that:
\begin{align}
&- \sum_{i=1}^{K} \theta_{i} \langle\boldsymbol{w(t)} - \boldsymbol{w^{*}}, \nabla F_{i}(\boldsymbol{w(t)}) \rangle \nonumber \\
=& \small - \!\! \sum_{i=1}^{K} \! \theta_{i} [\langle\boldsymbol{w(t)} \!-\! \boldsymbol{w_{i}(t)}, \!\nabla \! F_{i}(\boldsymbol{w(t)}) \rangle \!+\! \langle \boldsymbol{w_{i}(t)} \!-\! \boldsymbol{w^{*}} , \! \nabla \! F_{i}(\boldsymbol{w(t)})\rangle] \nonumber \\
\leq& - \sum_{i=1}^{K} \theta_{i} [F_{i}(\boldsymbol{w(t)}) - F_{i}(\boldsymbol{w_{i}(t)}) - \frac{\beta}{2} \|\boldsymbol{w_{i}(t)} - \boldsymbol{w(t)}\|_{2}^{2} \nonumber \\
&+ \langle \boldsymbol{w_{i}(t)} - \boldsymbol{w^{*}} , \nabla F_{i}(\boldsymbol{w(t)})\rangle] \nonumber \\
\leq& - \sum_{i=1}^{K} \theta_{i} [F_{i}(\boldsymbol{w(t)}) - F_{i}(\boldsymbol{w_{i}(t)}) + F_{i}(\boldsymbol{w_{i}(t)}) - F_{i}(\boldsymbol{w^{*}})] \nonumber \\
&+  \frac{\beta}{2} \sum_{i=1}^{K} \theta_{i} \|\boldsymbol{w_{i}(t)} - \boldsymbol{w(t)}\|_{2}^{2} \nonumber \\
\leq& - (F(\boldsymbol{w(t)}) \!-\! F(\boldsymbol{w^{*}})) \!+\! \frac{\beta}{2} \underbrace{\sum_{i=1}^{K} \theta_{i} \|\boldsymbol{w_{i}(t)} \!-\! \boldsymbol{w(t)}\|_{2}^{2}}_{T_{3}}.
\end{align}

Next, we bound the expectation of term $T_{3}$. Based on Eq.~(\ref{sgd}) and Eq.~(\ref{sgd_sample_noise}), the upper bound of expected $T_{3}$ is given~by:
\begin{align} \label{T3}
\mathbb{E}[T_{3}] &\!= \mathbb{E}[\sum_{i=1}^{K} \theta_{i} \|\boldsymbol{w_{i}(t)} \!-\! \boldsymbol{w(t)}\|_{2}^{2}] \nonumber \\
&= \mathbb{E}[\sum_{i=1}^{K} \theta_{i} \|\boldsymbol{w(t \!-\! 1)} \!-\! \eta \nabla F_{i}(\boldsymbol{w(t \!-\! 1)}) \!-\! \boldsymbol{w(t)}\|_{2}^{2}] \nonumber \\
&\leq \text{\small $\mathbb{E}[\sum_{i=1}^{K}\! \theta_{i} (\|\boldsymbol{w(t \!-\! 1)} \!-\! \boldsymbol{w(t)}\|_{2}^{2} \!+\! \|\eta \nabla F_{i}(\boldsymbol{w(t \!-\! 1)})\|_{2}^{2})] $} \nonumber \\
&\leq \mathbb{E}[\sum_{i=1}^{K} \theta_{i} \|\eta \sum_{i=1}^{K} \frac{\theta_{i}}{K x_{i}^{t}} \nabla \widetilde{F}_{i}(\boldsymbol{w(t)})\|_{2}^{2}] + \eta^{2} D^{2} \nonumber \\
&\stackrel{(a)}{\leq} \!\! \text{\small $\frac{\eta^{2} [(N \!\!-\! 1)\rho_{H} \!+\! \rho_{L}]^{2}}{K^{2} \rho_{L}^{2}} (D^{2} \!\!+\! \frac{2 d K W^{2}}{\rho_{L} (\sum_{i=1}^{N}\! |\mathcal{D}_{i}|)^{2}\!}) \!+\! \eta^{2} \! D^{2}\!\! $}, \!
\end{align}
where (a) follows the Jensen's inequality in Lemma \ref{lemma_Jensen_inequality}. Subsequently, by combining the upper bound of terms $\mathbb{E}[T_{1}]$, $\mathbb{E}[T_{2}]$ and $\mathbb{E}[T_{3}]$ in Eqs.~(\ref{T1})-(\ref{T2}) and Eq.~(\ref{T3}), the expected optimality gap $\mathbb{E}[\|\boldsymbol{w(t \!+\! 1)} \!-\! \boldsymbol{w^{*}}\|_{2}^{2}]$ is concluded in Eq.~(\ref{optimal_gap_middle}). 

Then, we analyze the upper bound of the expected optimality gap of the global model under convexity and non-convexity scenarios. For the convexity scenario in Assumption \ref{assumption_convexity}, Eq.~(\ref{optimal_gap_middle}) can be further derived as Eq.~(\ref{optimal_gap_convexity_former}). Through iteratively aggregating both sides of Eq.~(\ref{optimal_gap_convexity_former}), we can obtain the upper bound of the expected optimality gap after $T$-iteration of the global model training process as shown in Eq.~(\ref{optimal_gap_convexity}). 

\begin{figure*}[ht]
\centering

\begin{minipage}{1.0\textwidth}
\begin{align} \label{optimal_gap_middle}
\!\mathbb{E}[\|\boldsymbol{w(t \!+\! 1)} \!-\! \boldsymbol{w^{*}}\|_{2}^{2}] \!\leq&\ \mathbb{E}[\|\boldsymbol{w(t)} \!-\! \boldsymbol{w^{*}}\|_{2}^{2}] \!-\! \frac{2 \eta [(N \!-\! 1)\rho_{H} \!+\! \rho_{L}]}{K \rho_{L}}(F(\boldsymbol{w(t)}) \!-\! F(\boldsymbol{w^{*}})) \!+\! \frac{\beta D^{2} \eta^{3} [(N \!-\! 1)\rho_{H} \!+\! \rho_{L}]}{K \rho_{L}} \nonumber \\
&\!+ \frac{\eta^{2} [(N \!-\! 1)\rho_{H} \!+\! \rho_{L}]^{2}}{K^{2} \rho_{L}^{2}} (D^{2} \!+\! \frac{2 d K W^{2}}{\rho_{L} (\sum_{i=1}^{N} |\mathcal{D}_{i}|)^{2}}) \!+\! \frac{\beta \eta^{3} [(N \!-\! 1)\rho_{H} \!+\! \rho_{L}]^{3}}{K^{3} \rho_{L}^{3}} (D^{2} \!+\! \frac{2 d K W^{2}}{\rho_{L} (\sum_{i=1}^{N} |\mathcal{D}_{i}|)^{2}}) . 
\end{align}
For Convexity Assumption:
\begin{align} \label{optimal_gap_convexity_former}
\!\mathbb{E}[\|\boldsymbol{w(t \!+\! 1)} \!-\! \boldsymbol{w^{*}}\|_{2}^{2}] \!\leq&\ (1 \!-\! \frac{\psi \eta [(N \!-\! 1)\rho_{H} \!+\! \rho_{L}]}{K \rho_{L}})\mathbb{E}[\|\boldsymbol{w(t)} \!-\! \boldsymbol{w^{*}}\|_{2}^{2}] \!+\! \frac{\beta D^{2} \eta^{3} [(N \!-\! 1)\rho_{H} \!+\! \rho_{L}]}{K \rho_{L}} \nonumber \\
&\!+ \frac{\eta^{2} [(N \!-\! 1)\rho_{H} \!+\! \rho_{L}]^{2}}{K^{2} \rho_{L}^{2}} (D^{2} \!+\! \frac{2 d K W^{2}}{\rho_{L} (\sum_{i=1}^{N} |\mathcal{D}_{i}|)^{2}}) \!+\! \frac{\beta \eta^{3} [(N \!-\! 1)\rho_{H} \!+\! \rho_{L}]^{3}}{K^{3} \rho_{L}^{3}} (D^{2} \!+\! \frac{2 d K W^{2}}{\rho_{L} (\sum_{i=1}^{N} |\mathcal{D}_{i}|)^{2}}) . 
\end{align}
For Non-Convexity Assumption:
\begin{align} \label{optimal_gap_non_convexity_former}
\!\mathbb{E}[\|\boldsymbol{w(t \!+\! 1)} \!-\! \boldsymbol{w^{*}}\|_{2}^{2}] \!\leq&\ \mathbb{E}[\|\boldsymbol{w(t)} \!-\! \boldsymbol{w^{*}}\|_{2}^{2}] \!-\! \frac{2 \eta \|g^\top\|_{2} [(N \!-\! 1)\rho_{H} \!+\! \rho_{L}]}{K \rho_{L}}\mathbb{E}[\|\boldsymbol{w(t)} \!-\! \boldsymbol{w^{*}}\|_{2}] \!+\! \frac{\beta D^{2} \eta^{3} [(N \!-\! 1)\rho_{H} \!+\! \rho_{L}]}{K \rho_{L}} \nonumber \\
&\!+ \frac{\eta^{2} [(N \!-\! 1)\rho_{H} \!+\! \rho_{L}]^{2}}{K^{2} \rho_{L}^{2}} (D^{2} \!+\! \frac{2 d K W^{2}}{\rho_{L} (\sum_{i=1}^{N} |\mathcal{D}_{i}|)^{2}}) \!+\! \frac{\beta \eta^{3} [(N \!-\! 1)\rho_{H} \!+\! \rho_{L}]^{3}}{K^{3} \rho_{L}^{3}} (D^{2} \!+\! \frac{2 d K W^{2}}{\rho_{L} (\sum_{i=1}^{N} |\mathcal{D}_{i}|)^{2}}) \nonumber \\
\leq&\ \mathbb{E}[\|\boldsymbol{w(t)} \!-\! \boldsymbol{w^{*}}\|_{2}^{2}] \!+\! \frac{2 t \eta^{2} \|g^\top \!\|_{2} D [(N \!\!-\! 1)\rho_{H} \!+\! \rho_{L}]^{2}}{K^{2} \rho_{L}^{2}} \!+\! \frac{2 \eta \|g^\top \!\|_{2} [(N \!\!-\! 1)\rho_{H} \!+\! \rho_{L}]}{K \rho_{L}}\mathbb{E}[\|\boldsymbol{w(0)} \!-\! \boldsymbol{w^{*}}\|_{2}]   \nonumber \\
& \!+  \text{\small $\frac{\beta D^{2} \eta^{3} [(N \!\!-\! 1)\rho_{H} \!+\! \rho_{L}]}{K \rho_{L}} \!+\! (\frac{\eta^{2} [(N \!-\! 1)\rho_{H} \!+\! \rho_{L}]^{2}}{K^{2} \rho_{L}^{2}} \!+\! \frac{\beta \eta^{3} [(N \!-\! 1)\rho_{H} \!+\! \rho_{L}]^{3}}{K^{3} \rho_{L}^{3}}) (D^{2} \!+\! \frac{2 d K W^{2}}{\rho_{L} (\sum_{i=1}^{N} |\mathcal{D}_{i}|)^{2}}) $}.
\end{align}
\end{minipage}

\vspace{5pt}
\noindent\rule{\textwidth}{0.5pt}
\vspace{-25pt}
\end{figure*}

Subsequently, we consider the upper bound of Eq.~(\ref{optimal_gap_middle}) under the non-convexity scenario in Assumption \ref{assumption_subgradient}. We first introduce the following proposition to facilitate our demonstration.
\begin{proposition} \label{proposition_global_model_optimal_gap}
Suppose that the local gradient $\nabla F_{i}(\boldsymbol{w})$ attains $D$-bound gradients, the expected $\ell_{2}$-norm of the difference between the global and optimal model is upper bounded by
\begin{align} \label{global_model_optimal_gap}
\!\!\!\! \small \mathbb{E}[\|\boldsymbol{w(t)} \!-\! \boldsymbol{w^{*}}\|_{2}] \!\leq\! \mathbb{E}[\|\boldsymbol{w(0)} \!-\! \boldsymbol{w^{*}}\|_{2}] \!+\! \frac{t \eta D [(N \!\!-\! 1)\rho_{H} \!+\! \rho_{L}]}{K \rho_{L}}. \!
\end{align}
\end{proposition}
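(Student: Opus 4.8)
The plan is to establish the bound by a one-step increment argument on the non-squared norm $\|\boldsymbol{w(t)} - \boldsymbol{w^{*}}\|_{2}$ and then telescope over the iterations. First I would start from the perturbed global update rule in Eq.~(\ref{sgd_sample_noise}), subtract the optimum $\boldsymbol{w^{*}}$ from both sides, and apply the triangle inequality to obtain $\|\boldsymbol{w(t+1)} - \boldsymbol{w^{*}}\|_{2} \leq \|\boldsymbol{w(t)} - \boldsymbol{w^{*}}\|_{2} + \eta \|\sum_{i=1}^{K} \frac{\theta_{i}}{K x_{i}^{t}} \nabla \widetilde{F}_{i}(\boldsymbol{w(t)})\|_{2}$. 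Taking expectations reduces the task to uniformly bounding the expected magnitude of the aggregated stochastic update at each iteration.

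Next I would control this aggregated update term. Writing $\nabla \widetilde{F}_{i}(\boldsymbol{w(t)}) = \nabla F_{i}(\boldsymbol{w(t)}) + \boldsymbol{n_{i}(t)}$ and again invoking the triangle inequality to move the norm inside the summation, I would bound each local gradient by $\|\nabla F_{i}(\boldsymbol{w(t)})\|_{2} \leq D$ via Assumption~\ref{assumption_bounded_gradients}, while the zero-mean Gaussian perturbation $\boldsymbol{n_{i}(t)}$ contributes nothing under the expectation since $\mathbb{E}[\boldsymbol{n_{i}(t)}] = 0$. The remaining coefficients $\frac{\theta_{i}}{K x_{i}^{t}}$ are handled by the sampling-probability bound in Eq.~(\ref{probability_boundary}), which gives $\frac{1}{x_{i}^{t}} \leq \frac{(N-1)\rho_{H} + \rho_{L}}{\rho_{L}}$, together with the fact that the aggregation weights satisfy $\sum_{i=1}^{K} \theta_{i} \leq \sum_{i=1}^{N} \theta_{i} = 1$. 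Combining these yields a per-iteration increment bounded by $\frac{\eta D[(N-1)\rho_{H} + \rho_{L}]}{K \rho_{L}}$.

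Finally, I would iterate the one-step recursion from iteration $0$ to $t$. Because the increment is a constant independent of the iteration index, summing $t$ copies of it telescopes directly to $\frac{t \eta D[(N-1)\rho_{H} + \rho_{L}]}{K \rho_{L}}$, which together with the initial term $\mathbb{E}[\|\boldsymbol{w(0)} - \boldsymbol{w^{*}}\|_{2}]$ delivers exactly the claimed bound in Eq.~(\ref{global_model_optimal_gap}).

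The step I expect to be the main obstacle is the rigorous treatment of the additive noise within the non-squared norm: unlike the squared-norm analyses in Eqs.~(\ref{T1}) and~(\ref{T3}), where the noise cleanly separates and surfaces as an explicit variance term, here the linear norm does not decompose additively, so the neglect of $\boldsymbol{n_{i}(t)}$ leans on its zero-mean property rather than on an exact identity. Care must be taken to argue that the perturbation leaves no residual contribution in the final expectation, after which the coefficient bookkeeping via Eq.~(\ref{probability_boundary}) and $\sum_{i} \theta_{i} = 1$ is routine.
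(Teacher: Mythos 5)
Your overall route is the same as the paper's: subtract $\boldsymbol{w^{*}}$ from the update rule in Eq.~(\ref{sgd_sample_noise}), apply the triangle inequality to get a one-step recursion on the non-squared norm, bound the aggregated update by $\frac{\eta D[(N-1)\rho_{H}+\rho_{L}]}{K\rho_{L}}$ using Assumption~\ref{assumption_bounded_gradients}, Eq.~(\ref{probability_boundary}) and $\sum_{i=1}^{K}\theta_{i}\le 1$, and telescope. The coefficient bookkeeping is right and matches Eq.~(\ref{global_model_gap}).

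The one step that does not hold as stated is exactly the one you flag: you cannot discard $\boldsymbol{n_{i}(t)}$ from inside the norm merely because $\mathbb{E}[\boldsymbol{n_{i}(t)}]=0$. The expectation of a norm is not linear, and Jensen's inequality gives $\mathbb{E}\bigl[\|v+\boldsymbol{n}\|_{2}\bigr]\ge\|v+\mathbb{E}[\boldsymbol{n}]\|_{2}=\|v\|_{2}$, so zero-mean noise can only \emph{increase} the expected norm — it never "contributes nothing." To make the bound honest you would need to retain an additive noise term, e.g.\ $\mathbb{E}\|\boldsymbol{n_{i}(t)}\|_{2}\le\sqrt{\mathbb{E}\|\boldsymbol{n_{i}(t)}\|_{2}^{2}}=\sqrt{d\,\sigma_{i}^{2}(t)}$, which would add a $\sqrt{d}\,W$-type contribution to the per-iteration increment (as the squared-norm analyses in Eqs.~(\ref{T1}) and~(\ref{T3}) do explicitly). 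That said, you should know the paper's own proof of this proposition commits the identical sin: step (a) of Eq.~(\ref{global_model_gap}) bounds $\mathbb{E}\|\eta\sum_{i}\frac{\theta_{i}}{Kx_{i}^{t}}\nabla\widetilde{F}_{i}(\boldsymbol{w(t)})\|_{2}$ by the $D$-based constant, even though only $\|\nabla F_{i}\|_{2}\le D$ is assumed and the perturbed gradient $\nabla\widetilde{F}_{i}=\nabla F_{i}+\boldsymbol{n_{i}}$ is unbounded. So your proposal reproduces the paper's argument, gap included; the difference is only that you make the gap visible by decomposing the noise explicitly, whereas the paper hides it inside the citation of "Cauchy--Schwarz and Assumption~\ref{assumption_bounded_gradients}."
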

\begin{proof}
For $t$-th global iteration, $t \in \{0, 1, \ldots, T\}$, based on the refined global model update rule regarding the client sampling probability vector and artificial Gaussian noise in Eq.~(\ref{sgd_sample_noise}), we have the following derivation:
\begin{align} \label{global_model_gap}
&\mathbb{E}[\|\boldsymbol{w(t \!+\! 1)} \!-\! \boldsymbol{w^{*}}\|_{2}] \nonumber \\
=& \ \mathbb{E}[\|\boldsymbol{w(t)} \!-\! \boldsymbol{w^{*}} \!-\! \eta \sum_{i=1}^{K} \frac{\theta_{i}}{K x_{i}^{t}} \nabla \widetilde{F}_{i}(\boldsymbol{w(t)})\|_{2}] \nonumber \\
\leq& \ \mathbb{E}[\|\boldsymbol{w(t)} \!-\! \boldsymbol{w^{*}}\|_{2}] \!+\! \mathbb{E}[\|\eta \sum_{i=1}^{K} \frac{\theta_{i}}{K x_{i}^{t}} \nabla \widetilde{F}_{i}(\boldsymbol{w(t)})\|_{2}] \nonumber \\
\stackrel{(a)}{\leq}& \ \mathbb{E}[\|\boldsymbol{w(t)} \!-\! \boldsymbol{w^{*}}\|_{2}] \!+\! \frac{\eta D [(N \!-\! 1)\rho_{H} \!+\! \rho_{L}]}{K \rho_{L}},
\end{align}
where (a) follows the Cauchy-Schwarz inequality and Assumption~\ref{assumption_bounded_gradients}. Through iteratively aggregating both sides of inequality in Eq.~(\ref{global_model_gap}), we can derive the expression as shown in Eq.~(\ref{global_model_optimal_gap}), and thus, the proposition holds.  \end{proof}

Thus, according to Proposition \ref{proposition_global_model_optimal_gap} and Assumption \ref{assumption_subgradient}, Eq.~(\ref{optimal_gap_middle}) can be further derived as Eq.~(\ref{optimal_gap_non_convexity_former}).Through iteratively aggregating both sides of Eq.~(\ref{optimal_gap_non_convexity_former}), we can obtain the upper bound of the expected optimality gap of the global model under the non-convexity scenario as showcased in Eq.~(\ref{optimal_gap_non_convexity}). Hence, the theorem holds.   \end{proof}

\section{Proof of Proposition \ref{proposition_difference}} \label{proof_proposition_difference}
\begin{proof}
We first analyze the situation with artificial Gaussian noise under full client participation. According to Assumption \ref{assumption_smoothness}, we have the following inequality regarding the difference between the global loss function for any consecutive iterations:
\begin{align}\label{global_loss_function_difference}
F(\boldsymbol{w(t)}) \!&-\! F(\boldsymbol{w(t \!-\! 1)}) \!\leq\! \frac{\beta \eta^{2}}{2} \|\sum_{i=1}^{N} \theta_{i} \nabla \widetilde{F}_{i}(\boldsymbol{w(t \!-\! 1)}))\|_{2}^{2} \nonumber \\
&+ \! \nabla F(\boldsymbol{w(t \!-\! 1)})^\top (-\eta \! \sum_{i=1}^{N} \theta_{i} \nabla \widetilde{F}_{i}(\boldsymbol{w(t \!-\! 1)})).
\end{align}

Then, taking expectations on both sides of Eq.~(\ref{global_loss_function_difference}), Eq.~(\ref{global_loss_function_difference}) can be rewritten as follows:
\begin{align} \label{expectations_difference}
&\mathbb{E}[F(\boldsymbol{w(t)})] \!-\! F(\boldsymbol{w(t \!-\! 1)}) \nonumber \\
\leq& \small - \eta \nabla\! F(\boldsymbol{w(t \!-\! 1)})^\top \!({\sum_{i=1}^{N} \! \theta_{i} \mathbb{E}[\nabla \! F_{i}(\boldsymbol{w(t \!-\! 1)}) \!+\! \theta_{i} \boldsymbol{n_{i}(t \!-\! 1)}]}) \nonumber \\
&+ \small \frac{\beta \eta^{2}}{2} ({\mathbb{E}[\|\!\sum_{i=1}^{N}\! \theta_{i} \nabla \! F_{i}(\boldsymbol{w(t \!-\! 1)})\|_{2}^{2}] \!+\! \mathbb{E}[\|\!\sum_{i=1}^{N}\! \theta_{i} \boldsymbol{n_{i}(t \!-\! 1)}\|_{2}^{2}]}) \nonumber \\
&+ \small  \beta \eta^{2} {\mathbb{E}[\sum_{i=1}^{N}\!\theta_{i} \boldsymbol{n_{i}(t \!-\! 1)} \! \sum_{i=1}^{N} \theta_{i} \nabla F_{i}(\boldsymbol{w(t \!-\! 1)})]} \nonumber \\
\leq& - \eta \nabla F(\boldsymbol{w(t \!-\! 1)})^\top ({\sum_{i=1}^{N} \theta_{i} \mathbb{E}[ \nabla F_{i}(\boldsymbol{w(t \!-\! 1)})]}) \nonumber \\
&+ \frac{\beta \eta^{2}}{2} (\mathbb{E}[\|\sum_{i=1}^{N} \! \theta_{i} \nabla F_{i}(\boldsymbol{w(t\!-\!1)})\|_{2}^{2}]\!+\! d \sum_{i=1}^{N} \theta_{i}^{2} \sigma_{i}^{2}(t \!-\! 1)) \nonumber \\
\stackrel{(a)}{\leq}&  -\eta \kappa   \|\nabla F(\boldsymbol{w(t \!-\! 1)})\|_{2}^{2} + \frac{d \beta \eta^{2}}{2}  \sum_{i=1}^{N} \theta_{i}^{2} \sigma_{i}^{2}(t-1) \nonumber \\
&+ \frac{\beta \eta^{2} N}{2} (M +(M_{V}+\kappa_{G}^{2})\|\nabla F(\boldsymbol{w(t-1)})\|_{2}^{2}) \nonumber \\
\leq& \small - \! \frac{\eta \kappa}{2} \|\nabla \! F(\boldsymbol{w(t \!-\! 1)})\|_{2}^{2} \!+\! \frac{d \beta \eta^{2}}{2} \!\! \sum_{i=1}^{N} \! \theta_{i}^{2} \! \sigma_{i}^{2}(t \!-\! 1) \!+\! \frac{M \! \beta \eta^{2} \! N}{2} , \!
\end{align}
where (a) follows the fact regarding the squared $\ell_{2}$-norm of the weighted local gradients of all clients: 
\begin{align}
&\mathbb{E}[\|{\sum_{i=1}^{N}} \theta_{i} \nabla F_{i}(\boldsymbol{w(t)})\|_{2}^{2}] \nonumber \\
\leq& \ N \mathbb{E}[\sum_{i=1}^{N} \|\theta_{i} \nabla F_{i}(\boldsymbol{w(t)})\|_{2}^{2}] = N \sum_{i=1}^{N} \theta_{i}^{2} \mathbb{E}[\|\nabla F_{i}(\boldsymbol{w(t)})\|_{2}^{2}] \nonumber \\
\stackrel{(a)}{=}& \ N \sum_{i=1}^{N} \theta_{i}^{2} (\mathbb{V}[\nabla{F_{i}(\boldsymbol{w(t)})}] + \|\mathbb{E}[\nabla{F_{i}(\boldsymbol{w(t)})}]\|_{2}^{2}) \nonumber \\
\stackrel{(b)}{\leq}& \  NM \!+\! N(M_{V} \!+\! \kappa_{G}^{2})\|\nabla F(\boldsymbol{w(t)})\|_{2}^{2}
\end{align}
where (a) follows the definition of the variance of $ \nabla{F_{i}(\boldsymbol{w(t)})}$ in Assumption~\ref{assumption_moment} and (b) follows the inequalities in Assumption~\ref{assumption_moment} (\romannumeral1) and (\romannumeral2). 

Then, as to the partial client participation scenarios with dynamic client sampling probability vector $\boldsymbol{X_{t}} \!=\! \{x_{i}^{t}, i \!\in\! \{1, 2, \\ \ldots, N\}\}$, based on the Lemma 1 in \cite{luo2024adaptive}, we can derive that, at $t$-th global iteration, the sampling variance between the actual aggregated global model $\boldsymbol{w(t)}$ and the virtual global model $\boldsymbol{\hat{w}(t)}$ is bounded by:
\begin{align}
&s^{2}(t) = \mathbb{E} \| \boldsymbol{w(t)} - \boldsymbol{\hat{w}(t)} \|_{2}^{2} \nonumber \\
&\leq \frac{\eta}{K} \sum_{i=1}^{N} \frac{\theta_{i}^{2}}{x_{i}^{t}} \mathbb{E}\|\widetilde{F}_{i}(\boldsymbol{w(t)})\|_{2}^{2} \!\leq\! \frac{\eta}{K}  \sum_{i=1}^{N} \frac{\theta_{i}^{2}}{x_{i}^{t}} (D^{2} \!+\! d \sigma_{i}^{2}(t)). 
\end{align}

Based on Assumption \ref{assumption_smoothness}, the upper bound of $\mathbb{E}[F(\boldsymbol{w_{p^{t}}(t)})] - \mathbb{E}[F(\boldsymbol{\hat{w}(t)})]$ is given as follows:
\begin{align} \label{global_loss_function_sampling_variance}
\mathbb{E}[F(\boldsymbol{w_{p^{t}}(t)})] &- \mathbb{E}[F(\boldsymbol{\hat{w}(t)})] \leq \frac{\beta}{2} \mathbb{E} \| \boldsymbol{w(t)} - \boldsymbol{\hat{w}(t)} \|_{2}^{2} \nonumber \\
&\leq \frac{\beta \eta}{2 K} \sum_{i=1}^{N} \frac{\theta_{i}^{2}}{x_{i}^{t}} (D^{2} \!+\! d \sigma_{i}^{2}(t)). 
\end{align}

Inspired by \cite{luo2022tackling,luo2024adaptive}, from the upper bound of the difference between the global loss function for any consecutive iterations under the full client participation in Eq.~(\ref{expectations_difference}) and the sampling variance of the global loss function in Eq.~(\ref{global_loss_function_sampling_variance}), the expected divergence of the global loss function with the partial client participation for iteration $t$ and $t+1$ can be derived as displayed in Eq.~(\ref{difference}). Consequently, the theorem holds. \end{proof}

\section{Proof of Theorem \ref{theorem_dp_convergence}} \label{proof_theorem_dp_convergence}
\begin{proof}
According to the expected difference between the global loss function for any consecutive iterations under partial client participation scenarios in Eq.~(\ref{difference}), we have:
\begin{align}
&\mathbb{E}[F(\boldsymbol{w(t)})] \!-\! F(\boldsymbol{w(t \!-\! 1)}) \nonumber \\
\leq&  \!-\! \frac{\eta \kappa}{2} \|\nabla \! F(\boldsymbol{w(t \!-\! 1)})\|_{2}^{2} \!+\! \frac{d \beta \eta^{2}}{2} \sum_{i=1}^{N} \theta_{i}^{2} \! \sigma_{i}^{2}(t \!-\! 1) \nonumber \\
&+ \frac{\beta \eta}{2 K} \sum_{i=1}^{N} \frac{\theta_{i}^{2}}{x_{i}^{t}} (D^{2} \!+\! d \sigma_{i}^{2}(t \!-\! 1)) \!+\! \frac{M \beta \eta^{2} N}{2}  \label{consecutive_upper_bound} \\ 
\stackrel{(a)}{\leq}&  \!-\! \mu \eta \kappa (F(\boldsymbol{w(t \!-\! 1)}) \!-\! F(\boldsymbol{w^{*}})) \!+\! \frac{d \beta \eta^{2}}{2} \sum_{i=1}^{N} \theta_{i}^{2} \! \sigma_{i}^{2}(t \!-\! 1) \nonumber \\
&+ \frac{\beta \eta}{2 K} \sum_{i=1}^{N} \frac{\theta_{i}^{2}}{x_{i}^{t}} (D^{2} \!+\! d \sigma_{i}^{2}(t \!-\! 1)) \!+\! \frac{M \beta \eta^{2} N}{2}, \label{expectations_difference_partial}
\end{align}
where (a) follows Polyak-\L{}ojasiewicz inequality in Lemma~\ref{lemma_PL_inequality}. Then, taking expectations and subtracting the optimal global loss function $F(\boldsymbol{w^{*}})$ on both sides of Eq.~(\ref{expectations_difference_partial}), we have:
\begin{align} \label{convergence_rate_former}
&\mathbb{E}[F(\boldsymbol{w(t)})] \!-\! F(\boldsymbol{w^{*}}) \nonumber \\
&\leq (1 - \mu \eta \kappa) (\mathbb{E}[F(\boldsymbol{w(t-1)})] \!-\! F(\boldsymbol{w^{*}})) \!+\! \frac{M \beta \eta^{2} N}{2} \nonumber \\
&+ \frac{d \beta \eta^{2}}{2} \sum_{i=1}^{N} \theta_{i}^{2} \! \sigma_{i}^{2}(t \!-\! 1) \!+\! \frac{\beta \eta}{2 K} \sum_{i=1}^{N} \frac{\theta_{i}^{2}}{x_{i}^{t}} (D^{2} \!+\! d \sigma_{i}^{2}(t \!-\! 1)). 
\end{align}

Through iteratively aggregating Eq.~(\ref{convergence_rate_former}) with global iteration $t \!\in\! \{0, 1, \ldots, T\}$, we can obtain the upper bound of convergence rate as proposed in Eq.~(\ref{convergence_rate}). Then, taking expectation on both sides of Eq.~(\ref{consecutive_upper_bound}) and rearranging it, we have the following derivations regarding iterations $t$ and $t+1$:
\begin{align} \label{convergence_error_former}
&\!\!\!\!\! \text{ \small$\mathbb{E}[\|\nabla \! F(\boldsymbol{w(t)})\|_{2}^{2}] \leq \frac{2}{\eta \kappa}(\mathbb{E}[F(\boldsymbol{w(t)})] \!-\! \mathbb{E}[F(\boldsymbol{w(t \!+\! 1)})]) $} \nonumber \\
&\!\!\!\!\! + \text{\small $\frac{\beta \eta}{2 K} \sum_{i=1}^{N} \frac{\theta_{i}^{2}}{x_{i}^{t}} (D^{2} \!+\! d \sigma_{i}^{2}(t)) \!+\! \frac{d \beta \eta^{2}}{2} \sum_{i=1}^{N} \theta_{i}^{2} \! \sigma_{i}^{2}(t) \!+\! \frac{M \beta \eta^{2} N}{2}$}.
\end{align}

Averaging Eq.~(\ref{convergence_error_former}) under the global iteration $t \in \{0,1,\ldots, \\ T-1\}$, we obtain the convergence error of the FL model as shown in Eq.~(\ref{convergence_error}). Hence, the theorem holds. \end{proof}

\end{document}